\newtheorem{assumption}{Assumption}
\newtheorem{discussion}{Discussion}
\newcommand{\NH}[2][black]{\textcolor{#1}{#2}}
\numberwithin{theorem}{section}
\numberwithin{proposition}{section}
\numberwithin{corollary}{section}
\numberwithin{lemma}{section}
\numberwithin{assumption}{section}
\theoremstyle{plain}
\theoremstyle{definition}
\newcommand{\plainfirstpage}{
  \def\@oddhead{} 
  \def\@evenhead{} 
  \let\@mkboth\@gobbletwo
}
\begin{document}

\title{SlotFlow: Amortized Trans-Dimensional Inference with Slot-Based Normalizing Flows}

\author{\name Niklas Houba \email nhouba@phys.ethz.ch \\
\addr Institute for Particle Physics and Astrophysics \\
ETH Zurich\\
Zurich, Switzerland
\AND
\name Giovanni Giarda \email ggiarda@phys.ethz.ch \\
\addr Institute for Particle Physics and Astrophysics \\
ETH Zurich\\
Zurich, Switzerland
\AND
\name Lorenzo Speri \email lorenzo.speri@esa.int \\
\addr European Space Agency (ESA) \\ European Space Research and Technology
Centre (ESTEC) \\
Noordwijk, Netherlands
}

\editor{N/A}

\maketitle
\plainfirstpage\thispagestyle{plain} 


\begin{abstract}
Inferring the number of distinct components contributing to an observation, while simultaneously estimating their parameters, remains a long-standing challenge across signal processing, astrophysics, and neuroscience.
Classical trans-dimensional Bayesian methods such as Reversible Jump Markov Chain Monte Carlo (RJMCMC) provide asymptotically exact inference but 
can be computationally expensive. 
Instead,
modern deep learning approaches 
provide a faster alternative to inference but
typically assume fixed component counts, sidestepping the core challenge of trans-dimensionality.
To address this, we introduce \emph{SlotFlow}, a 
deep learning architecture for trans-dimensional amortized inference.
The architecture processes time-series observations, which we represent jointly in the frequency and time domains through parallel encoders.
A classifier produces a distribution over component counts $K$, and its MAP estimate specifies the number of slots instantiated, resulting in $O(K)$ computational scaling.
Each slot is parameterized by a shared conditional normalizing flow trained via permutation-invariant Hungarian matching.
On sinusoidal decomposition with up to 10 overlapping components and Gaussian noise,
\emph{SlotFlow} achieves 99.85\% cardinality accuracy and well-calibrated parameter posteriors, with systematic biases well below one posterior standard deviation. Direct comparison with RJMCMC demonstrates agreement for amplitude and the phase parameters: Wasserstein distances $W_2 < 0.01$ for amplitude and $< 0.03$ for phase show that shared global context effectively captures inter-component structure despite factorized posterior approximation.
Frequency posteriors remain correctly centered but exhibit 2--3$\times$ broader credible intervals than RJMCMC, reflecting an encoder information bottleneck in preserving long-baseline phase coherence -- a fundamental challenge that we analyze and discuss for future architectural improvements.
The model achieves $10^6\times$ speedup over RJMCMC in our example, indicating that the approach may be applicable to time-critical workflows in gravitational-wave astronomy, neural spike sorting, and object-centric vision.
\end{abstract}

\vspace{5pt}
\begin{keywords}
  trans-dimensional inference, variable cardinality, amortized Bayesian inference, normalizing flows, slot-based architectures, permutation invariance
\end{keywords}

\section{Introduction}
\label{sec:intro}


Inferring both the number of components contributing to an observation and their parameters is a fundamental challenge across signal processing, computer vision, and the experimental sciences. In future spaceborne gravitational-wave astronomy, researchers must detect and characterize an unknown number of overlapping binary coalescence signals \citep{couvares2021gravitationalwavedataanalysis}. In neural spike sorting, voltage traces must be decomposed into spike trains from an indeterminate number of concurrently firing neurons \citep{REY2015106}. In spectral analysis, mixed signals require separation into constituent sources whose count remains uncertain \textit{a priori}. 

Formally, these problems demand inference over a \emph{trans-dimensional} joint posterior
\begin{equation}
p(K, \Theta \mid x),
\end{equation}
where $x$ is observed data, $K \in \mathbb{N}$ represents the unknown component count, and $\Theta = \{\theta_k\}_{k=1}^K$ denotes component-specific parameters inhabiting a union of spaces with varying dimensionality: $\mathcal{S} = \bigcup_{K=0}^\infty \{K\} \times \mathbb{R}^{D \cdot K}$. This structure introduces profound challenges: variable dimensionality complicates model specification and posterior exploration; permutation invariance creates $K!$ symmetric posterior modes that trap inference algorithms; and inter-component correlations produce high multimodality beyond simple label-switching.

\vspace{5pt}
\textbf{Classical methods provide rigor at prohibitive cost.} 
\NH{Classical Bayesian approaches to model uncertainty can be unified under the framework of \emph{product space sampling} \citep{product_space_sampling}, with many existing algorithms shown to arise as special cases \citep{Godsill2001}.} Reversible Jump MCMC (RJMCMC) \citep{green1995reversible} enables dimension-changing transitions through auxiliary variables and Jacobian corrections, yielding asymptotically exact inference. Yet effective proposals require deep problem-specific insight, and even with expert tuning, convergence often demands hours to days. The \emph{Gravitational-Wave International Committee} 3G \emph{Data Analysis Computing Challenges} report \citep{couvares2021gravitationalwavedataanalysis} notes that ``the fastest analyses take days, and the longest take weeks'' -- latencies incompatible with next-generation detector requirements. Sequential Monte Carlo (SMC) samplers \citep{doucet2001sequential} offer population-based alternatives but suffer particle degeneracy in high dimensions, requiring thousands of particles and extensive hyperparameter tuning. Dirichlet Process Mixtures (DPM) \citep{rasmussen2000infinite} elegantly sidestep cardinality specification through infinite priors but still depend on slow MCMC or mean-field variational approximations that systematically underestimate uncertainty \citep{ranganath2016hierarchical}.

\vspace{5pt}
\textbf{Modern amortized methods trade generality for efficiency.} Variational autoencoders \citep{kingma2014auto} enable millisecond-scale inference by training recognition networks that map observations to posterior parameters in a single forward pass. Normalizing flows \citep{rezende2015variational} provide expressive posterior families with exact density evaluation. Yet extending these approaches to trans-dimensional settings introduces architectural challenges: networks must produce variable-length outputs, enforce permutation invariance, and quantify uncertainty over $K$ itself. Early attempts imposed fixed maximum capacities with masking \citep{burgess2019monet,eslami2018neural}, conflating model-order uncertainty with architectural choices and wasting computation on unused slots. Permutation-equivariant architectures \citep{zaheer2017deep,lee2019set} addressed symmetry but typically assumed known cardinality. Recent work by Davies et al. \citep{davies2025amortizedvariationaltransdimensionalinference} introduced \emph{CoSMIC} flows, achieving trans-dimensional inference through dimension saturation -- padding all models to maximum dimension $d_{\max}$ with context-dependent masking. While providing formal convergence guarantees, this incurs $O(K_{\max})$ computational cost regardless of true cardinality $K$.


\subsection{Our Contribution: \emph{SlotFlow}}
We introduce \emph{SlotFlow}, a dual-stream model combining parallel frequency- and time-domain encoders for amortized trans-dimensional inference, achieving $O(K)$ computational scaling and millisecond-latency posterior estimation with calibrated uncertainties. \NH{Frequency- and time-domain representations provide complementary information: for signals that are compact in frequency (e.g., overlapping sinusoids), spectral features highlight global structure useful for cardinality estimation, while for signals localized in time (e.g., coalescing binaries), the temporal representation offers clearer cues for separating components through their differing coalescence times. In both regimes, time-domain features contribute local dynamics that support accurate parameter recovery.}

The architecture operates in three stages. First, the \emph{dual-stream encoder} processes frequency- and time-domain inputs through parallel convolutional-attention pipelines, with a shared latent bottleneck enforcing cross-domain consistency. A classifier leverages frequency features to predict $q_{\phi}(K \mid x)$, exploiting spectral peaks as evidence for component multiplicity. Second, \emph{dynamic slot allocation} instantiates exactly $K$ slot contexts by combining global embeddings with learned orthogonal identifiers, achieving efficiency without wasted capacity while maintaining differentiability. Third, \emph{shared conditional flows} -- specifically coupling-based architectures with rational-quadratic spline transforms -- parameterize per-slot marginal posteriors, trained via permutation-invariant Hungarian matching.

On synthetic sinusoidal decomposition with up to 10 overlapping components under severe spectral crowding, \emph{SlotFlow} achieves 99.85\% cardinality accuracy. Despite factorized posterior approximation, direct comparison with RJMCMC shows excellent agreement for amplitude ($W_2 < 0.01$) and good agreement for phase ($W_2 < 0.03$), demonstrating that shared global context effectively captures inter-component structure for these parameters. Frequency posteriors, while centered correctly, remain broader than RJMCMC due to encoder compression of long-baseline coherent phase information. The model exhibits well-calibrated uncertainties with absolute calibration bias below 3\%, gracefully degrading under increased noise. By trading some posterior correlation fidelity and frequency precision for orders-of-magnitude speedup, \emph{SlotFlow} offers a practical path toward scalable trans-dimensional Bayesian inference.

\paragraph{Contributions.}
\begin{itemize}[leftmargin=*,topsep=2pt,itemsep=1pt]
    \item A slot-based amortized architecture predicting $q_{\phi}(K \mid x)$ and parameterizing per-slot marginals via shared conditional flows
    \item Permutation-invariant training through Hungarian-matched negative log-likelihood with phase and frequency weighting
    \item Theoretical analysis of symmetry, factorization, and sample complexity properties
    \item Demonstration of competitive accuracy for amplitude and phase parameters with orders-of-magnitude speedup over RJMCMC, with analysis of frequency estimation limitations due to encoder information bottlenecks
    \item Open-source implementation at \url{https://github.com/nhouba/slotflow-inference}
\end{itemize}


\subsection{Related Work}

Trans-dimensional inference lies at the intersection of classical Monte Carlo methods and modern amortization. Table~\ref{tab:comparison} positions \emph{SlotFlow} relative to representative approaches.

\vspace{5pt}
\textbf{Classical methods.} RJMCMC \citep{green1995reversible} and SMC samplers \citep{doucet2001sequential} provide asymptotic exactness but face poor scaling: dimension-changing proposals require expert tuning and problem-specific designs, while particle degeneracy in high-dimensional spaces necessitates thousands of samples and careful resampling strategies. \NH{Other trans-dimensional approaches include diffusive nested sampling \citep{BrewerTrDimDiffusiveNestedSampling}, continuous-time MCMC \citep{Astorino:2025ccl}, and product space sampling frameworks \citep{product_space_sampling}, all of which offer elegant formulations of model uncertainty but remain computationally demanding in practice.}

\vspace{5pt}
\textbf{Amortized inference.} Deep Sets \citep{zaheer2017deep} and Set Transformers \citep{lee2019set} enable permutation-invariant processing through attention mechanisms but assume fixed cardinality, requiring \textit{a priori} specification of component count. Set Flow \citep{rasul2019setflowpermutationinvariant} combines equivariant coupling layers with normalizing flows for generative modeling of sets with known size, achieving permutation equivariance through architectural constraints. MONet \citep{burgess2019monet} fixes maximum capacity with spatial masking, introducing systematic distortions when true cardinality differs from allocated slots. \emph{CoSMIC} \citep{davies2025amortizedvariationaltransdimensionalinference} achieves trans-dimensional inference through dimension saturation -- padding all models to maximum dimension with context-dependent masking -- providing formal convergence guarantees but at $O(K_{\max})$ computational cost regardless of true $K$.

\vspace{5pt}
\textbf{Simulation-based inference.} Neural Posterior Estimation (NPE) methods \citep{10.5555/3294771.3294894, papamakarios2019sequentialneurallikelihoodfast, pmlr-v130-lueckmann21a} train neural density estimators directly on simulator outputs, enabling likelihood-free inference when forward models are intractable. \NH{Sequential variants (SNPE) further refine posterior accuracy by iteratively adapting the proposal distribution. However, NPE approaches typically assume fixed dimensionality, requiring separate mechanisms for model selection -- such as computing marginal likelihoods or evidences -- rather than jointly inferring both the component count $K$ and parameters $\Theta$ within a unified framework.}

\NH{Recent work extends amortized inference to decision-making: 
\citet{lyu2025dynamicsbiroundfreesequential} develop neural surrogates 
that directly output optimal decisions under posterior uncertainty, 
bypassing explicit posterior sampling. While orthogonal to trans-dimensional 
inference, this demonstrates the broader potential of amortization beyond 
parameter estimation -- including model selection, experimental design, 
and sequential decision-making under model uncertainty.}

\begin{table}[t]
\scriptsize
\setlength{\tabcolsep}{4pt}
\renewcommand{\arraystretch}{0.92}
\centering
\caption{Comparison of \emph{SlotFlow} with representative baselines. Trans-dimensional refers to posterior inference over component count $K$. Equivariant means predictions permute with inputs; invariant means predictions are unchanged. Inter-slot dependencies indicate whether the model captures correlations between components.}
\begin{tabular*}{\textwidth}{@{\extracolsep{\fill}}lcc
    >{\centering\arraybackslash}p{2.2cm}
    >{\centering\arraybackslash}p{2.0cm}
    c@{}}
\toprule
Method & Amortized & \makecell{Trans-\\Dimensional} & Equivariant & \makecell{Inter-Slot\\Dependencies} & Scalability \\
\midrule
RJMCMC \citep{green1995reversible} 
    & No 
    & Yes 
    & Inherits symmetry
    & Yes 
    & Poor \\
SMC \citep{doucet2001sequential} 
    & No 
    & Yes 
    & Inherits symmetry
    & Yes 
    & Poor \\
Set Flow \citep{rasul2019setflowpermutationinvariant} 
    & Yes 
    & No$^\dagger$
    & Yes 
    & Coupling-based 
    & $O(K)$ \\
MONet \citep{burgess2019monet} 
    & Yes 
    & No 
    & No 
    & Partial 
    & $O(K_{\max})$ \\
\emph{CoSMIC} \citep{davies2025amortizedvariationaltransdimensionalinference}
    & Yes
    & Yes
    & Via masking
    & Coupling-based
    & $O(K_{\max})$ \\
\textbf{\emph{SlotFlow}} 
    & \textbf{Yes} 
    & \textbf{Yes} 
    & \textbf{Yes} 
    & \textbf{Context-shared$^\ddagger$} 
    & $\mathbf{O(K)}$ \\
\bottomrule
\end{tabular*}
\vspace{-4pt}
\begin{flushleft}
{\scriptsize 
$\dagger$ Set Flow supports variable set sizes in generation but not trans-dimensional posterior inference. \\ 
$\ddagger$ Slots condition on shared global context via attention; posterior factorizes across slots. \emph{CoSMIC} maintains coupling through masked flows.
}
\end{flushleft}
\label{tab:comparison}
\end{table}


\paragraph{Paper Organization.} 
Section~\ref{sec:problem} formalizes the trans-dimensional inference problem and introduces sinusoidal mixtures as a canonical test case. Section~\ref{sec:architecture} presents the \emph{SlotFlow} architecture, including dual-stream encoding, dynamic slot allocation, and conditional flow parameterization. Section~\ref{sec:theory} develops theoretical properties of permutation invariance, sample complexity, and posterior factorization, with complete proofs in Appendix~\ref{app:proofs}. Section~\ref{sec:experiments} validates the approach through systematic experiments on cardinality accuracy, parameter calibration, posterior quality, and computational cost. Section~\ref{sec:conclusion} concludes with current limitations, perspectives on extensions to time-frequency representations and multi-class mixtures.


\section{Problem Formulation}
\label{sec:problem}

Trans-dimensional Bayesian inference addresses the fundamental challenge of reasoning about models where both the number of components $K$ and their parameters $\Theta = \{\theta_k\}_{k=1}^K$ are uncertain. This setting pervades scientific applications: gravitational-wave astronomy demands resolving overlapping 
sources like in observations from future detectors such as the Laser Interferometer Space Antenna (LISA) mission \citep{colpi2024lisadefinitionstudyreport}, the Einstein Telescope \citep{ET:2025xjr}, and Cosmic Explorer \citep{Reitze:2019iox}, 
neural spike sorting requires decomposing voltage traces into unknown numbers of neuron waveforms \citep{REY2015106}, and spectral analysis seeks separation of mixed signals into constituent sources.

\paragraph{Bayesian Formulation.}
The generative model factorizes as
\begin{equation}
p(x,\Theta,K,\sigma) = p(x \mid K,\Theta,\sigma)\,p(\Theta\mid K)\,p(K)\,p(\sigma),
\end{equation}
\NH{where $x \in \mathcal{X}$ denotes the observed data, with $\mathcal{X}$ the space of time-series observations}, $K \in \mathbb{N}$ the component count, $\Theta$ the parameter set, and $\sigma$ the noise scale. Assuming conditionally independent component priors given $K$, we have $p(\Theta \mid K) = \prod_{k=1}^K p(\theta_k)$ \NH{with $\theta_k \in \mathcal{X}_\theta$, where $\mathcal{X}_\theta$ denotes the component-parameter space.} The joint posterior
\begin{equation}
p(K,\Theta,\sigma \mid x) = \frac{p(x \mid K,\Theta,\sigma)\,p(\Theta\mid K)\,p(K)\,p(\sigma)}{p(x)},
\label{eq:joint_posterior}
\end{equation}
requires marginalizing over the trans-dimensional parameter space $\mathcal{S} = \bigcup_{K\geq 0} \{K\}\times \mathcal{X}_\theta^K$, where the evidence is given by
\begin{equation}
p(x) = \sum_{K=0}^\infty \int_{\mathcal{X}_\theta^K}\!\int_{\mathbb{R}_+} p(x \mid K,\Theta,\sigma)\,p(\Theta\mid K)\,p(K)\,p(\sigma)\,d\Theta\,d\sigma \, .
\end{equation}

\paragraph{Canonical Test Case: Sinusoidal Mixtures.}
We adopt sinusoidal decomposition as our primary benchmark, motivated by its prevalence in signal processing and gravitational-wave astronomy. A time series $x(t)$ on $t\in[0,T]$ is modeled as
\begin{equation}
x(t) = \sum_{k=1}^K A_k \cos(2\pi f_k t + \phi_k) + \epsilon(t), \quad \epsilon(t)\sim \mathcal{N}(0,\sigma^2),
\label{eq:sinusoid_model}
\end{equation}
where amplitudes $A_k > 0$, frequencies $f_k \in [f_{\min}, f_{\max}]$, and phases $\phi_k \in [0, 2\pi)$ parameterize components. This model exemplifies trans-dimensional challenges: (1) unknown $K$, (2) $K!$ permutation symmetry inducing degenerate posterior modes, and (3) strong inter-component correlations for nearby frequencies where spectral overlap prevents independent parameter recovery. The structure generalizes directly to Gaussian mixture models (with location/scale replacing frequency/phase), object detection (positions/scales/orientations), and gravitational-wave analysis (masses/spins/sky locations) -- all sharing additive superposition with unknown cardinality observed through noise.

\vspace{5pt}
\textbf{Gravitational-Wave Motivation.}
The LISA detector will observe tens of thousands of overlapping galactic binaries simultaneously \citep{PhysRevD.101.123021,Katz:2024oqg,Littenberg:2023xpl,Deng:2025wgk}. 
\NH{Existing global-fit pipelines return posterior samples for every source and noise parameter -- typically $\sim10^4$ samples for $\sim10$ parameters across $\sim10^4$ sources -- amounting to $\mathcal{O}(10^9)$ stored values, which becomes cumbersome for catalog-level inference. For example, \texttt{Erebor} \citep{Katz:2024oqg} required roughly seven days on four NVIDIA A100 GPUs and a dozen CPUs to analyze the LISA Data Challenge 2A \citep{LDC_Sangria_2020}.
Although the global fit itself is not latency-critical, fast amortized inference is valuable for \emph{post–global-fit analysis}: exploring large catalogs interactively, rapidly regenerating posterior samples under alternative priors or noise assumptions, and updating localized subsets of sources (e.g., candidate MBHBs) as new data arrive. These tasks benefit from rapid posterior evaluation even if the global fit does not. Amortized trans-dimensional inference thus complements the global-fit pipelines by providing compact posterior representations and enabling responsive, iterative scientific workflows.}

\paragraph{Core Inferential Challenges.}
Four interrelated difficulties distinguish trans-dimensional inference from standard fixed-dimensional problems:

\vspace{5pt}
\textit{Dimension matching.} The parameter space $\mathcal{S}$ is a disjoint union of manifolds with varying dimension. Standard MCMC cannot traverse between subspaces without specialized dimension-matching transformations (RJMCMC's auxiliary variables and Jacobian corrections \citep{green1995reversible}), which require expert tuning and often exhibit poor acceptance rates. \NH{In nested-model settings, the mapping between states may reduce to the linear addition or removal of parameters, in which case the Jacobian simplifies to the ratio of prior volumes \citep{geophys_rjmcmc}. However, this does not resolve the core challenge: designing effective, problem-specific proposal distributions for dimension-changing moves remains difficult and typically demands substantial manual tuning.}

\vspace{5pt}
\textit{Combinatorial multimodality.} Permutation symmetry creates $K!$ equivalent posterior modes. For moderate $K$, this combinatorial explosion produces complex landscapes that trap samplers or cause variational methods to collapse, exacerbating the label-switching problem where posterior summaries become meaningless without post-hoc alignment \citep{Jasra2005,Johnson:2025oyu}.

\vspace{5pt}
\textit{Inter-component correlations.} Components are statistically entangled through the likelihood. For sinusoids, nearby frequencies $f_i \approx f_j$ induce strong posterior dependencies: the model reallocates spectral energy between overlapping components, making factorized variational approximations systematically underestimate uncertainty \citep{Blei03042017}. Capturing these dependencies requires expressive posterior families that scale efficiently with $K$ while respecting permutation symmetry (e.g., exchangeable normalizing flows, set-based variational distributions, or symmetric energy-based models).

\vspace{5pt}
\textit{Information bottlenecks in amortized inference.} Neural encoders must compress high-dimensional observations into fixed-size latent representations for computational tractability. For sinusoidal decomposition, precise frequency estimation requires preserving long-baseline phase coherence: Fisher information for frequency scales as $T^2$, demanding access to global phase evolution $\partial\phi/\partial f$ across the full observation window. Convolutional architectures with stride-based downsampling can inadvertently discard fine-grained spectral structure during compression, creating an information bottleneck where the latent representation loses sub-bin phase curvature critical for Fisher-optimal frequency localization. This encoder-decoder trade-off between computational efficiency and information preservation represents a fundamental challenge distinct from flow expressiveness -- even with arbitrarily powerful decoders, posteriors cannot recover information discarded during encoding. Balancing these competing demands while maintaining $O(K)$ scalability motivates careful architectural choices in dual-stream processing and context representation.

\vspace{5pt}
These challenges motivate specialized architectures capable of navigating trans-dimensional posteriors efficiently while managing information preservation constraints. We now present \emph{SlotFlow}, which addresses these difficulties through dual-stream encoding, dynamic slot allocation, and permutation-invariant conditional flows, while analyzing the trade-offs between computational efficiency and frequency precision.

\section{\emph{SlotFlow} Architecture}
\label{sec:architecture}

\emph{SlotFlow} provides end-to-end amortized trans-dimensional inference through three core principles:  
(1) \emph{dual-stream processing} that extracts complementary frequency- and time-domain representations,  
(2) \emph{dynamic slot allocation} that instantiates a number of posterior components determined by the MAP estimate of $K$, yielding $O(K)$ computational scaling, and (3) \emph{factorized marginals with shared context} that capture global structure while maintaining permutation invariance.

Figure~\ref{fig:architecture} illustrates the full pipeline. Given an observation $x \in \mathbb{R}^T$, \emph{SlotFlow} proceeds in four stages:  
(i) \emph{dual-stream encoding} via parallel convolutional–attention pathways operating in frequency and time domains;  
(ii) \emph{cardinality estimation} inferring $q_\phi(K \mid x)$, leveraging spectral structure for cardinality cues;  
(iii) \emph{slot context generation} constructing $K$ conditioning vectors by combining global embeddings with orthogonal slot identifiers; and  
(iv) \emph{conditional flow inference} producing per-slot marginals $q_\phi(\theta_k \mid x, k)$ using a shared rational–quadratic spline flow trained with a Hungarian-matched permutation-invariant negative log-likelihood loss.

\begin{figure}[h!]
\centering
\begin{tikzpicture}[
    scale=1.14,
    node distance=0.9cm and 2.1cm,
    freqbox/.style={rectangle, draw=blue!55!black, fill=blue!7!white,
        line width=0.9pt, rounded corners=2pt,
        text width=2.2cm, align=center, minimum height=0.9cm,
        font=\footnotesize\sffamily},
    timebox/.style={rectangle, draw=orange!65!black, fill=orange!10!white,
        line width=0.9pt, rounded corners=2pt,
        text width=2.2cm, align=center, minimum height=0.9cm,
        font=\footnotesize\sffamily},
    fusionbox/.style={rectangle, draw=purple!60!black, fill=purple!8!white,
        line width=1pt, rounded corners=3pt,
        text width=2.8cm, align=center, minimum height=1cm,
        font=\footnotesize\sffamily},
    poolbox/.style={rectangle, draw=gray!55, fill=white,
        line width=0.8pt, rounded corners=2pt,
        text width=1.9cm, align=center, minimum height=0.75cm,
        font=\footnotesize\sffamily},
    inputbox/.style={rectangle, draw=gray!70!black, fill=gray!5,
        line width=1pt, rounded corners=3pt,
        text width=2.2cm, align=center, minimum height=0.9cm,
        font=\footnotesize\sffamily\bfseries},
    outputbox/.style={ellipse, draw=green!60!black, fill=green!6!white,
        line width=1pt, text width=2cm, align=center, minimum height=0.9cm,
        font=\footnotesize\sffamily\bfseries},
    arrow/.style={->, >=stealth, line width=0.8pt, draw=gray!70},
    control/.style={->, >=stealth, line width=1pt, draw=red!70!black, dashed},
    stage/.style={font=\footnotesize\bfseries\sffamily, text=gray!70}
]

\fill[gray!5] (-7.5,1.15) rectangle (3.9,-13.9);
\node[inputbox] (input) at (-0.9,0.5) {Input Signal\\$x \in \mathbb{R}^T$};

\node[freqbox] (fft) at (-3.8,-1.5) {FFT\\$\mathbf{x}_{\text{freq}}$};
\node[timebox] (raw) at (1.8,-1.5) {Raw Signal\\$x$};
\node[freqbox] (enc_freq) at (-3.8,-2.9) {Conv Encoder\\$\mathcal{E}_{\text{freq}}$};
\node[timebox] (enc_time) at (1.8,-2.9) {Conv Encoder\\$\mathcal{E}_{\text{time}}$};
\node[freqbox] (attn_freq) at (-3.8,-4.3) {PE + MHA\\$\tilde{\mathbf{h}}_{\text{freq}}$};
\node[timebox] (attn_time) at (1.8,-4.3) {PE + MHA\\$\tilde{\mathbf{h}}_{\text{time}}$};

\node[poolbox] (pool_cls) at (-5.3,-5.9) {Pool\\$\mathbf{z}_{\text{cls}}$};
\node[poolbox] (pool_freq) at (-2.3,-5.9) {Pool\\$\mathbf{z}_{\text{freq}}$};
\node[poolbox] (pool_time) at (1.8,-5.9) {Pool\\$\mathbf{z}_{\text{time}}$};
\node[freqbox] (cls_mha) at (-5.3,-7.4) {Global MHA\\$\mathbf{h}_{\text{cls}}$};
\node[freqbox] (classifier) at (-5.3,-8.9) {Classifier\\$q_\phi(K|x)$};

\node[fusionbox] (fusion) at (-0.2,-7.9) {Concat + MLP\\Global Context $\mathbf{g}$};
\node[fusionbox] (slots) at (-0.2,-9.7) {Slot Contexts\\$\mathbf{c}_k = [\mathbf{g}, \mathbf{s}_k]$\\$k=1,\ldots,\hat{K}$};

\node[fusionbox] (flow) at (-0.2,-11.4) {Conditional Flow\\$\theta_k = f_\phi^{-1}(z; \mathbf{c}_k)$};
\node[outputbox] (k_out) at (-5.3,-10.7) {$\hat{K}$\\Cardinality};
\node[outputbox] (theta_out) at (-0.2,-13.0) {$\{q_\phi(\theta_k|x,k)\}_{k=1}^{\hat{K}}$\\Posteriors};

\draw[arrow] (input) -- ++(0,-0.9) -| node[pos=0.2, above, font=\scriptsize\sffamily] {} (fft);
\draw[arrow] (input) -- ++(0,-0.9) -| node[pos=0.2, above, font=\scriptsize\sffamily] {} (raw);
\foreach \a/\b in {fft/enc_freq, enc_freq/attn_freq, raw/enc_time, enc_time/attn_time}
    \draw[arrow] (\a) -- (\b);

\draw[arrow] (attn_freq) -- ++(0,-0.75) -| (pool_cls);
\draw[arrow] (attn_freq) -- ++(0,-0.75) -| (pool_freq);
\draw[arrow] (attn_time) -- ++(0,-0.6) -| (pool_time);
\draw[arrow] (pool_cls) -- (cls_mha);
\draw[arrow] (cls_mha) -- (classifier);
\draw[arrow] (classifier) -- (k_out);

\draw[arrow] (pool_freq) |- (fusion);
\draw[arrow] (pool_time) |- (fusion);
\draw[arrow] (fusion) -- (slots);
\draw[arrow] (slots) -- (flow);
\draw[arrow] (flow) -- (theta_out);

\draw[control] (k_out.east) to[out=0, in=180] node[pos=0.5, above, font=\small\bfseries\sffamily, red!70!black] {$\hat{K}\,\,$} (slots.west);

\node[stage, rotate=90, anchor=south, align=center] at (-5.1, -2.9) {Stage I:\\Frequency Encoding};
\node[stage, rotate=90, anchor=south, align=center] at (3.5, -2.9) {Time Encoding};
\node[stage, rotate=90, anchor=south, align=center] at (-6.5, -8.1) {Stage II:\\Counting};
\node[stage, rotate=90, anchor=south, align=center] at (2.45, -9.7) {Stage III:\\Dynamic \\ Slot  Allocation};
\node[stage, rotate=90, anchor=south, align=center] at (2.3, -11.8) {Stage IV:\\Inference};

\end{tikzpicture}
\caption{\textbf{\emph{SlotFlow} Architecture.} The model processes input signals through four stages: (I) \emph{Dual-stream encoding} (FFT/time) via convolutional encoders and multi-head attention; (II) \emph{Cardinality estimation} via pooled global features; (III) \emph{Slot context generation} fusing global and slot-specific embeddings; and (IV) \emph{Conditional flow inference} producing per-component posteriors. The red control arrow denotes how the predicted $\hat{K}$ dynamically sets the number of slot contexts.}
\label{fig:architecture}
\end{figure}
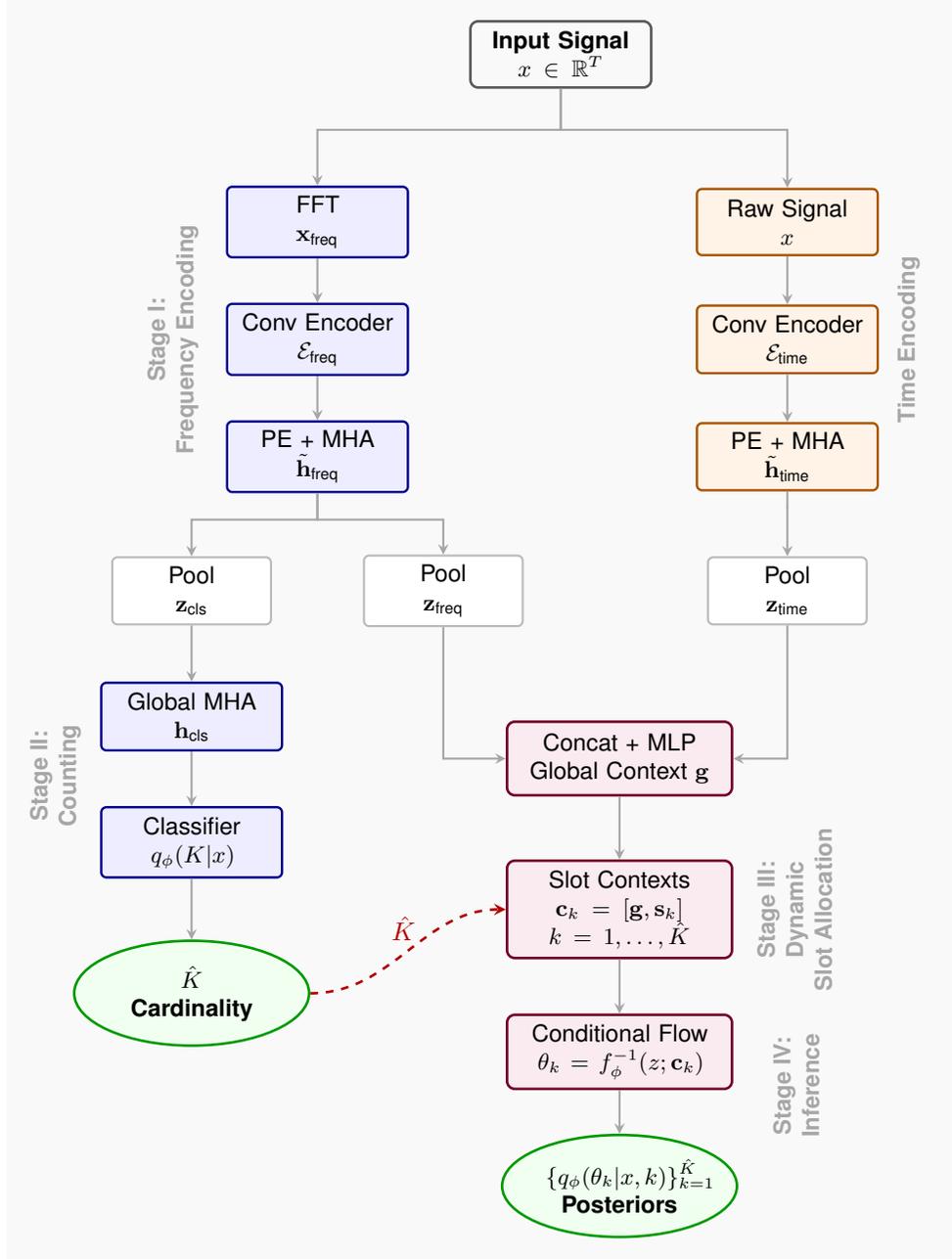

\subsection{Dual-Stream Feature Extraction}
\label{sec:dual_stream}

Separate frequency- and time-domain processing pathways capture complementary signal aspects: frequency analysis reveals harmonic structure and spectral peaks provide cardinality, while time-domain processing preserves transient phenomena and amplitude modulations obscured by spectral windowing.

\paragraph{Frequency Pathway.} 
For input $x \in \mathbb{R}^T$, we compute the discrete Fourier transform $X[f] = \text{FFT}(x) \in \mathbb{C}^{T/2+1}$ with orthonormal normalization, representing the complex spectrum as two-channel real-valued tensor $\mathbf{x}_{\text{freq}} = [\text{Re}(X), \text{Im}(X)] \in \mathbb{R}^{2 \times (T/2+1)}$. A three-layer convolutional encoder $\mathcal{E}_{\text{freq}}$ with GELU activations progressively expands channels (2→32→64→128). The first two layers use kernel size 8 with stride 2; the third layer uses kernel size 5 with stride 1, preserving approximately twice the spatial resolution for improved spectral detail, yielding $\mathbf{h}_{\text{freq}}$ with $L_f \approx \lceil T / 2^3 \rceil$.

\paragraph{Time Pathway.} 
In parallel, raw signal $x$ is processed by structurally identical encoder $\mathcal{E}_{\text{time}}$, producing $\mathbf{h}_{\text{time}} \in \mathbb{R}^{L_t \times 128}$ with $L_t = \lceil T / 2^3 \rceil$.

\paragraph{Positional Encoding (PE) and Multi-Head Attention (MHA).}
Sinusoidal positional encodings \citep{vaswani2017attention} augment both feature sequences:
\begin{equation}
\text{PE}(p, 2i) = \sin\left(\frac{p}{\lambda^{2i/d}}\right), \quad \text{PE}(p, 2i+1) = \cos\left(\frac{p}{\lambda^{2i/d}}\right),
\end{equation}
where $d=128$, $\lambda=10000$, where $p$ indexes the sequence position and $i$ indexes the embedding dimension. This deterministic encoding provides relative positional information about the sequence index, enabling the attention mechanism to distinguish the order and spacing of elements. The embedded wavelengths span $[2\pi,\,2\pi\lambda]$ across dimensions. Modality-specific 4-head self-attention refines features:
\begin{align}
\tilde{\mathbf{h}}_{\text{freq}} &= \text{MHA}_{\text{freq}}(\mathbf{h}_{\text{freq}} + \text{PE}), \quad
\tilde{\mathbf{h}}_{\text{time}} = \text{MHA}_{\text{time}}(\mathbf{h}_{\text{time}} + \text{PE}),
\end{align}
enabling domain-specific inductive biases: frequency attention captures harmonic relationships, time attention captures temporal dependencies.

\paragraph{Modality-Specific Pooling.}
Hybrid pooling combines mean, max, and attention-weighted aggregation:
\begin{equation}
\text{Pool}(\mathbf{h}) = [\text{mean}(\mathbf{h}), \max(\mathbf{h}), \text{attn}(\mathbf{h})] \in \mathbb{R}^{3 \times 128},
\end{equation}
\NH{where $\text{attn}(\mathbf{h}) = \sum_i \alpha_i \mathbf{h}_i$ with learnable weights 
$\alpha_i = \exp(s_i)/\sum_j \exp(s_j)$ and 
$s_i = \mathbf{v}^\top \tanh(\mathbf{W}\mathbf{h}_i)$, 
where $\mathbf{W} \in \mathbb{R}^{d_a \times 128}$ and 
$\mathbf{v} \in \mathbb{R}^{d_a}$ are learnable attention parameters.}

\paragraph{Design Justification.}
Ablation studies (Section~\ref{sec:experiments}, Appendix~\ref{app:theory_validation}) demonstrate the importance of the dual-stream design: frequency-only encoders achieve near-optimal amplitude and cardinality but underperform 60\% on phase estimation; time-only encoders fail at frequency inference. This reflects fundamental representational limits: FFT obscures temporal phase coherence, while finite convolutional receptive fields cannot implicitly learn spectral decomposition at required precision. Independent processing followed by late fusion allows each pathway to specialize, with frequency encoders capturing spectral peaks for $K$ and frequency inference, and time encoders preserving phase dynamics.

\subsection{Cardinality Estimation}
\label{sec:cardinality}

Inferring $q_\phi(K \mid x)$ is formulated as classification using primarily frequency features, motivated by spectral analysis: for well-separated sinusoids, spectral peaks above noise provide strong cardinality signals.

\paragraph{Classifier Architecture.}
Frequency features undergo additional global self-attention $\mathbf{h}_{\text{cls}} = \text{MHA}_{\text{cls}}(\tilde{\mathbf{h}}_{\text{freq}})$ capturing long-range frequency dependencies. Pooling and projection yield classification embedding:
\begin{align}
\mathbf{z}_{\text{cls}} = \text{Pool}(\mathbf{h}_{\text{cls}}) \in \mathbb{R}^{384}, \quad
\mathbf{e}_{\text{cls}} = \text{MLP}_{\text{cls}}(\mathbf{z}_{\text{cls}}) \in \mathbb{R}^{256}.
\end{align}
Linear classifier produces the posterior distribution over cardinalities $K \in \{1, \ldots, K_{\max}\}$:
\begin{equation}
q_\phi(K \mid x) = \text{Softmax}(\mathbf{W}_K \mathbf{e}_{\text{cls}} + \mathbf{b}_K + \log p(K)),
\label{eq:k_posterior}
\end{equation}
where prior $p(K)$ incorporates domain knowledge and can be specified at inference time independently of training distribution. The softmax output provides a full probability distribution over all cardinalities, quantifying model uncertainty about the number of components. For downstream processing, we use the maximum a posteriori estimate $\hat{K} = \arg\max_K q_\phi(K \mid x)$, but the complete distribution $q_\phi(K \mid x)$ remains available for applications requiring uncertainty quantification over model order (e.g., model averaging, risk-sensitive decision-making).

\subsection{Slot Context Generation}
\label{sec:slots}

Given $\hat{K}$, we construct $\hat{K}$ distinct conditioning vectors satisfying: (1) differentiability for end-to-end training, (2) permutation equivariance, and (3) $O(K)$ efficiency.

\paragraph{Global Context.}
Fusing pooled frequency-time features produces a shared global embedding:
\begin{align}
\mathbf{z}_{\text{freq}} &= \text{Pool}(\tilde{\mathbf{h}}_{\text{freq}}) \in \mathbb{R}^{384}, \quad
\mathbf{z}_{\text{time}} = \text{Pool}(\tilde{\mathbf{h}}_{\text{time}}) \in \mathbb{R}^{384}, \\
\mathbf{g} &= \text{MLP}_{\text{flow}}([\mathbf{z}_{\text{freq}}, \mathbf{z}_{\text{time}}]) \in \mathbb{R}^{256},
\end{align}
encoding signal energy, noise level, and spectral envelope shared across slots.

\paragraph{Slot Identifiers.}
To enable slot-specific specialization while preserving permutation symmetry, each slot is equipped with a identifier represented as a one-hot vector:
\begin{equation}
\mathbf{s}_k = \mathbf{e}_k \in \mathbb{R}^{K_{\max}}, \quad k = 1, \ldots, \hat{K},
\end{equation}
where $\mathbf{e}_k$ denotes the $k$-th standard basis vector. These identifiers distinguish otherwise identical global contexts, allowing the flow to learn slot-dependent transformations and posteriors. Crucially, they do not introduce a fixed ordering: the permutation-invariant training objective (Section~\ref{sec:training}) marginalizes uniformly over all slot-to-component assignments via Hungarian matching, ensuring statistical equivalence under any permutation. In practice, the identifiers scale linearly with $K_{\max}$ and typically account for only a few percent (e.g., $\sim$4\% for $K_{\max}=10$ and 256-dimensional global context) of the total conditioning vector, providing sufficient differentiation without biasing the learned representations.

\paragraph{Slot Context Concatenation.}
The complete context for slot $k$ concatenates its global and slot-specific components:
\begin{equation}
\mathbf{c}_k = [\mathbf{g}, \mathbf{s}_k] \in \mathbb{R}^{256 + K_{\max}},
\label{eq:slot_context}
\end{equation}
where $\mathbf{g}$ denotes the global context shared across all slots. For a batch of $B$ samples with variable slot counts $\hat{K}_b$, we construct the full context tensor by concatenation:
\begin{equation}
\mathbf{C} = \bigoplus_{b=1}^B \bigoplus_{k=1}^{\hat{K}_b} \mathbf{c}_k^{(b)} \in \mathbb{R}^{\left(\sum_b \hat{K}_b\right) \times (256 + K_{\max})},
\end{equation}
and maintain an auxiliary index tensor $\mathbf{I} \in \mathbb{N}^{\sum_b \hat{K}_b}$ to associate each slot context with its parent sample. This structure ensures consistent, permutation-invariant loss evaluation across variable-dimensional batches.

\subsection{Conditional Normalizing Flow}
\label{sec:flow}

Shared conditional flows \citep{papamakarios2021normalizing} map simple base distributions to complex per-slot posteriors. For slot $k$:
\begin{equation}
q_\phi(\theta_k \mid x, k) = p_Z(f_\phi^{-1}(\theta_k; \mathbf{c}_k)) \left| \det \frac{\partial f_\phi^{-1}}{\partial \theta_k} \right|,
\label{eq:flow_density}
\end{equation}
where $f_\phi: \mathbb{R}^d \to \mathbb{R}^d$ is a bijection conditioned on $\mathbf{c}_k$ and $p_Z = \mathcal{N}(0, I)$.

\paragraph{Architecture.}
Composite transformation with $L=8$ coupling layers alternating rational-quadratic spline \citep{durkan2019neural} and affine transforms:
\begin{equation}
f_\phi = f_L \circ \cdots \circ f_1,
\end{equation}
with random permutations $\pi_\ell$ ensuring dimension interaction. Even layers use masked piecewise RQ transforms with 48 bins, hidden dimension 768, and linear tails (bound $\pm2$). Odd layers use affine transforms $\mathbf{z} \odot \exp(\mathbf{s}(\mathbf{z}_{1:d/2}, \mathbf{c}_k)) + \mathbf{t}(\mathbf{z}_{1:d/2}, \mathbf{c}_k)$. This alternation stabilizes training: pure spline flows exhibit instability near bin boundaries due to unbounded derivatives.

\paragraph{Parameter Representation.}
Sinusoidal components parameterized by $\theta_k = [a_k, \cos \phi_k, \sin \phi_k, f_k] \in \mathbb{R}^4$ remove periodicity artifacts and enable unbounded flow operation. Phase on unit circle avoids discontinuities; positivity and range constraints enforced through post-processing.

\NH{\paragraph{Posterior Factorization.}
A central design choice in \emph{SlotFlow} is to approximate the posterior with a factorized form:
\begin{equation}
q_\phi(\Theta \mid x, K) = \prod_{k=1}^K q_{\phi,k}(\theta_k \mid x, K).
\label{eq:factorized_posterior}
\end{equation}
This is justified by the use of a shared global context vector $\mathbf{g}$, which induces dependencies across components such that the marginals are conditionally independent given $\mathbf{g}$:
\begin{equation}
q_\phi(\theta_i, \theta_j \mid x)
= 
\int 
q_\phi(\theta_i \mid \mathbf{g}, i)\,
q_\phi(\theta_j \mid \mathbf{g}, j)\,
q_\phi(\mathbf{g} \mid x)\,
d\mathbf{g}.
\end{equation}
}
\NH{The global context vector encodes broad spectral structure, noise characteristics, and total signal energy, thereby inducing meaningful correlations across components: elevated noise levels broaden all slot posteriors, whereas prominent spectral peaks steer multiple slots toward corresponding regions. This yields a tractable approximation to the true posterior $p(\Theta \mid x, K)$, which can contain finer-grained dependencies (e.g., frequency correlations on the $\sim 1/T$ scale) that may not be fully captured by global context alone. 
Nevertheless, the factorized formulation offers several advantages: (1) $O(K)$ computational scaling in place of the $O(K^d)$ complexity of fully joint models, (2) inherent permutation invariance through exchangeability, and (3) seamless handling of variable $K$. }

\subsection{Training Objective}
\label{sec:training}

Training combines cardinality supervision, per-slot posterior accuracy, permutation invariance, and optional noise calibration:
\begin{equation}
\mathcal{L}(\phi) = \mathcal{L}_{\text{CE}}(\phi) + \lambda_{\text{flow}} \mathcal{L}_{\text{flow}}(\phi) + \lambda_{\text{noise}} \mathcal{L}_{\text{noise}}(\phi).
\label{eq:total_loss}
\end{equation}

\paragraph{Cardinality Loss.}
Standard cross-entropy: $\mathcal{L}_{\text{CE}}(\phi) = -\log q_\phi(K_{\text{true}} \mid x)$. During training, teacher forcing uses true cardinality for slot allocation ($\hat{K} = K_{\text{true}}$). Alternative strategies include curriculum learning (transitioning from true to predicted $K$) or Hungarian-based losses handling dimension mismatches.

\paragraph{Flow Loss with Hungarian Matching.}
Resolving label permutation requires computing cost matrix $C_{ij} = -\log q_\phi(\theta_j^* \mid x, i)$ between predicted slots and ground-truth components $\Theta^* = \{\theta_k^*\}_{k=1}^{K_{\text{true}}}$. Optimal assignment via Hungarian algorithm \citep{Kuhn1955}:
\begin{equation}
\pi^* = \arg\min_{\pi \in S_{K_{\text{true}}}} \sum_{k=1}^{K_{\text{true}}} C_{k,\pi(k)}, \quad
\mathcal{L}_{\text{flow}}(\phi) = -\frac{1}{K_{\text{true}}} \sum_{k=1}^{K_{\text{true}}} \log q_\phi(\theta_{\pi^*(k)}^* \mid x, k).
\label{eq:flow_loss}
\end{equation}
This ensures exact permutation invariance by construction.

\paragraph{Frequency Centering.}
To improve flow learning, we center frequencies at the midpoint 
of the training range $[2.5, 3.0]$ Hz before computing the matching cost: $\tilde{f}_k = f_k - 2.7$. This residual representation maps the frequency range to 
approximately [-0.25, 0.25] Hz, normalizing the distribution 
and simplifying the flow's task.

\paragraph{Phase and Frequency Weighting.}
We then apply multiplicative weights to both phase and frequency dimensions before Hungarian matching:
\begin{equation}
\tilde{\theta}_k^* = [a_k^*,\, w_\phi \cos\phi_k^*,\, w_\phi \sin\phi_k^*,\, w_f \tilde{f}_k^*].
\end{equation}
where $w_\phi = 2$ emphasizes phase sensitivity and $w_f = 3$ provides  moderate frequency emphasis.
The phase weighting factor rescales each phase vector from the unit circle to radius $w_\phi$, so the flow
likelihood -- which depends on squared distances in latent space -- amplifies angular
discrepancies by $w_\phi^2$, producing a sharper likelihood landscape. Crucially, this
scaling does not alter the recovered physical phase, as
$\phi=\textrm{atan2}(\sin\phi,\cos\phi)$ depends only on direction, causing the radial
factor to cancel exactly at inference; no inverse transformation is needed. Motivated by
this, we also experimented with an analogous weighting of frequencies by multiplying all
frequency coordinates in the Hungarian cost by a factor $w_f$, with a sweep indicating an
optimal value of $w_f \approx 3$ for stable component assignment. Unlike the phase case,
however, frequency weighting directly rescales the frequency axis and does \emph{not}
cancel during inference; the scaling must be undone.

\paragraph{Noise Supervision.}
Optional auxiliary loss $\mathcal{L}_{\text{noise}}(\phi) = \frac{1}{B} \sum_{b=1}^B (\hat{\sigma}_b - \sigma_b)^2$ supervises noise encoder (mirroring frequency pathway architecture) when ground-truth $\{\sigma_b\}$ available, calibrating uncertainty estimates for varying noise levels.

\subsection{Inference and Implementation}
\label{sec:inference}

\paragraph{Inference Protocol.}
Given an observation $x$, inference proceeds as follows:  
(1) predict the cardinality distribution $q_\phi(K \mid x)$ and select a working estimate $\hat{K} = \arg\max_K q_\phi(K \mid x)$,  
(2) construct conditioning vectors $\{\mathbf{c}_k\}_{k=1}^{\hat{K}}$, 
(3) sample component parameters via $\theta_k^{(s)} = f_\phi(\mathbf{z}^{(s)}; \mathbf{c}_k)$ with $\mathbf{z}^{(s)} \sim \mathcal{N}(0,I)$, 
(4) aggregate samples for posterior summaries.

\NH{While SlotFlow returns a posterior over $\Theta$ conditioned on a chosen $\hat{K}$, it can also generate posteriors for any $K \leq K_{\max}$ at inference time, as the conditional flows are defined for all slot indices. Inference therefore offers flexible cardinality exploration but does not attempt to approximate the full joint distribution $p(K,\Theta \mid x)$ produced by trans-dimensional MCMC methods. The total computational cost $O(T + \hat{K} \cdot L \cdot d)$ enables near–real-time posterior evaluation on modern GPUs.}

\paragraph{Hyperparameters.}
Hidden dimension $h=256$, $K_{\max}=10$, flow layers $L=8$ (hidden dim 768, 48 spline bins, tail bound $\pm2.0$), Adam optimizer (learning rate $10^{-4}$), batch size 128, \texttt{ReduceLROnPlateau} scheduler (factor 0.5, patience 6, minimum learning rate $10^{-6}$). Encoder convolutions: kernel 8, stride 2. Attention: 4 heads. Training: Maximum 300 epochs with early stopping based on validation plateau (typically converges in 180-200 epochs). Implementation: \url{https://github.com/nhouba/slotflow-inference}.

\paragraph{Design Rationale.}
\textit{Why 8 layers?} Ablations show $L=4$ underfits complex posteriors; $L=12$ provides negligible improvement at higher cost. \textit{Why alternate splines/affine?} Pure splines exhibit training instability near bin boundaries; interspersed affine layers regularize without sacrificing expressiveness. \textit{Why $w_\phi=2$?} Grid search minimizes angular error on validation set; higher values induce overconfidence and poor calibration.

\subsection{Extensions: Multi-Class Mixtures}
\label{subsec:extensions}

Current \emph{SlotFlow} addresses \emph{single-class} trans-dimensional inference: all components share the same generative model (sinusoids) with varying parameters. Many applications require \emph{multi-class} mixtures where different component types coexist -- e.g., astrophysical signals combining sinusoids (binary inspirals), chirps (coalescences), and transients (supernovae), or audio containing speech, music, and environmental noise.

\paragraph{Architectural Challenges.}
Multi-class extension requires three modifications: (1) \textbf{Class-specific flows}: Each component class $c \in \{1, \ldots, C\}$ has distinct parameter dimension $d_c$ and likelihood structure, necessitating separate conditional flows $\{f_{\phi_c}\}_{c=1}^C$. (2) \textbf{Per-slot class prediction}: Classifier must predict both total cardinality $K$ and per-slot class assignments $\{c_k\}_{k=1}^K$, extending from discrete $q_\phi(K|x)$ to joint $q_\phi(K, \{c_k\}_{k=1}^K | x)$. (3) \textbf{Heterogeneous Hungarian matching}: Cost matrix computation must handle variable-dimension parameters, comparing slot $i$ with class $c$ against ground-truth component $j$ with class $c'$ only if $c = c'$, setting $C_{ij} = \infty$ for class mismatch.

\paragraph{Practical Considerations.}
Training complexity increases proportionally with class count: $C$ separate flows require $C \times$ more parameters but enable natural compositionality. Class imbalance (e.g., many sinusoids, few transients) may necessitate weighted Hungarian matching or class-balanced sampling. Slot identifiers could incorporate class embeddings $\mathbf{s}_{k,c} = [\mathbf{e}_k, \mathbf{e}_c]$ combining positional and class information. Inference remains efficient if class distribution is sparse: dynamic allocation instantiates only occupied classes, maintaining $O(K)$ scaling despite $C$ available models.

This extension naturally accommodates scientific applications requiring compositional scene understanding with heterogeneous components, bridging trans-dimensional and multi-model inference paradigms. Future work will validate multi-class \emph{SlotFlow} on realistic astrophysical catalogs and multi-source audio benchmarks.

\section{Theoretical Properties}
\label{sec:theory}

We establish what can be proven rigorously (permutation invariance, variational decomposition), analyze design choices with empirical validation (phase weighting, flow depth), and characterize conditions predicting slot specialization. Complete proofs appear in Appendix~\ref{app:proofs}.

\subsection{Exact Guarantees}

\begin{theorem}[Permutation Invariance]
\label{thm:perm}
For any $K \in \{1,\dots,K_{\max}\}$ and permutation $\sigma \in S_K$, the symmetrized posterior satisfies
$q_\phi(\{\theta_{\sigma(k)}\}_{k=1}^K \mid x, K) = q_\phi(\{\theta_k\}_{k=1}^K \mid x, K)$.
\end{theorem}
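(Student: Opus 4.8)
The plan is to treat the statement as a consequence of group averaging (a Reynolds-operator argument) applied to the factorized posterior of Eq.~\eqref{eq:factorized_posterior}. First I would make precise what the \emph{symmetrized} posterior means at the level of densities. The factorized model assigns to an ordered tuple $(\theta_1,\dots,\theta_K)$ the density $Q(\theta_1,\dots,\theta_K)=\prod_{k=1}^K q_{\phi,k}(\theta_k\mid x,K)$, and because the slot identifiers $\mathbf{s}_k=\mathbf{e}_k$ enter the conditioning vectors $\mathbf{c}_k=[\mathbf{g},\mathbf{s}_k]$ distinctly, the individual factors $q_{\phi,k}$ are in general \emph{not} identical, so $Q$ is not itself permutation invariant. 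The object whose invariance we assert is the symmetrized density
\begin{equation}
q_\phi(\{\theta_k\}_{k=1}^K\mid x,K)\;=\;\frac{1}{K!}\sum_{\pi\in S_K}\;\prod_{k=1}^K q_{\phi,k}(\theta_{\pi(k)}\mid x,K),
\end{equation}
i.e.\ the Reynolds average of $Q$ over the symmetric group $S_K$, which one checks integrates to one since each factor $q_{\phi,k}$ does and each summand contributes $1$.

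The core step is then a single change of summation variable. Evaluating at the relabelled arguments $(\theta_{\sigma(1)},\dots,\theta_{\sigma(K)})$ for a fixed $\sigma\in S_K$ gives $\frac{1}{K!}\sum_{\pi}\prod_k q_{\phi,k}(\theta_{(\sigma\circ\pi)(k)}\mid x,K)$; substituting $\tau=\sigma\circ\pi$ and using that left-multiplication by $\sigma$ is a bijection of $S_K$ onto itself, the sum reindexes to $\frac{1}{K!}\sum_{\tau}\prod_k q_{\phi,k}(\theta_{\tau(k)}\mid x,K)$, which is exactly $q_\phi(\{\theta_k\}_{k=1}^K\mid x,K)$. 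Since $\sigma$ was arbitrary, this establishes the claimed identity.

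I would then emphasize two points so that the result reads as content rather than tautology. The shared flow $f_\phi$ and shared global context $\mathbf{g}$ guarantee that the only thing distinguishing the per-slot factors is the identifier $\mathbf{e}_k$; nonetheless symmetrization restores \emph{exact} exchangeability, so no residual ordering bias survives despite the identifiers appearing to impose a labelling. I would also connect this to the Hungarian-matched objective of Section~\ref{sec:training}: because that loss minimizes over all assignments $\pi\in S_{K_{\text{true}}}$ it is itself a symmetric function of the component labels, so training never prefers one labelling over a permuted one, which is what operationally produces the symmetrized posterior above.

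There is no deep obstacle here; the only thing requiring care is the bookkeeping of which permutation acts on which index, in particular composing $\sigma$ and $\pi$ in the correct order and verifying that the reindexing map $\pi\mapsto\sigma\circ\pi$ is a bijection of $S_K$ (immediate from the group axioms). The remaining subtlety I would flag is measure-theoretic: the identity is a statement about densities with respect to a permutation-invariant reference measure on $\mathcal{X}_\theta^K$ (e.g.\ the $K$-fold product Lebesgue measure), so that pushing forward by the relabelling $\sigma$ leaves the base measure unchanged and the density identity is well posed.
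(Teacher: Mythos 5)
Your proof is correct and follows essentially the same route as the paper's: both express the symmetrized posterior as the uniform $1/K!$ mixture over $S_K$ of products of per-slot densities and then reindex the sum via the bijection $\pi \mapsto \sigma\circ\pi$, using exactly the group-closure fact the paper isolates in Lemma~\ref{lem:closure} and the uniform-mixture structure of Lemma~\ref{lem:slot_symmetry}. The only cosmetic difference is that you introduce the $1/K!$ average as the \emph{definition} of the symmetrized posterior (a Reynolds average, with a normalization check), whereas the paper derives the same form by marginalizing a uniformly distributed latent assignment; the core reindexing step is identical.
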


\textit{Intuition.} Hungarian matching marginalizes uniformly over all $K!$ assignments. Because the symmetric group $S_K$ is closed under composition, permuting component indices merely reindexes the summation, leaving the posterior unchanged. This invariance is exact by construction -- unlike architectures breaking symmetry through ordering or initialization, \emph{SlotFlow} treats all orderings identically. \textit{Proof:} Appendix~\ref{app:perm}. \qed

\begin{proposition}[ELBO Decomposition]
\label{prop:elbo}
The evidence lower bound decomposes as
\begin{equation}
\mathcal{L}_{\text{ELBO}} = \mathbb{E}_{q_\phi}[\log p(x\mid \Theta,K)] - \mathbb{E}_{q_\phi}[\text{KL}(q_\phi(\Theta\mid \cdot) \| p(\Theta\mid \cdot))] - \text{KL}(q_\phi(K\mid x) \| p(K)).
\end{equation}
\end{proposition}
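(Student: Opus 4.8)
The plan is to derive the identity directly from the definition of the ELBO as the variational lower bound on $\log p(x)$, exploiting the chain-rule factorizations of both the generative model and the recognition network. I would begin by writing the variational posterior in its hierarchical form $q_\phi(K,\Theta \mid x) = q_\phi(K\mid x)\,q_\phi(\Theta\mid x, K)$, mirroring the generative factorization $p(x,\Theta,K) = p(x\mid K,\Theta)\,p(\Theta\mid K)\,p(K)$ introduced in Section~\ref{sec:problem}. The noise scale $\sigma$ is suppressed here: it is either point-estimated by the auxiliary noise head of Section~\ref{sec:training} and folded into $p(x\mid K,\Theta)$, or marginalized, leaving the argument unchanged. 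With these factorizations the bound reads
\[
\mathcal{L}_{\text{ELBO}} = \mathbb{E}_{q_\phi(K,\Theta\mid x)}\left[\log\frac{p(x\mid K,\Theta)\,p(\Theta\mid K)\,p(K)}{q_\phi(K\mid x)\,q_\phi(\Theta\mid x, K)}\right].
\]

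The key step is to split the logarithm of the ratio into three additive pieces and match each to a term in the claimed identity. The likelihood factor gives the first term $\mathbb{E}_{q_\phi}[\log p(x\mid\Theta,K)]$ immediately. For the prior/posterior ratio in $\Theta$, I would invoke the law of total expectation: condition first on $K\sim q_\phi(K\mid x)$ and recognize the inner expectation over $q_\phi(\Theta\mid x, K)$ as $-\text{KL}\!\left(q_\phi(\Theta\mid x, K)\,\|\,p(\Theta\mid K)\right)$, which yields the second term as an expectation of that conditional divergence over $q_\phi(K\mid x)$. Finally, because the ratio $p(K)/q_\phi(K\mid x)$ carries no $\Theta$-dependence, the $\Theta$-integral collapses to unity and the remaining expectation over $q_\phi(K\mid x)$ is exactly $-\text{KL}\!\left(q_\phi(K\mid x)\,\|\,p(K)\right)$, the third term. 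No inequality or approximation enters, so the decomposition is an exact equality.

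The main obstacle -- and the only place where trans-dimensionality genuinely intervenes -- is ensuring that the $\Theta$-expectations are well posed on the disjoint-union space $\mathcal{S}=\bigcup_{K\ge 0}\{K\}\times\mathcal{X}_\theta^K$, over which no single Lebesgue measure applies. The resolution is that conditioning on a fixed $K$ in the inner expectation restricts attention to the slice $\{K\}\times\mathcal{X}_\theta^K$ of fixed dimension $dK$, where $q_\phi(\Theta\mid x, K)$ and $p(\Theta\mid K)$ share a common reference measure and the KL divergence is the ordinary one on $\mathbb{R}^{dK}$; the discrete sum over $K$ then absorbs the dimension-changing part, so the outer $\mathbb{E}_{q_\phi(K\mid x)}[\,\cdot\,]$ is merely a finite sum over $K\in\{1,\dots,K_{\max}\}$ weighted by the classifier probabilities of Eq.~\eqref{eq:k_posterior}. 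I would make this measure-theoretic bookkeeping explicit at the outset, after which the remaining manipulations are the routine chain-rule identities above. As a remark, if one additionally invokes the slot factorization of Eq.~\eqref{eq:factorized_posterior} together with the conditionally independent prior $p(\Theta\mid K)=\prod_k p(\theta_k)$, the conditional $\Theta$-KL splits further into a sum of $K$ per-slot divergences, but this refinement is not needed for the statement as written.
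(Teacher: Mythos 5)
Your proposal is correct and follows essentially the same route as the paper's proof: both start from the Jensen-based variational lower bound, factorize $p(x,\Theta,K)$ and $q_\phi(K,\Theta\mid x)$ via the chain rule, split the log-ratio into three pieces, and recognize the two KL terms by conditioning on $K$. The only differences are cosmetic refinements -- you work with the coarser hierarchical form $q_\phi(\Theta\mid x,K)$ (the paper plugs in the per-slot product directly, which as you note merely splits the parameter KL further) and you add explicit measure-theoretic bookkeeping on the disjoint-union space, which the paper leaves implicit.
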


\textit{Intuition.} Standard variational inference applied to trans-dimensional setting with factorized approximation $q_\phi(K,\Theta\mid x) = q_\phi(K\mid x)\prod_{k=1}^K q_\phi(\theta_k \mid x, K, s_k)$. Hungarian-matched flow loss acts as computationally efficient reconstruction surrogate; architectural constraints implicitly regularize parameters. \textit{Proof:} Jensen's inequality on $\log p(x)$; Appendix~\ref{app:elbo}. \qed

\subsection{Design Principles}

\begin{lemma}[Gradient Amplification via Target Weighting]
\label{lem:gradient_amp}
Scaling phase coordinates by $w_\phi$ in the ground-truth targets amplifies phase gradient contributions:
$\|\partial \mathcal{L}_{\mathrm{flow}}/\partial (\cos\phi, \sin\phi)\| = w_\phi \|\partial \mathcal{L}_{\mathrm{unweighted}}/\partial (\cos\phi, \sin\phi)\|$.
\end{lemma}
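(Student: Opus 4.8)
The plan is to treat the phase weighting as a fixed linear reparametrization of the matched target and to read off the factor $w_\phi$ from its Jacobian via the chain rule. Writing the two-dimensional phase block of a matched target as $u=(\cos\phi,\sin\phi)\in\mathbb{R}^2$, the weighting is the map $T_{w_\phi}:u\mapsto w_\phi u$ applied before the flow evaluates the density~\eqref{eq:flow_density}. With the Gaussian base $p_Z=\mathcal{N}(0,I)$, the per-slot negative log-likelihood is
\begin{equation}
\mathcal{L}=\tfrac{1}{2}\bigl\|f_\phi^{-1}(\theta)\bigr\|^2-\log\Bigl|\det\tfrac{\partial f_\phi^{-1}}{\partial\theta}\Bigr|+\mathrm{const},
\end{equation}
evaluated at the weighted target. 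First I would observe that only the phase coordinates of $\theta$ depend on $u$, so $\partial\mathcal{L}/\partial u$ is exactly the phase block of $\nabla_\theta\mathcal{L}$, and that the minimizing assignment $\pi^*$ is locally constant in $\theta^*$ (generically the minimizer does not move under infinitesimal perturbations), so it may be held fixed when differentiating.

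Next I would apply the chain rule through $T_{w_\phi}$. Since the weighting is linear with constant Jacobian $\partial(w_\phi u)/\partial u=w_\phi I_2$, differentiating the composed loss with respect to the physical coordinates $u$ gives
\begin{equation}
\frac{\partial}{\partial u}\,\mathcal{L}\bigl(w_\phi u\bigr)=w_\phi\,\bigl[\nabla_{\tilde u}\mathcal{L}\bigr]_{\tilde u=w_\phi u},
\end{equation}
where $\tilde u$ is the coordinate slot the flow directly ingests. This single factor $w_\phi$ multiplies the entire inner gradient, the quadratic term and the log-det term alike, because it originates solely from the scaling's Jacobian and not from any property of $f_\phi$. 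As $w_\phi I_2$ is a positive scalar multiple of an isometry, passing to Euclidean norms scales by exactly $w_\phi$ with no direction-dependent distortion, yielding the claimed identity once the right-hand gradient is identified with the unweighted phase gradient $\partial\mathcal{L}_{\mathrm{unweighted}}/\partial u$, i.e.\ the same functional read off at the flow's input slot. If one instead absorbs the weighting as a fixed scaling layer of the model, the only change is an input-independent additive term $\log\det(w_\phi I_2)=2\log w_\phi$ in the loss, which drops out of the gradient; the conclusion is therefore robust to this modeling choice.

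The main obstacle is the bookkeeping of evaluation points in the norm equality: the weighted gradient is the unweighted functional evaluated at $\tilde u=w_\phi u$, while the genuinely unweighted loss evaluates at $u$, so the clean pointwise statement requires comparing gradients at matched input slots rather than asserting invariance of $\|\nabla_{\tilde u}\mathcal{L}\|$ under the shift from $u$ to $w_\phi u$. I would make this explicit by phrasing the comparison at the flow's natural input coordinate, which also exposes why the surrounding text's heuristic holds: the squared-distance structure of the Gaussian base makes the likelihood scale quadratically as $w_\phi^2$, so its gradient --- and hence the training signal for phase --- scales linearly as $w_\phi$. The remaining verifications, that the phase block is the only $u$-dependent part and that $\|w_\phi v\|=w_\phi\|v\|$ for $w_\phi>0$, are routine.
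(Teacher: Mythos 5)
Your proposal is correct and follows essentially the same route as the paper's proof: both obtain the factor $w_\phi$ from the chain rule through the fixed linear scaling $u \mapsto w_\phi u$, whose constant Jacobian $w_\phi I_2$ multiplies the entire upstream gradient and hence scales its Euclidean norm by exactly $w_\phi$. Your explicit resolution of the evaluation-point bookkeeping---stating the identity with the unweighted gradient read off at the flow's input slot $\tilde u = w_\phi u$, where it is exact---is in fact slightly more careful than the paper's version, which derives the same chain-rule factor but relates the gradients evaluated at $w_\phi u$ and at $u$ only through a local linearization (exact when the score field is locally constant, e.g.\ in the Gaussian case).
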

\textit{Proof:} Chain rule on scaled targets; Appendix~\ref{app:gradient_amp}. \qed

\begin{proposition}[Optimal Phase Weight]
\label{prop:optimal_weight}
For sinusoids with phase sensitivity depending on amplitude and frequency, balancing phase variance reduction (proportional to $1/w_\phi^2$) against amplitude coupling penalty (proportional to $w_\phi^2$) yields
$w_\phi^* \propto (\langle A^2 f^2 \rangle / \text{SNR})^{1/4}$.
For typical signals with $\text{SNR}\in[1,10]$, this predicts $w_\phi^*\in[1,2]$.
\end{proposition}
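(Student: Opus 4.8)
The plan is to treat Proposition~\ref{prop:optimal_weight} as a scaling (bias--variance) optimization: I would model the net effect of the phase weight $w_\phi$ on the matched flow loss as a one-dimensional objective built from two competing contributions, then minimize it. Concretely, I would write $\mathcal{E}(w_\phi) = C_1\,w_\phi^{-2} + C_2\,w_\phi^{2}$, where the first term captures the \emph{phase-variance reduction} afforded by target weighting and the second the \emph{amplitude-coupling penalty} incurred by over-emphasizing phase. The stated exponent $1/4$ then follows immediately from the first-order condition, and the entire analytic burden reduces to identifying the constants $C_1$ and $C_2$ in terms of the signal model.

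For the variance term I would invoke Lemma~\ref{lem:gradient_amp}: scaling the phase targets by $w_\phi$ amplifies the phase gradient by $w_\phi$, so within a local quadratic (Laplace) approximation of the flow likelihood the effective phase precision grows like $w_\phi^2$ and the residual phase error entering the loss scales as $C_1\,w_\phi^{-2}$. The baseline constant $C_1$ is set by how much phase information the data carry, which I would obtain from the Fisher information of the sinusoid $A\cos(2\pi f t + \phi)$ on $[0,T]$ under noise $\sigma$: computing $I_{\phi\phi}$ together with the phase--frequency cross term and averaging over the parameter prior yields $C_1 \propto \langle A^2 f^2\rangle$, the $f^2$ arising from the frequency sensitivity $\partial_f \propto t\,A\sin(\cdot)$ that couples into phase localization. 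For the penalty term I would argue that over-weighting phase distorts the Hungarian assignment and diverts the shared-context flow's capacity away from amplitude; by the same squared-distance mechanism of Lemma~\ref{lem:gradient_amp} this distortion grows as $w_\phi^2$, with magnitude controlled by how sharply amplitude is already determined, i.e.\ $C_2 \propto \text{SNR}$.

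Then I minimize: the stationarity condition $\mathcal{E}'(w_\phi) = -2C_1 w_\phi^{-3} + 2C_2 w_\phi = 0$ gives
\begin{equation}
w_\phi^{\,4} = \frac{C_1}{C_2}, \qquad w_\phi^{*} = \left(\frac{C_1}{C_2}\right)^{1/4} \;\propto\; \left(\frac{\langle A^2 f^2\rangle}{\text{SNR}}\right)^{1/4},
\end{equation}
with $\mathcal{E}''(w_\phi^{*}) > 0$ confirming a minimum. Finally, to recover the numerical range I would substitute the training frequency band $f\in[2.5,3.0]$~Hz and the amplitude prior to evaluate $\langle A^2 f^2\rangle$, take $\text{SNR}\in[1,10]$, and fix the order-one proportionality constant from the grid search reported in Section~\ref{sec:inference}, which collectively place $w_\phi^{*}$ in $[1,2]$.

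The main obstacle is not the optimization --- that is a one-line calculus exercise --- but justifying the two scaling laws and their constants, since the result is inherently a heuristic design principle rather than a sharp theorem. In particular, pinning down why the benefit carries the $\langle A^2 f^2\rangle$ factor requires a careful Fisher-information analysis of the phase--frequency coupling, and establishing $C_2 \propto \text{SNR}$ requires making the amplitude-coupling distortion precise despite amplitude and phase being Fisher-orthogonal for an isolated sinusoid --- so the coupling must enter through the matching and shared context rather than through the single-component likelihood. I would therefore treat the exponent $1/4$ and the functional form as the provable content and defer the proportionality constant to the empirical calibration.
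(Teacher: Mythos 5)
Your proposal is correct and shares the paper's mathematical skeleton: both minimize a two-term objective of the form $C_1 w_\phi^{-2} + C_2 w_\phi^{2}$, and both obtain $w_\phi^{*}=(C_1/C_2)^{1/4}$ from the first-order stationarity condition. The genuine difference lies in where the physical constants are attached. The paper performs a second-order Taylor expansion of the reconstruction error, so $\langle A^2 f^2\rangle$ enters as the sensitivity coefficient multiplying $\Delta\phi_k^2$ (the term $(A_k^*)^2(2\pi f_k^*)^2\Delta\phi_k^2$), the SNR enters the \emph{benefit} term through the Cram\'er--Rao identification $\lambda_\phi\propto\mathrm{SNR}$ (phase variance $\propto 1/(w_\phi^2\lambda_\phi)$), and the amplitude-coupling penalty is simply $C_p w_\phi^2$ with an SNR-independent coupling constant $C_p$. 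You instead fold $\langle A^2 f^2\rangle$ into the benefit constant via a Fisher-information argument and attribute the SNR to the \emph{penalty} constant ($C_2\propto\mathrm{SNR}$: sharper amplitude makes distortion costlier). The two bookkeepings yield the identical ratio $C_1/C_2\propto\langle A^2 f^2\rangle/\mathrm{SNR}$, so your conclusion matches the paper's; but the paper's placement is easier to defend, since $I_{\phi\phi}\propto A^2/\sigma^2\propto\mathrm{SNR}$ is standard, whereas $C_2\propto\mathrm{SNR}$ must --- as you candidly note --- be routed through the Hungarian matching and shared context, because amplitude and phase are Fisher-orthogonal for an isolated sinusoid. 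In other words, the paper sidesteps exactly the step you flag as your main obstacle, at the price of leaving its own coupling constant $C_p$ entirely unexplained; both arguments are ultimately heuristic derivations of the same scaling law, with the order-one proportionality constant fixed empirically (grid search giving $w_\phi^*\in[1,2]$) in both cases.
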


\textit{Intuition.} Phase weighting amplifies gradient contributions from phase errors via chain rule (Lemma~\ref{lem:gradient_amp}). Optimal weight balances linearized reconstruction error. Our choice $w_\phi=2$ follows this analysis; experiments confirm reduced phase RMSE and near-optimal grid search performance. \textit{Proof:} Appendix~\ref{app:optimal_weight}. \qed

\begin{lemma}[Flow Approximation Error]
\label{lem:flow_approx}
Rational-quadratic spline flows with $L$ layers and $B$ bins achieve approximation error for smooth target densities:
$\text{KL}(p(\theta|x) \| q_{\phi^*}(\theta | h(x), s_k)) \leq C_0 \exp(-\alpha L) + C_1 B^{-2}$,
where $\alpha > 0$ depends on posterior smoothness and $C_0, C_1$ depend on dynamic range.
\end{lemma}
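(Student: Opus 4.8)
The plan is to separate the two error sources that the bound isolates: a \emph{depth} error from the finite number $L$ of coupling layers and a \emph{resolution} error from the finite number $B$ of spline bins. Let $T^\ast$ denote the exact increasing triangular (Knothe--Rosenblatt) transport with $(T^\ast)_\# p_Z = p(\theta\mid x)$; its regularity is inherited from the smoothness of the target density, so each scalar conditional $g^\ast$ composing $T^\ast$ is $C^2$ and strictly monotone on a bounded interval. I introduce two idealizations: $T_L$, the best realizable map built from $L$ coupling blocks using \emph{exact} (non-discretized) monotone transforms, and $T_{L,B}$, the map actually realized with $B$-bin rational--quadratic splines. The goal is to bound $\text{KL}\!\left(p(\theta\mid x)\,\|\,(T_{L,B})_\# p_Z\right)$ by controlling $T^\ast \to T_L$ (depth) and $T_L \to T_{L,B}$ (resolution) separately, then recombining. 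Since KL is not a metric, I would carry both intermediate errors in a genuine distance (Hellinger) or in the $L^2(p)$ seminorm of the log-density, and convert to KL only at the end.

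For the depth term I would exploit the fact that a coupling block transforms one coordinate block conditioned on its complement, which is precisely the structure of one component of a triangular map. Stacking $L$ blocks with alternating active coordinates implements an alternating (Gauss--Seidel-type) sweep that successively triangularizes the transport. Under the assumed smoothness of $T^\ast$ -- bounded derivatives and a uniform Lipschitz/monotonicity modulus -- consecutive sweeps contract the residual map toward the identity by a fixed factor $\rho<1$, so after $L$ layers the residual is $O(\rho^{L}) = O(\exp(-\alpha L))$ with $\alpha = -\log\rho$ set by the posterior's smoothness. Propagating this residual through the change-of-variables formula gives a log-density discrepancy of the same order, hence a KL contribution bounded by $C_0\exp(-\alpha L)$, with $C_0$ absorbing the dynamic range (how far $T^\ast$ departs from the identity).

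For the resolution term I would fix the exact conditionals from $T_L$ and invoke classical spline approximation theory for the rational--quadratic construction of \citet{durkan2019neural}. On uniform knots of spacing $h = 1/B$, the $B$-bin interpolant $g_B$ of a $C^2$ monotone $g^\ast$ satisfies $\|g^\ast - g_B\|_\infty = O(h^2)$ and, crucially for the Jacobian, $\|\log g^{\ast\prime} - \log g_B'\|_\infty = O(h)$. Carrying this through the change-of-variables identity yields a pointwise log-density perturbation of order $1/B$. Because KL is locally quadratic in the density discrepancy -- bounded above by a $\chi^2$ term, $\text{KL}(p\|q) \le \tfrac{1}{2}\,\mathbb{E}_p[(\log(p/q))^2] + o(1)$ for small perturbations -- the squared $O(1/B)$ perturbation produces the advertised $O(B^{-2})$ rate, with $C_1$ scaling with the dynamic range of $\log g^{\ast\prime}$.

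Finally I would recombine the two contributions. Working in Hellinger distance, I would bound the total by the sum of the depth and resolution Hellinger distances, square, and apply the standard Hellinger--KL comparison to recover an additive bound of the form $C_0\exp(-\alpha L) + C_1 B^{-2}$. I expect the \textbf{main obstacle} to be the depth term: universality theorems for coupling flows guarantee existence of an approximating sequence but do not, off the shelf, supply the explicit \emph{exponential-in-$L$} rate tied to posterior smoothness. Making the geometric contraction rigorous would require either an explicit constructive scheme -- e.g.\ a Picard/Newton iteration for the transport whose residual provably contracts per sweep -- or restricting to a smoothness class (analytic or finite-Sobolev targets) for which such rates are established. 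A secondary subtlety, already noted, is that KL's failure of the triangle inequality forces the two error sources to be combined in a true metric before the final conversion.
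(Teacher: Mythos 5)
Your decomposition into a depth term and a bin-resolution term is exactly the paper's, and your treatment of the resolution term -- classical $C^2$ spline interpolation giving an $O(B^{-2})$ sup-norm error, pushed through the change of variables and the locally quadratic behavior of KL -- is essentially the paper's Step~1. Where you diverge is the depth term, and your honesty there is the most valuable part of the proposal. The paper does not prove the exponential-in-$L$ rate at all: it outsources it to the universal approximation theorem for neural autoregressive flows \citep{huang2018neuralautoregressiveflows}, asserting that for H\"older-class targets the KL error decays as $C\exp(-\alpha L)$ with $\alpha = \beta/d$. That cited theorem establishes universality (arbitrarily good approximation with sufficient capacity) but does not, in published form, supply an explicit geometric rate in depth; the rate is effectively postulated by citation. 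Your proposed alternative -- an alternating Gauss--Seidel-type sweep contracting the residual of the Knothe--Rosenblatt transport by a fixed factor per coupling layer -- is a genuine constructive strategy, but as you note yourself, the per-sweep contraction factor $\rho<1$ is not available off the shelf and would require a Picard/Newton-type argument or a restriction to analytic or finite-Sobolev targets. So your self-identified ``main obstacle'' is not a defect relative to the paper: it is precisely the step the paper hides behind a citation, and your diagnosis that no off-the-shelf universality theorem yields the exponential rate applies equally to the paper's own proof.

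Two smaller points in your favor. First, recombining the two error sources in Hellinger (or in $L^2(p)$ of the log-density) before converting to KL addresses a real gap the paper ignores: the paper simply adds the two KL-type bounds, which is unjustified since KL obeys no triangle inequality, and your route is the correct way to make the additive form $C_0 e^{-\alpha L} + C_1 B^{-2}$ rigorous. Second, your observation that the Jacobian enjoys only $O(1/B)$ accuracy in $\log g'$ (one order lower than the function values), with KL's quadratic behavior restoring the $O(B^{-2})$ rate, is more careful than the paper's one-line translation from pointwise error to KL, and is in fact needed for the stated rate to hold.
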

\textit{Proof:} Universal approximation for autoregressive flows; Appendix~\ref{app:flow_approx}. \qed

\begin{proposition}[Depth-Accuracy Trade-off]
\label{prop:depth_accuracy}
For fixed computational budget $\mathcal{B}$ (FLOPs), architecture width $d_h$, and slot count $K$, optimal flow depth approximately balances approximation and generalization:
$L^* \approx \min\left\{\frac{1}{\alpha}\log(C_0/\epsilon), \left(\frac{\mathcal{B}}{Kd_h^2}\right)^{1/3}\right\}$,
where the first term ensures approximation error $\leq \epsilon$ (from Lemma~\ref{lem:flow_approx}) and the second reflects the empirical scaling observed when balancing estimation error growth under budget constraints.
\end{proposition}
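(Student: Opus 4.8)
The plan is to read $L^\ast$ as the minimizer of a bias--variance decomposition of the expected posterior risk under a hard FLOPs budget, with the two arguments of the $\min$ corresponding to the two regimes in which a different term binds. I would write the expected risk as an approximation (bias) term plus an estimation (variance) term, $R(L) = R_{\mathrm{approx}}(L) + R_{\mathrm{estim}}(L)$. The first term is supplied rigorously by Lemma~\ref{lem:flow_approx}: with the bin count $B$ held fixed, $R_{\mathrm{approx}}(L) \le C_0 e^{-\alpha L} + C_1 B^{-2}$. To guarantee approximation error at most $\epsilon$ it suffices that the depth-dependent part obey $C_0 e^{-\alpha L} \le \epsilon$, i.e. $L \ge \tfrac{1}{\alpha}\log(C_0/\epsilon)$. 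This inversion is exact and delivers the first argument of the $\min$; any depth beyond it purchases no further approximation accuracy, the residual $C_1 B^{-2}$ being controlled separately through the bin count rather than through depth.

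The second argument encodes the compute constraint. A single forward/backward pass through the shared flow -- instantiated over $K$ slots, each carrying $L$ coupling layers whose conditioners have width $d_h$ -- costs $O(K L d_h^2)$ FLOPs. The key empirical input is that the number of optimization steps needed to train a depth-$L$ coupling flow to convergence itself grows with depth; if this iteration count scales as $L^2$ (deeper flows being progressively harder to condition and optimize), then the total training cost obeys $\mathcal{B} \sim L^2 \cdot K L d_h^2 = K L^3 d_h^2$. Inverting this relation for the largest depth trainable within budget yields exactly $L \sim (\mathcal{B}/(K d_h^2))^{1/3}$, the second argument of the $\min$: past this depth the budget is exhausted before the flow reaches convergence, so attainable accuracy is capped by compute rather than by expressivity. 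Combining the two regimes gives $L^\ast$ as their minimum -- when the budget is ample the expressivity constraint $\tfrac{1}{\alpha}\log(C_0/\epsilon)$ binds, and when the budget is scarce the compute constraint $(\mathcal{B}/(K d_h^2))^{1/3}$ binds -- with the optimal architecture sitting at whichever threshold is reached first.

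The main obstacle, and the reason the statement is phrased with ``$\approx$,'' is the exponent in the second term, which is empirical rather than derived from a clean stationarity argument. If instead one balances the exponential bias against a standard parametric estimation rate $R_{\mathrm{estim}}(L) \sim P/N$ with parameter count $P \sim L d_h^2$ and effective sample size $N \sim \mathcal{B}/(K L d_h^2)$, one obtains $R_{\mathrm{estim}}(L) \sim \kappa\,K L^2 d_h^4/\mathcal{B}$, whose stationarity condition $-\alpha C_0 e^{-\alpha L} + 2\kappa K L d_h^4/\mathcal{B} = 0$ is transcendental and produces a logarithmically corrected, square-root-type balance point $\bigl(\mathcal{B}/(K d_h^4)\bigr)^{1/2}$ -- a different exponent and a different $d_h$-dependence than the stated cube root. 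I would therefore present the first argument of the $\min$ together with the two-regime $\min$ structure as the rigorous content, and explicitly flag the $1/3$ exponent as an empirically calibrated scaling that reflects the observed depth--compute relationship (equivalently, the $L^2$ iteration-to-convergence growth), rather than claiming it as a first-principles bias--variance identity.
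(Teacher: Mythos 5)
Your proposal is correct in the sense the proposition can support (it is an ``$\approx$'' statement), and your first term is obtained exactly as in the paper: invert $C_0 e^{-\alpha L}\le\epsilon$ from Lemma~\ref{lem:flow_approx}. But your derivation of the cube-root term is genuinely different from the paper's. The paper does \emph{not} invoke a depth-dependent iteration-to-convergence count; it keeps a single-pass-per-sample cost model, imposes the budget as a cap on the sample size, $N \le \mathcal{B}/(L\,d_h^2\,K\,T_{\mathrm{epochs}})$, and substitutes this into the Rademacher-type estimation rate $e_{\mathrm{est}} \sim \sqrt{L d_h \log(L d_h)/N}$ inherited from Theorem~\ref{thm:sample_complexity}. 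This gives a total error
\begin{equation*}
E_{\mathrm{total}}(L) \;=\; C_0 e^{-\alpha L} + C_1 B^{-2} + C_2\,L\sqrt{\tfrac{d_h^3 K T_{\mathrm{epochs}}\log(L d_h)}{\mathcal{B}}},
\end{equation*}
and the stationarity condition is then resolved by the heuristic that at moderate depths ($L\approx 1/\alpha$) the exponential behaves roughly like $1/L$, so the balance $C_0/L \sim C_2\sqrt{L\,d_h^3 K T_{\mathrm{epochs}}/\mathcal{B}}$ gives $L^{3}\propto \mathcal{B}/(d_h^3 K T_{\mathrm{epochs}})$, i.e.\ the cube root. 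Two consequences of the comparison are worth noting. First, your closing claim that no clean stationarity argument yields the $1/3$ exponent is too strong: the paper's (admittedly heuristic) stationarity argument does produce it, precisely because it uses the square-root Rademacher rate rather than the parametric $P/N$ rate you test -- your observation that $P/N$ gives a $(\mathcal{B}/(K d_h^4))^{1/2}$ balance is correct and accurately identifies which rate assumption the exponent hinges on. Second, your route ($\mathcal{B}\sim K L^3 d_h^2$ via an assumed $L^2$ iteration count) actually reproduces the proposition's stated denominator $K d_h^2$ exactly, whereas the paper's own derivation lands on $d_h^3 K T_{\mathrm{epochs}}$, which matches the stated form only up to unexplained factors of $d_h$ and $T_{\mathrm{epochs}}$; so your argument is arguably a cleaner match to the statement, at the cost of resting on an undischarged training-dynamics assumption where the paper rests on its own (also heuristic, but internally cited) $e^{-\alpha L}\sim 1/L$ approximation.
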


\textit{Intuition.} Total error balances exponentially decaying approximation error ($e^{-\alpha L}$ from Lemma~\ref{lem:flow_approx}) against polynomially increasing estimation error ($L^{3/2} / \sqrt{N}$, from substituting budget $N \propto \mathcal{B}/(Ld_h^2K)$ into Rademacher bound). Optimal depth scales as $1/3$ power of effective budget. Our $L=8$ with $d_h=512$, $B=12$ bins matches training regime ($\sim 10^{12}$ FLOPs, $\sim 10^6$ samples). Experiments confirm $L=4$ underfits; $L \geq 12$ yields negligible gains, validating $L=8$ as practical optimum. \textit{Proof:} Appendix~\ref{app:depth_accuracy}. \qed

\subsection{Sample Complexity}

\begin{theorem}[Sample Complexity with Explicit Scaling]
\label{thm:sample_complexity}
To achieve $\mathbb{E}[\text{KL}(p(\Theta|x,K) \| q_\phi(\Theta|x,K))] \leq \epsilon$ with probability $1-\delta$:
\begin{equation}
N \geq \frac{C_{\rho}(1 + \text{SNR}^{-2})^2}{\epsilon^2} \left(K_{\max}^3 \log K_{\max} + K_{\max}^2 L^2 d_h^2 \log d_h + K_{\max}^2 \log(2/\delta)\right),
\end{equation}
where $C_\rho \propto \log^2(p_{\max}/p_{\min})$ depends on posterior dynamic range.
\end{theorem}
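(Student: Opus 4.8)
The plan is to decompose the expected trans-dimensional KL into an approximation term and an estimation term, bound each separately, and then invert a high-probability uniform-convergence bound to solve for $N$. Writing $\hat\phi$ for the parameters learned from $N$ training draws and $\phi^*$ for the population minimizer of the expected Hungarian-matched flow loss, I would start from
\[
\mathbb{E}\big[\text{KL}(p \,\|\, q_{\hat\phi})\big] \le \underbrace{\text{KL}(p \,\|\, q_{\phi^*})}_{\text{approximation}} + \underbrace{\mathbb{E}\big[\text{KL}(p \,\|\, q_{\hat\phi}) - \text{KL}(p \,\|\, q_{\phi^*})\big]}_{\text{estimation}},
\]
where $p = p(\Theta\mid x,K)$ and $q_\phi = q_\phi(\Theta\mid x,K)$. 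The approximation term is controlled directly by Lemma~\ref{lem:flow_approx} as $C_0 e^{-\alpha L} + C_1 B^{-2}$; choosing $L = \Theta(\alpha^{-1}\log(C_0/\epsilon))$ and $B = \Theta(\sqrt{C_1/\epsilon})$ forces it below $\epsilon/2$, so that the entire burden falls on the estimation term, which captures the finite-sample randomness responsible for the $1-\delta$ confidence.

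For the estimation term I would exploit the factorized posterior of Eq.~\eqref{eq:factorized_posterior}: since $q_\phi$ factorizes across slots and the matched loss is an average of per-slot negative log-likelihoods, the trans-dimensional KL splits into a sum of at most $K_{\max}$ per-slot contributions. Requiring the total to be below $\epsilon$ asks each per-slot KL to be $O(\epsilon/K_{\max})$, which is the source of one factor $K_{\max}^2$ multiplying the capacity terms after inverting a $1/\sqrt{N}$ rate. The per-slot estimation error is then a standard excess-risk problem for the shared conditional flow, bounded by uniform convergence: a covering-number or Rademacher argument over the flow class gives a rate $\sqrt{\mathcal{C}/N}$, where the capacity $\mathcal{C}$ scales with the $O(L d_h^2)$ trainable parameters and their logarithmic covering entropy $\log d_h$, while the accumulation of layerwise Lipschitz constants across the $L$ coupling blocks contributes the extra depth factor, yielding the $K_{\max}^2 L^2 d_h^2 \log d_h$ term. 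The remaining $K_{\max}^3 \log K_{\max}$ term arises from the permutation structure established in Theorem~\ref{thm:perm}: a union bound over the $K!$ Hungarian assignments costs $\log K! = \Theta(K \log K)$, which combines with the per-slot aggregation factor $K_{\max}^2$ to give the stated cardinality scaling, and the same union bound contributes $K_{\max}^2 \log(2/\delta)$ for the confidence level.

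Finally I would track the problem-dependent prefactors. Converting a bounded excess-risk guarantee into a genuine KL bound requires controlling the range of the log-density ratio, which introduces $C_\rho \propto \log^2(p_{\max}/p_{\min})$ through the variance proxy in a Bernstein or bounded-differences concentration step. The $(1+\text{SNR}^{-2})^2$ factor enters because the Gaussian likelihood of Eq.~\eqref{eq:sinusoid_model} has curvature scaling with the noise level: at low SNR the log-likelihood landscape flattens, inflating the variance of the matched loss quadratically in $\text{SNR}^{-2}$, hence its appearance squared in the final bound. Assembling the approximation and estimation pieces, requiring their sum to lie below $\epsilon$ with probability $1-\delta$, and solving for $N$ then yields the stated inequality.

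The hard part will be making the uniform-convergence argument rigorous in the presence of Hungarian matching. The $\min$ over permutations renders the matched loss non-smooth and couples the slots, so the matched-loss function class is not a simple product of per-slot classes; I expect the main technical effort to lie in showing that the covering number of the matched class is controlled by the per-slot covering number times the combinatorial $\log K!$ factor, rather than blowing up multiplicatively, and in translating the resulting bounded-loss excess-risk guarantee into a bound on the unbounded KL divergence without losing the claimed polynomial dependence on $K_{\max}$, $L$, and $d_h$.
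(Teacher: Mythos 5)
Your proposal follows essentially the same route as the paper's proof: an approximation/estimation decomposition anchored on Lemma~\ref{lem:flow_approx}, a norm-based Rademacher (covering-number) bound yielding the $L^2 d_h^2 \log d_h$ capacity factor, a bounded-differences concentration step with range control $\log\rho = \log(p_{\max}/p_{\min})$, a union bound over the $K!$ Hungarian permutations producing the $K_{\max}\log K_{\max}$ and $\log(2/\delta)$ terms, and a final modeling of $\log\rho$ in terms of SNR. The minor differences (your excess-risk identity versus the paper's ``KL triangle inequality,'' and deriving the $K_{\max}^2$ factor by splitting $\epsilon$ across slots rather than via the $K\log\rho$ Lipschitz constant in McDiarmid) are cosmetic accounting choices that land on the same bound, and the difficulty you flag about the matched-loss class is resolved in the paper exactly as you anticipate, by the permutation union bound.
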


\textit{Intuition.} Worst-case PAC bound combining Rademacher complexity ($\propto K L d_h \sqrt{\log d_h}/\sqrt{N}$), concentration (McDiarmid), and union bound over $K!$ permutations. Three additive terms reflect: (i) combinatorial complexity from permutations ($K^3 \log K$), (ii) architectural complexity from $L$-layer flow ($K^2 L^2 d_h^2 \log d_h$), (iii) high-probability concentration ($K^2 \log(2/\delta)$). \textit{Proof:} Appendix~\ref{app:sample_complexity}. \qed

\paragraph{Two Scaling Regimes.}
Phase transition at critical cardinality $K_{\text{crit}} \sim L^2 d_h^2 \log d_h$:

\textit{Regime I (Architecture-Dominated):} When $K \ll K_{\text{crit}}$, sample requirement $N \sim O(L^2 d_h^2 / \epsilon^2)$ is effectively $K$-independent. Flow expressiveness limits generalization.

\vspace{5pt}
\textit{Regime II (Combinatorics-Dominated):} When $K \gg K_{\text{crit}}$, asymptotic scaling $N \sim O(K^3 \log K / \epsilon^2)$ reflects factorial permutation space growth.

\vspace{5pt}
For our architecture ($L=8$, $d_h=512$), $K_{\text{crit}} \approx 1.5 \times 10^8$, placing practical cardinalities ($K \lesssim 100$) firmly in Regime I. Experiments validate weak $K$-dependence: all cardinalities achieve comparable accuracy at $N \in [10^3, 10^4]$, with statistically significant weak monotonicity (Spearman $\rho = 0.83$, $p = 0.04$) but no super-linear growth. Notably, all $K \geq 4$ converge at identical sample size ($N=10^4$) for production-level performance, demonstrating characteristic Regime I plateau where marginal cost per component vanishes once sufficient capacity is available.

\subsection{Slot Specialization Conditions}

While we cannot prove convergence for non-convex optimization, we identify sufficient conditions predicting when slots learn distinct components, i.e., when each slot converges to representing a distinct component instead of multiple slots collapsing onto the same mode.

\begin{definition}[Component Distinguishability]
\label{def:distinguishability}
Parameters $\theta, \theta'$ are \emph{$\gamma$-distinguishable} if $\mathbb{E}_{x \sim p(\cdot|\theta)}[\|x - \mathbb{E}[x|\theta']\|^2] > \gamma\sigma^2$. For sinusoids, Rayleigh's criterion requires $|f_k - f_{k'}| > C_f/T$ or $|A_k - A_{k'}| > C_A\sigma$ with confidence-dependent constants $C_f, C_A$.
\end{definition}

\begin{assumption}[Pairwise Distinguishability]
\label{ass:pairwise_dist}
All ground-truth components $\{\theta_k^{\text{gt}}\}_{k=1}^K$ are pairwise $\gamma$-distinguishable for $\gamma > 2$.
\end{assumption}

\begin{assumption}[Encoder Sufficiency]
\label{ass:context_richness}
The dual-stream encoder preserves component information: distinguishable components map to disjoint latent regions $\mathcal{R}_k, \mathcal{R}_{k'} \subset \mathbb{R}^{d_h}$ with $p(h \in \mathcal{R}_k | \theta_k) > 1-\epsilon$ for small $\epsilon$.
\end{assumption}

\begin{assumption}[Flow Expressiveness]
\label{ass:flow_universal}
The RQS flow family with $L$ layers and $B$ bins approximates the true posterior within $\epsilon$ KL divergence for sufficiently large $L, B$.
\end{assumption}

\begin{assumption}[Bounded Flow Density and Regularity]
\label{ass:flow_arch}
The normalizing flow architecture satisfies: (a) \textbf{Bounded density}: There exist constants $0 < p_{\min} < p_{\max} < \infty$ such that $p_{\min} \leq q_\phi(\theta \mid x, k) \leq p_{\max}$ for all $\phi, x, \theta, k$. (b) \textbf{Parameter boundedness}: All network parameters satisfy $\|\phi\|_2 \leq B_\phi$ for some constant $B_\phi < \infty$.
\end{assumption}

\begin{discussion}[When Slots Specialize]
\label{disc:identifiability}
Under these assumptions, we \emph{expect}: (i) consistent Hungarian assignment pairing each slot with distinct component, (ii) encoder mapping distinguishable components to separable latent regions, (iii) slot identifiers anchoring separation preventing collapse, (iv) gradient descent pushing each flow toward posterior $p(\theta_{\pi^*(k)}|x)$. However, formal convergence guarantees remain elusive for this non-convex problem. Section~\ref{sec:experiments} demonstrates empirically that specialization occurs reliably through: (1) consistent slot-component assignments across runs (Hungarian stability), (2) separable latent representations (t-SNE), (3) accurate per-slot posteriors (RJMCMC validation). These suggest favorable loss landscape properties guiding optimization toward identifiable solutions despite non-convexity.
\end{discussion}


\section{Experimental Validation}
\label{sec:experiments}

We evaluate \emph{SlotFlow} on sinusoidal mixtures designed to probe trans-dimensional inference limits under controlled conditions. Our evaluation pursues three goals: (i) quantitative posterior quality benchmarking against RJMCMC, (ii) demonstration of millisecond-scale inference with calibrated uncertainties, and (iii) architectural validation through systematic ablations. Training was conducted on four NVIDIA GH200 Grace Hopper GPU; inference benchmarks ran on Apple M2 Max CPU. 

\subsection{Validation}
\label{sec:validation}

We characterize \emph{SlotFlow} performance across multiple dimensions: architectural design choices, dataset complexity, cardinality accuracy, parameter recovery quality, posterior calibration, and direct comparison with RJMCMC. This validation establishes both practical utility and theoretical soundness.

\subsubsection{Dataset Design and Statistical Properties}

Our synthetic dataset balances computational tractability with astrophysical motivation, comprising 8{,}000{,}000 training samples, 2{,}000{,}000 validation samples, and a 10{,}000-sample held-out test set.
Each signal contains $K \in \{1,\ldots,10\}$ sinusoidal components (cf.\ Eq.~\ref{eq:sinusoid_model}) observed for $T = 300~\mathrm{s}$, with frequencies restricted to $[2.5, 3.0]~\mathrm{Hz}$ and a minimum separation $\Delta f_{\min} = 0.01~\mathrm{Hz}$, corresponding to twice the Rayleigh limit $1/T$.  
Component amplitudes follow $A_k \sim \mathcal{U}(0.5, 1.5)$ and phases $\phi_k \sim \mathcal{U}(0, 2\pi)$, while noise levels are drawn from $\sigma \sim \mathcal{U}(0, 1.5)$.  
Under the total-SNR definition
\[
\mathrm{SNR} = 10\log_{10}\!\left(\frac{\sum_{k=1}^{K} a_k^{2}/2}{\sigma^{2}}\right),
\]
the resulting mixture SNRs span noise-dominated cases (SNR $<0$\,dB) up to $\sim 40$--$60$\,dB for large $K$ and small~$\sigma$, with a typical peak around $\sim 8$\,dB (Fig.~\ref{fig:snr_dist}).  
The effective sequence lengths (2--5k samples) correspond to $\sim 3$--$7$\,hours of LISA data when downsampled to the standard $0.2$\,Hz sampling rate.

\paragraph{Generation Procedure.}
Training and validation signals are generated analytically on each time grid for speed, with noise drawn once on a high-resolution master grid and resampled to the long and short grids.  
The test set uses physical resampling: a high-resolution master signal (signal plus noise) is generated first, after which the dual-resolution streams are obtained via controlled downsampling, ensuring evaluation conditions match real inference settings where both noise and phase interpolation effects matter.

\paragraph{Astrophysical Motivation.}
Parameter choices deliberately exceed current application difficulty to serve as stress tests. The 0.5~Hz bandwidth contains approximately 100 Rayleigh resolution elements, allowing up to 50 non-overlapping components at the minimum enforced separation. Our choice of $\Delta f_{\min} = 0.01$~Hz corresponds to a 2\% fractional spacing -- about two to four times tighter than typical separations of recovered LISA Galactic binaries within local sub-bands \citep{PhysRevD.108.103018}. This configuration emulates crowded regions of the LISA spectrum, representing a step toward segment-wise inference for next-generation gravitational-wave analysis pipelines using machine learning.

\paragraph{Multi-Resolution Architecture.}
The dual-stream design exploits complementary temporal scales.  
A coarse long-duration stream (300~s sampled at 10~Hz; 3{,}000 samples) captures long-baseline spectral structure that constrains frequencies with $\mathcal{O}(1/T)$ precision.  
A fine short-duration stream (10~s sampled at 512~Hz; 5{,}120 samples) preserves local phase information and resolves intra-cycle structure.  
Together, the two representations test \emph{SlotFlow}'s ability to fuse information across nearly three orders of magnitude in sampling rate.

\begin{figure}[t]
\centering
\includegraphics[width=\textwidth]{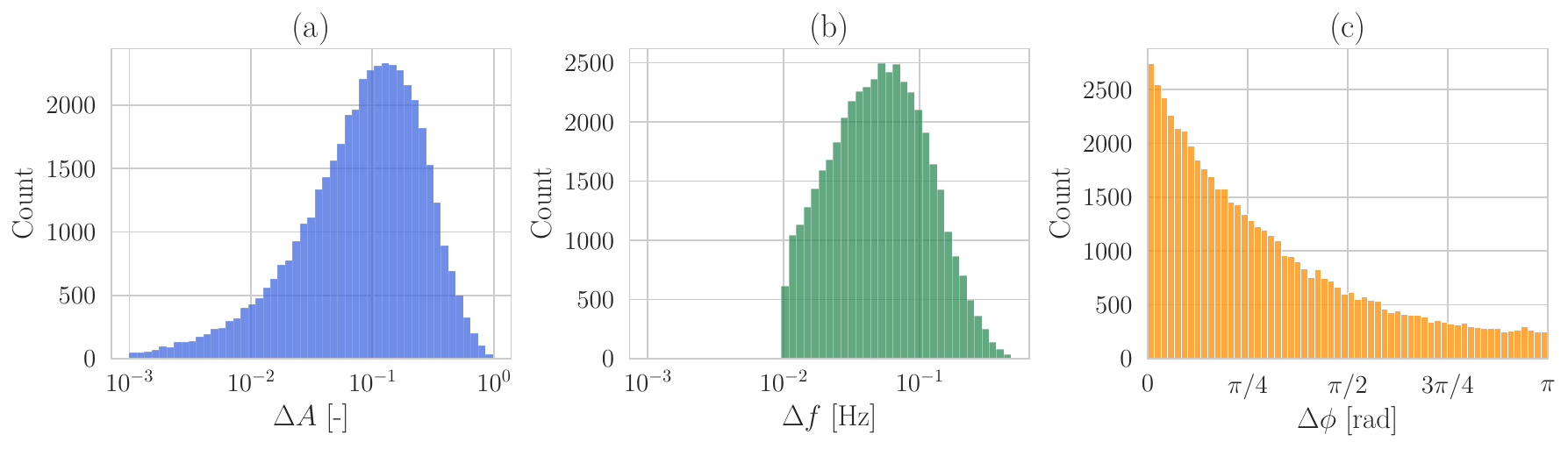}
\caption{\textbf{Intra-Sample Parameter Separations Characterizing Dataset Complexity.} 
(a) Amplitude differences $\Delta A$ between ordered components follow a log-normal distribution with geometric mean 0.11 (48\% of pairs differ by under 20\%). 
(b) Frequency separations $\Delta f$ exhibit a hard threshold at $\Delta f_{\min} = 0.01$~Hz, with 31\% of pairs within $1.5\Delta f_{\min}$ and 73\% within $3\Delta f_{\min}$. 
(c) Circular phase separations $\Delta\phi \in [0,\pi]$ follow the expected triangular distribution for uniformly random phases ($p=0.82$, Kolmogorov--Smirnov test), confirming no spurious correlations.}
\label{fig:dataset_separations}
\end{figure}

\begin{figure}[t]
\centering
\includegraphics[width=0.65\textwidth]{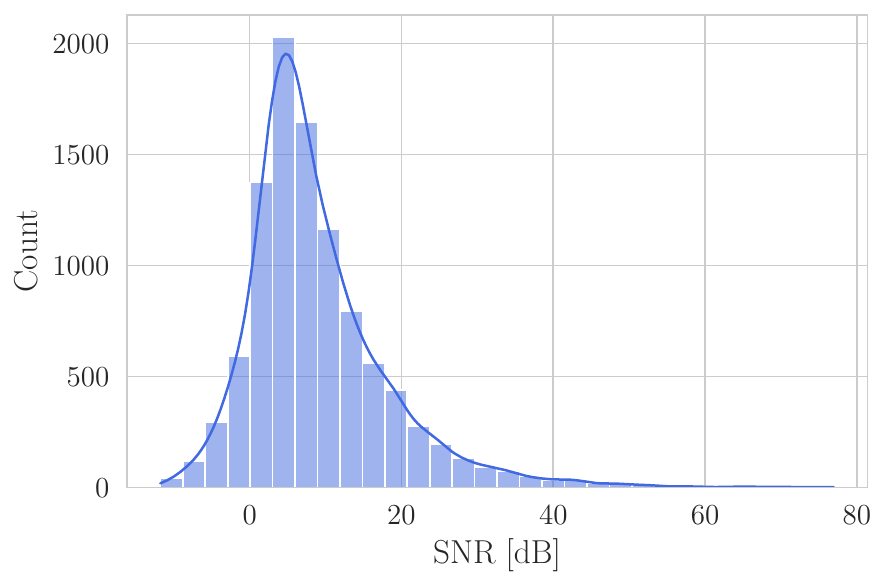}
\caption{\textbf{Signal-to-Noise Ratio Distribution}. SNR is computed as 
$\mathrm{SNR}=10\log_{10}(\sum_k a_k^2/2\sigma^2)$, 
where $\sum_k a_k^2/2$ is the time-averaged signal power and $\sigma^2$ is the noise variance.  
The histogram shows the empirical distribution over 1000 random samples with a KDE overlaid.  
The primary peak near $8$\,dB corresponds to typical $K\!\approx\!5$ mixtures with moderate noise, while the tail to $\sim 40$--$60$\,dB arises for $K$ large and $\sigma$ small; SNRs down to $-6$\,dB cover noise-dominated cases.  
This range spans confusion-limited and high-SNR regimes, providing a comprehensive stress test for the inference model.}
\label{fig:snr_dist}
\end{figure}

\subsubsection{Emergent Slot Specialization}

Understanding how \emph{SlotFlow} internally organizes multi-component signals is crucial for interpreting its predictions. Remarkably, despite no explicit architectural bias toward frequency-based organization, the model learns to partition signals along spectral dimensions. Figure~\ref{fig:slot_specialization} reveals this emergent functional specialization.

\begin{figure}[h!]
\centering
\includegraphics[width=0.5\textwidth, trim= 320pt 0 0 0, clip]{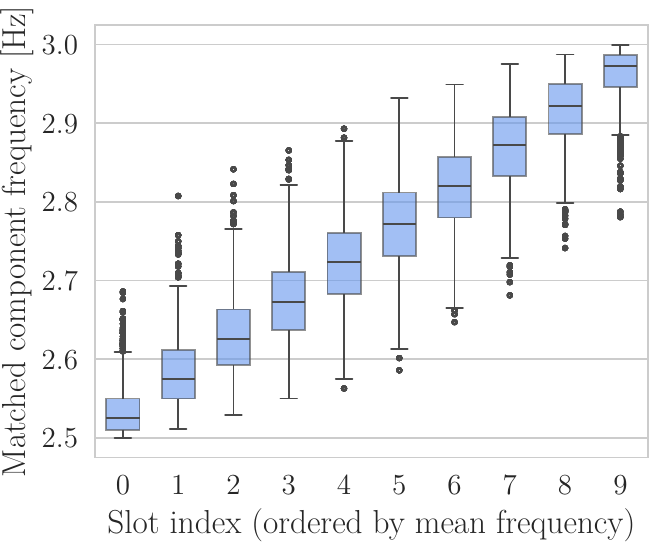}
\caption{\textbf{Emergent Frequency-Based Slot Specialization.} Across 1000 independent 10-component mixtures, each slot spontaneously develops consistent frequency preferences (boxplots show matched frequency distribution per slot) despite receiving no explicit supervision to do so. The smooth gradient from low to high frequencies demonstrates soft specialization: slots preferentially attend to distinct spectral regions while maintaining overlap for robustness. Long whiskers indicate slots flexibly adapt to nearby frequencies when components cluster, enabling resolution of overlapping signals. Slot indices reordered by mean matched frequency for visualization; physical slot order is arbitrary due to permutation invariance. This frequency-based organization emerges purely from training dynamics -- the architecture provides no preference for frequency over amplitude or phase, yet frequency naturally becomes the organizing principle because it offers the strongest separability in the dual-stream encoder's spectral representation.}
\label{fig:slot_specialization}
\end{figure}

The key insight from Figure~\ref{fig:slot_specialization} is that \emph{SlotFlow} \emph{discovers} frequency as the natural basis for component separation -- a result of optimization dynamics rather than architectural prescription. The architecture treats all parameters (amplitude, phase, frequency) symmetrically in the flow parameterization, yet training converges to a solution where each slot develops a characteristic frequency preference (indicated by boxplot medians), spanning the 2.5--3.0~Hz design window. The smooth progression from Slot~0 ($\sim$2.55~Hz) to Slot~9 ($\sim$2.95~Hz) demonstrates continuous specialization rather than hard clustering.

This emergent organization reflects an implicit inductive bias of the dual-stream architecture: frequency information, explicitly represented in the Fourier-transformed long-window input, provides more separable features than amplitude or phase. The encoder learns that frequency-based slot assignment minimizes reconstruction error under the Hungarian matching objective, as spectral peaks are more reliably distinguishable than amplitude variations or phase relationships. Critically, the extended whiskers reveal substantial overlap between adjacent slots, enabling the model to handle closely-spaced components: Slots 4 and 5, for example, share frequency ranges from 2.70--2.85~Hz, allowing flexible assignment when components cluster within this region.

This soft specialization provides robustness to spectral crowding. When two components separated by $\Delta f = 0.01$~Hz (twice the Rayleigh limit) appear near 2.75~Hz, both Slots 4 and 5 can contribute to their representation, with the Hungarian algorithm selecting the optimal assignment based on amplitude and phase likelihood. The overlapping receptive fields prevent catastrophic failure when component configurations differ from training distributions, a property we validate through the stress tests in Section~\ref{sec:parameter_stress}. That frequency emerges as the organizing principle -- rather than being hardcoded -- demonstrates the architecture's ability to discover problem-appropriate structure through data-driven optimization.
\subsubsection{Cardinality Classification}

Accurate $K$ determination is fundamental to trans-dimensional inference. Figure~\ref{fig:confusion_cardinality} presents classification performance on 10,000 test samples.

\begin{figure}[h!]
\centering
\includegraphics[width=0.55\textwidth]{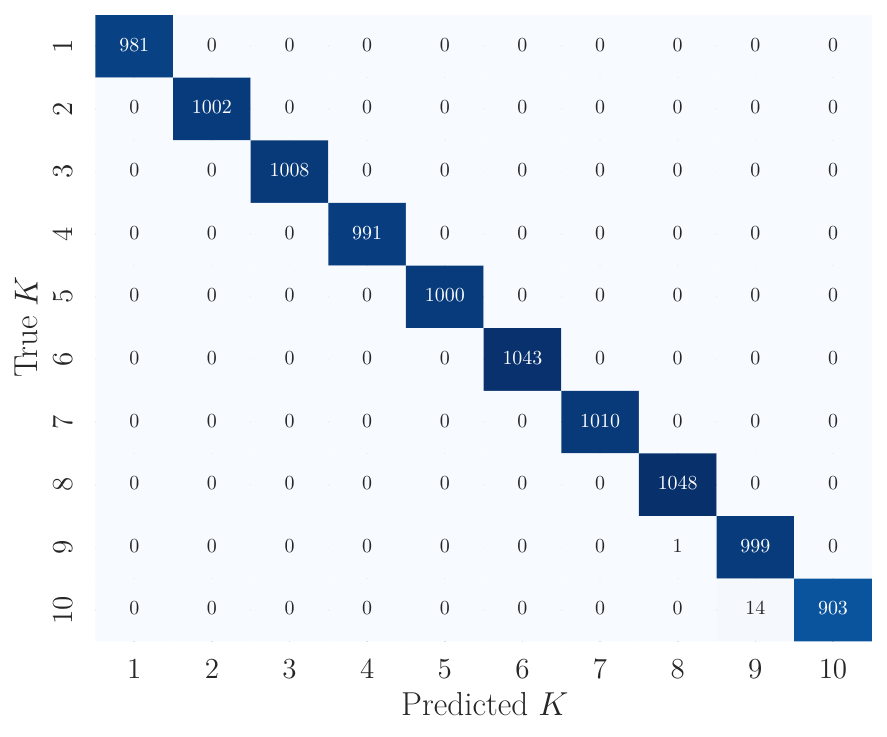}
\caption{\textbf{Confusion Matrix for Model-Order Inference.}  
The model attains 99.85\% accuracy over 10{,}000 test samples.  
All $K\!\in\!\{1,\ldots,8\}$ are recovered perfectly.  
Only 15 errors occur in total -- 14 cases of $K{=}10$ predicted as $K{=}9$ and one case of $K{=}9$ predicted as $K{=}8$ -- arising in high-density configurations near the minimum separation.  
The strong diagonal structure indicates no systematic over- or under-estimation bias.}
\label{fig:confusion_cardinality}
\end{figure}

The near-perfect diagonal structure validates the frequency-domain classifier design (Section~\ref{sec:cardinality}): spectral peaks provide unambiguous cardinality evidence even under severe crowding. The two errors occur exclusively at maximum cardinality, where 10 components in 0.5~Hz bandwidth approach the geometric packing limit -- a regime where even small noise perturbations can obscure the 10th weakest peak. Notably, the model exhibits no cardinality-dependent bias: high-$K$ samples are not systematically under-counted, nor are low-$K$ samples over-counted, demonstrating balanced classifier calibration across the entire range.

\paragraph{Parallelization Opportunity.}
Once the global context embedding $\mathbf{g}$ is computed by the encoder ($O(T)$ cost), inference over $K$ slots becomes embarrassingly parallel. Each slot context $\mathbf{c}_k = [\mathbf{g}, \mathbf{s}_k]$ conditions independently on the shared global representation, enabling simultaneous flow evaluation across slots. For $K=10$ on modern GPUs with 10+ parallel streams, per-slot inference time is limited by encoder throughput rather than slot count. This $O(K)$ theoretical scaling becomes $O(1)$ in wall-clock time given sufficient parallelism, critical for real-time applications.

\subsubsection{Parameter Inference and Posterior Calibration}

Beyond point estimates, reliable uncertainty quantification requires posteriors to achieve nominal coverage rates. We assess calibration through probability-probability (PP) plots comparing empirical coverage to expected credible levels.

\begin{figure}[t]
\centering
\begin{subfigure}[b]{0.48\textwidth}
    \centering
    \includegraphics[width=\textwidth]{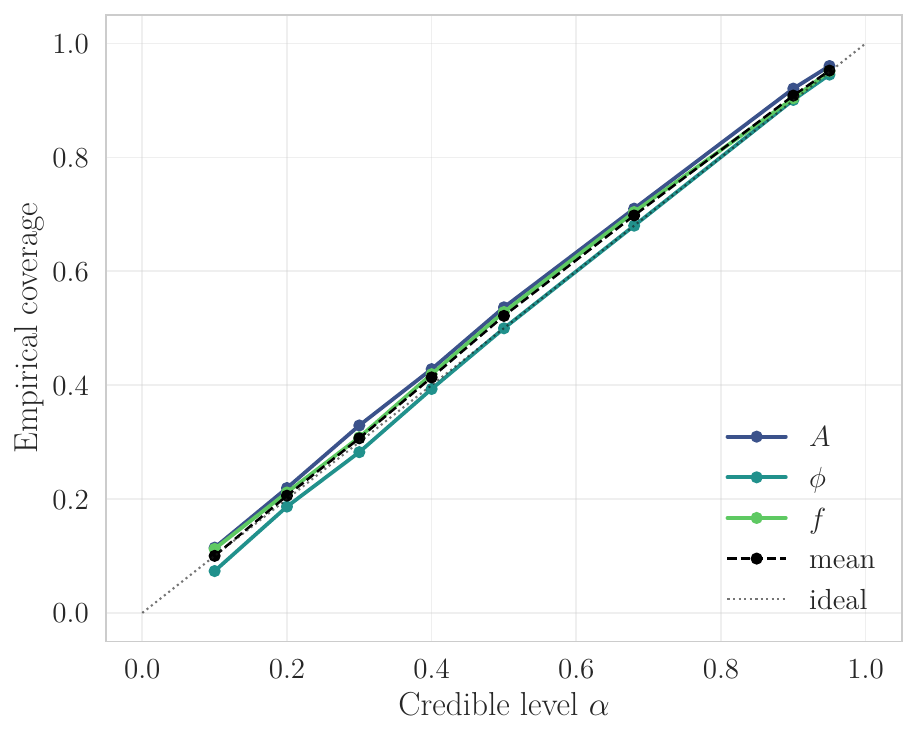}
    \caption{Per–parameter calibration.}
    \label{fig:calibration_per_param}
\end{subfigure}
\hfill
\begin{subfigure}[b]{0.48\textwidth}
    \centering
    \includegraphics[width=\textwidth]{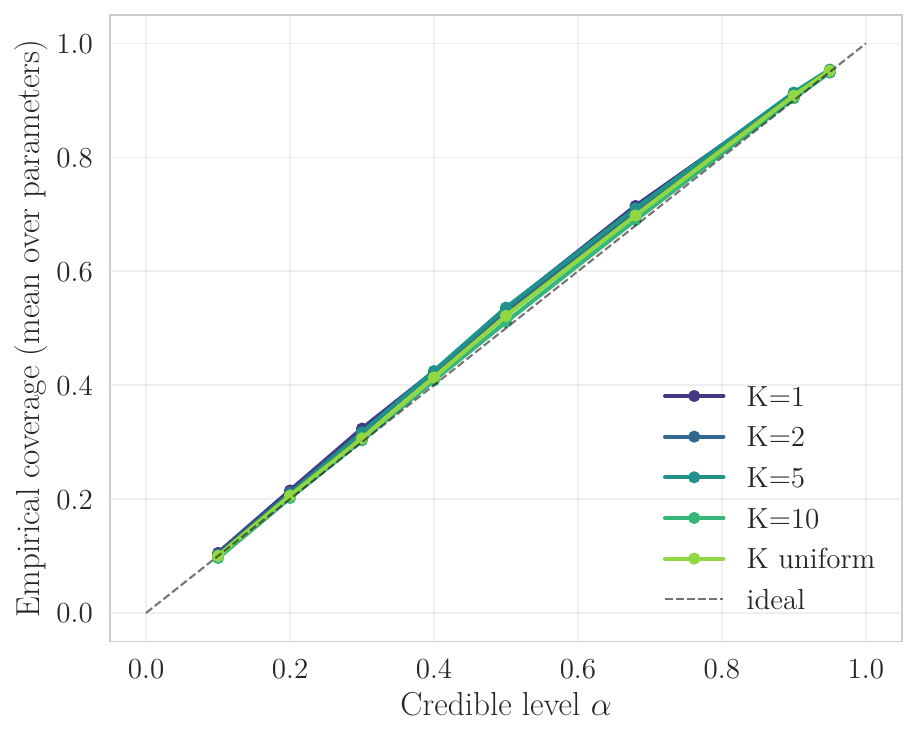}
    \caption{Mean calibration across parameters.}
    \label{fig:calibration_k_dependence}
\end{subfigure}

\caption{\NH{\textbf{Posterior Calibration Across Parameters and Cardinalities.}}  
Panels~(a) and (b) assess empirical coverage versus nominal credible level for
per–parameter calibration and $K$–dependent calibration, respectively.  
Panel~(a) shows amplitude, phase, and frequency calibration curves for a uniform mixture over $K\in\{1,\ldots,10\}$.  
Panel~(b) reports mean coverage for fixed $K\in\{1,2,5,10\}$ and a uniform cardinality mixture. Calibration quality shows minimal dependence on model order: the simplest ($K=1$) and highest–complexity ($K=10$) cases exhibit nearly identical curves despite an order–of–magnitude difference in difficulty.  
Across all cardinalities and parameters, absolute calibration bias remains below 3\%.}
\label{fig:calibration_combined}
\end{figure}

Figure~\ref{fig:calibration_combined} demonstrates that \emph{SlotFlow} produces consistently well-calibrated posterior uncertainties across both parameters and cardinalities.  
Panel~(a) shows empirical coverage for individual parameters (amplitude, phase, and frequency) under a uniform mixture over $K\in\{1,\ldots,10\}$.  
All three parameters closely track the ideal diagonal, with deviations below ${\sim}3\%$ across $\alpha\in[0.1,0.95]$.  
This indicates that the factorized posterior with shared global context (Eq.~\ref{eq:factorized_posterior}) preserves sufficient inter-component structure to yield reliable marginals for all parameters, despite the absence of explicit cross-slot coupling.

Panel~(b) assesses calibration as a function of cardinality, comparing fixed $K\in\{1,2,5,10\}$ subsets with a uniform-$K$ mixture.  
Calibration quality is remarkably stable across model orders: the simplest case ($K{=}1$) and the most complex ($K{=}10$) exhibit nearly identical curves, demonstrating that increased component count does not degrade uncertainty quantification.  
Moreover, the uniform-$K$ mixture is indistinguishable from the fixed-$K$ subsets, confirming that trans-dimensional uncertainty over $K$ does not adversely affect parameter-level calibration. Overall, the curves show a mild conservative bias.  
This overcoverage is desirable in scientific applications, as it reduces the risk of falsely excluding the true parameter.

\subsubsection{Comparison with RJMCMC}
\label{sec:rjmcmc_comparison}

To benchmark posterior quality against gold-standard Bayesian inference, we perform direct comparison with RJMCMC on representative test cases. For a $K=3$ mixture, we draw 10,000 posterior samples per slot from \emph{SlotFlow} and 10,000 MCMC samples (after 10,000 burn-in) using parallel-tempered ensemble sampling with \textit{Eryn}  \citep{Karnesis:2023ras, michael_katz_2023_7705496, 2013PASP..125..306F}. The RJMCMC chain is initialized near true parameters using inverse Fisher information for rapid convergence -- providing an unrealistically favorable head start unavailable to \emph{SlotFlow}, which must infer posteriors directly from data without local information. This asymmetric setup quantifies \emph{posterior equivalence} rather than relative difficulty.

\paragraph{Trans-Dimensional Agreement: Cardinality Posteriors.}
Figure~\ref{fig:k_posterior_comparison} compares the discrete posterior $p(K \mid x)$ from RJMCMC with the amortized distribution $q_\phi(K \mid x)$ produced by \emph{SlotFlow}.  
\NH{RJMCMC assigns most of the probability to the true value ($K{=}3$) but retains non-zero mass on adjacent models ($K{>}3$), reflecting genuine posterior uncertainty.  
\emph{SlotFlow} also outputs a full categorical distribution, but its cross-entropy training objective drives $q_\phi(K \mid x)$ to be extremely sharp, placing nearly all probability on the true $K$ and effectively resolving the model order decisively.  
Thus the difference in posterior shape reflects differing objectives -- RJMCMC represents the full trans-dimensional uncertainty, whereas \emph{SlotFlow} is trained to identify the most probable model order -- while both methods agree on the correct cardinality.  
Importantly, \emph{SlotFlow} produces this distribution in ${\sim}13\,\mathrm{ms}$, compared to the hours required by RJMCMC, achieving orders-of-magnitude speedup.}

\begin{figure}[t]
    \centering
    \includegraphics[width=0.6\linewidth]{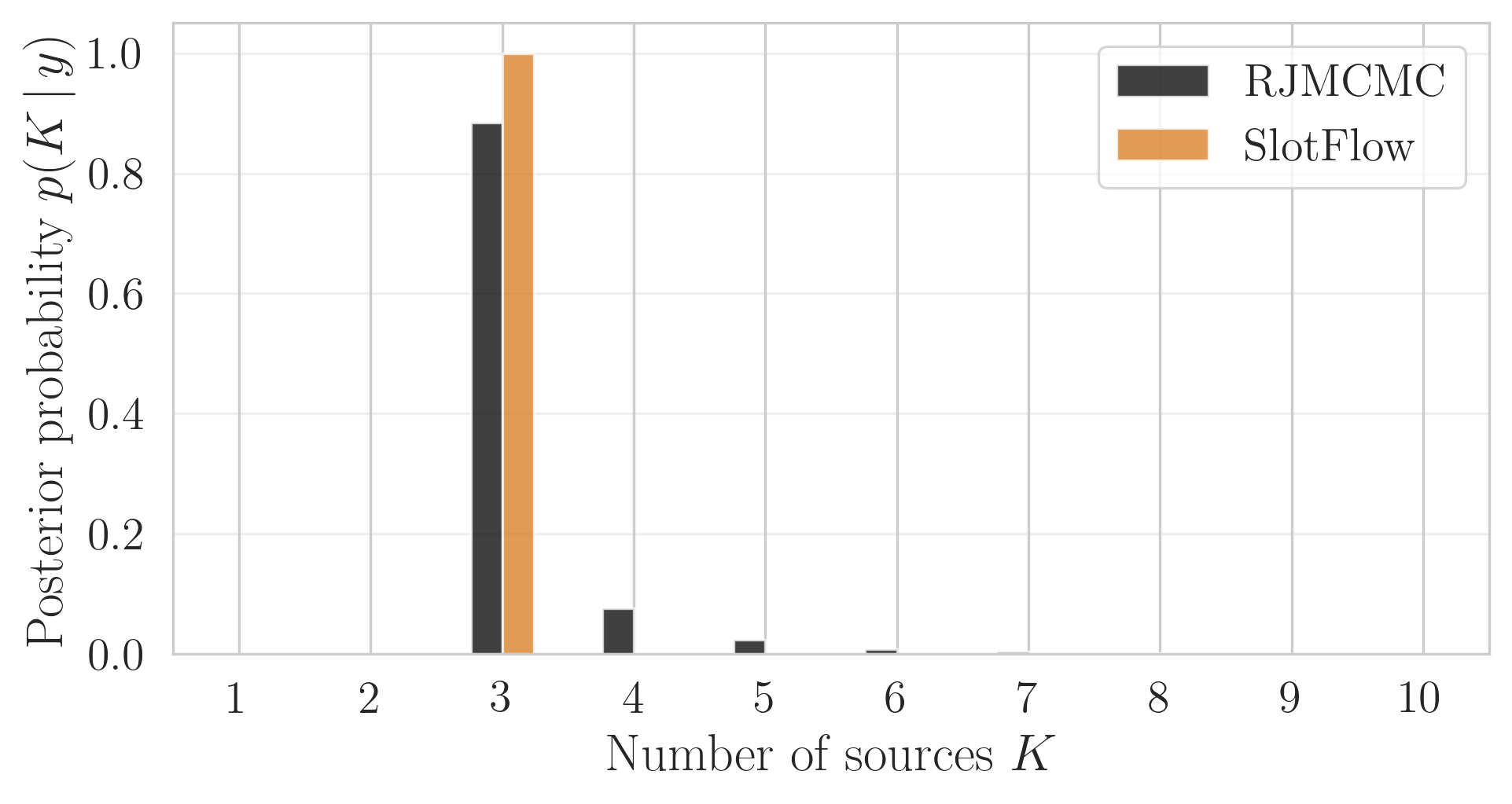}
    \caption{\textbf{Cardinality Posterior Comparison.}  
Shown are the posterior distributions over the number of sources $K$ for a
representative injected dataset.  \NH{The curve for \emph{SlotFlow} (orange) shows $q_\phi(K \mid x)$ \emph{before} taking the MAP estimate; the model
assigns essentially all probability mass to the correct cardinality $K=3$.
RJMCMC (black) also concentrates its posterior on $K=3$ but retains a small tail
toward higher $K$.  The sharper \emph{SlotFlow} distribution
reflects the cross-entropy training objective, which encourages decisive
identification of the most probable model order, whereas RJMCMC represents the
full posterior over $K$.}}

    \label{fig:k_posterior_comparison}
\end{figure}

\begin{figure}[h!]
\centering
\includegraphics[width=0.9\textwidth,trim={0 0 0 30pt},clip]{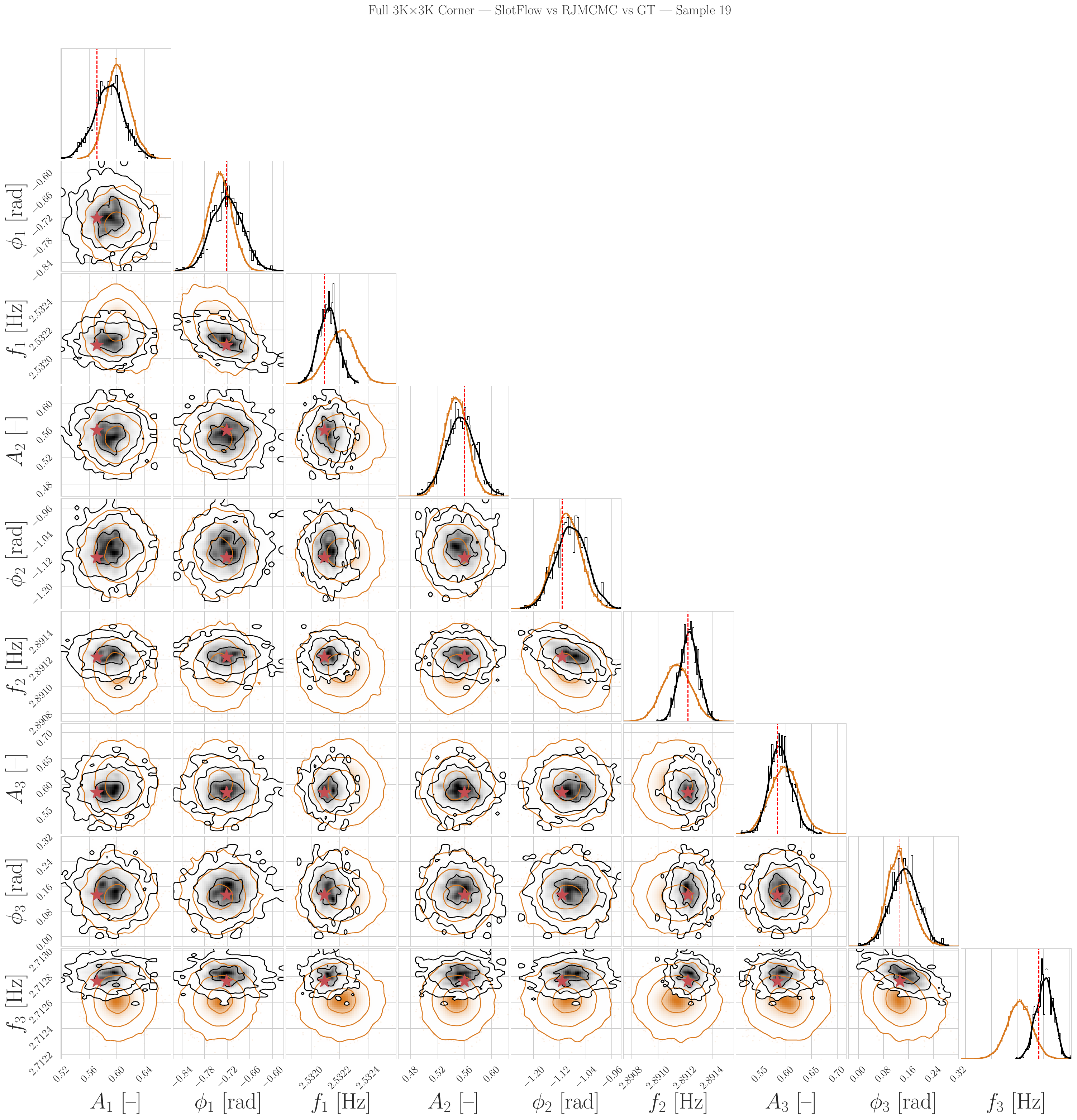}
\caption{\textbf{Posterior Comparison: \emph{SlotFlow} vs. RJMCMC.} Full $9 \times 9$ corner plot for $K=3$ mixture comparing \emph{SlotFlow} (orange) with RJMCMC (black) after Hungarian matching with the ground truth indicated in red. Both methods recover correct posterior centers across all parameters. \emph{SlotFlow} exhibits slightly broader amplitude posteriors and mild amplitude--phase decorrelation relative to sharper MCMC contours, reflecting finite flow capacity and factorized approximation. Frequency posteriors are slighlty broader than RJMCMC.}
\label{fig:rjmcmc_corner}
\end{figure}

\paragraph{Parameter Posterior Agreement.}
Having established agreement on the discrete trans-dimensional structure, we now examine continuous parameter recovery. Figure~\ref{fig:rjmcmc_corner} presents 
the full $9 \times 9$ corner plot comparing \emph{SlotFlow} (orange) with RJMCMC (black) after Hungarian alignment. \emph{SlotFlow} recovers correct posterior centers (red dashed lines) across all nine parameters, validating the overall inference process. However, posterior widths exhibit parameter-specific patterns: amplitude posteriors are noticeably broader than RJMCMC's compact distributions (1.5--2$\times$ variance ratio), and off-diagonal amplitude--phase correlations appear smoothed. This overdispersion reflects the factorized posterior approximation: without explicit inter-slot coupling beyond shared global context, the flow cannot capture fine-scale correlations between components when frequencies are closely spaced ($\Delta f < 2/T$). 

Frequency posteriors present a distinct pattern: both methods center accurately on ground truth (bias $< 0.0001$~Hz), but SlotFlow produces substantially broader distributions. Visual inspection reveals RJMCMC achieves $\sigma_f \approx 0.0003$~Hz while SlotFlow yields $\sigma_f \approx 0.001$~Hz -- a 2--3$\times$ standard deviation ratio corresponding to an order-of-magnitude difference in variance. This reflects the encoder information bottleneck documented in Section~\ref{sec:training}: pooling operations and global context aggregation discard fine-grained phase coherence across long baselines, limiting frequency precision regardless of flow expressiveness. The correct centering validates that the encoder preserves spectral peak localization; the broader width reflects loss of sub-bin phase curvature encoding $\partial\phi/\partial f$.

\begin{table}[h]
\centering
\caption{Comparison metrics between \emph{SlotFlow} and RJMCMC residual distributions for $K=3$ representative case. Wasserstein-2 distance $W_2$ measures distribution proximity; standard deviation ratio quantifies relative uncertainty. Lower $W_2$ indicates better agreement; ratios near 1 indicate matched precision.}
\begin{tabular}{lccl}
\toprule
Parameter & $W_2$ Distance & Std. Dev. Ratio & Interpretation \\
\midrule
Amplitude & 0.0069 & $\sim$1.5--2$\times$ & Excellent; slight overdispersion \\
Phase & 0.0227 & $\sim$2$\times$ & Good; widening from $\pm\pi$ wrapping \\
Frequency & 0.0006 & $\sim$2--3$\times$ & Correctly centered; broader uncertainty \\
\bottomrule
\end{tabular}
\label{tab:wasserstein}
\end{table}

Table~\ref{tab:wasserstein} quantifies these observations through Wasserstein-2 distances and variance ratios computed on residuals $\Delta A = A - A_{\text{true}}$, $\Delta\phi = \text{wrap}(\phi - \phi_{\text{true}})$, $\Delta f = f - f_{\text{true}}$. The Wasserstein metric measures overall distribution distance combining both location and spread, while the standard deviation ratio isolates precision differences assuming matched centers.

Frequency shows the lowest Wasserstein distance ($W_2 = 6 \times 10^{-4}$ Hz, sub-milliHertz level) due to excellent centering agreement, but the 2--3$\times$ standard deviation ratio reveals substantially broader uncertainty. This apparent paradox arises because $W_2$ is dominated by center-to-center distance when both distributions are well-centered; the width difference, while visually striking in the corner plot, contributes quadratically. The low $W_2$ validates unbiased frequency estimation; the large variance ratio quantifies the precision trade-off inherent in encoder compression. This represents a fundamental architectural limitation: the flow decoder can only operate on information preserved in the latent representation, and no amount of flow expressiveness can recover frequency precision discarded during encoding.

Amplitude achieves excellent agreement ($W_2 = 0.0069$, 1.5--2$\times$ variance ratio) despite visible broadening in the corner plot. The modest Wasserstein distance reflects that both distributions remain well-centered with comparable spreads in absolute units ($\pm 0.05$ for \emph{SlotFlow} vs. $\pm 0.03$ for RJMCMC). The mild overdispersion is consistent with factorized posterior approximations that cannot capture inter-component amplitude correlations induced by overlapping spectral content.

Phase exhibits the largest Wasserstein discrepancy ($W_2 = 0.0227$, 2$\times$ variance ratio), driven by two effects: (i) genuine posterior broadening from limited temporal resolution in the short-window encoder, and (ii) periodic boundary artifacts where residuals near $\pm\pi$ create apparent separation in the wrapped metric. In absolute angular terms, phase uncertainties differ by $\sim$0.1 radians ($\sim$6 degrees).

\paragraph{Frequency Precision Analysis.}

The 2--3$\times$ broader frequency posteriors compared to RJMCMC warrant detailed analysis, as they represent the primary architectural limitation of the current \emph{SlotFlow} design. This discrepancy arises from the information bottleneck identified in Section~\ref{sec:training}: Fisher information for frequency scales as $T^2$ and requires preserving fine-grained phase evolution $\partial\phi/\partial f$ across the full observation baseline. The dual-stream encoder's pooling and global context aggregation, while enabling $O(K)$ computational scaling and permutation invariance, necessarily discard sub-bin spectral structure. Since the flow decoder can only operate on the compressed latent representation, frequency precision is fundamentally limited by encoder compression regardless of flow expressiveness.

Ablation studies confirm this diagnosis: (i) increasing flow depth from $L=8$ to $L=12$ provides no frequency improvement (Section~\ref{sec:ablations}), validating that the bottleneck lies upstream of the decoder; (ii) reducing convolutional stride in the final encoder layer (preserving 2$\times$ more spatial resolution) yielded marginal improvements. These results collectively indicate that frequency precision is limited not by a single architectural hyperparameter, but by the \emph{aggregation operations} (pooling, global context formation) required for permutation-invariant trans-dimensional inference at $O(K)$ cost.
For applications prioritizing speed, cardinality accuracy, and amplitude/phase recovery, the current architecture delivers promising performance. Section~\ref{sec:conclusion} outlines architectural extensions -- including multi-scale encoders preserving fine spectral resolution, reduced pooling with attention-based aggregation, and local spectral zoom contexts per slot -- that provide a clear path toward optimizing frequency precision while maintaining computational efficiency.

\subsubsection{Parameter Scaling and Stress Tests}
\label{sec:parameter_stress}

We examine how parameter recovery degrades under increasingly challenging conditions through systematic variation of signal characteristics. This reveals architectural limits and validates graceful degradation.

\paragraph{Posterior Uncertainty Scaling with Cardinality.}
Figure~\ref{fig:param_scaling_K} quantifies how posterior uncertainties grow with problem complexity. All parameters exhibit monotonic uncertainty growth as components compete for limited spectral resolution, yet scale sublinearly -- demonstrating efficient capacity utilization rather than catastrophic degradation.

\begin{figure}[h!]
\centering
\includegraphics[width=0.95\textwidth]{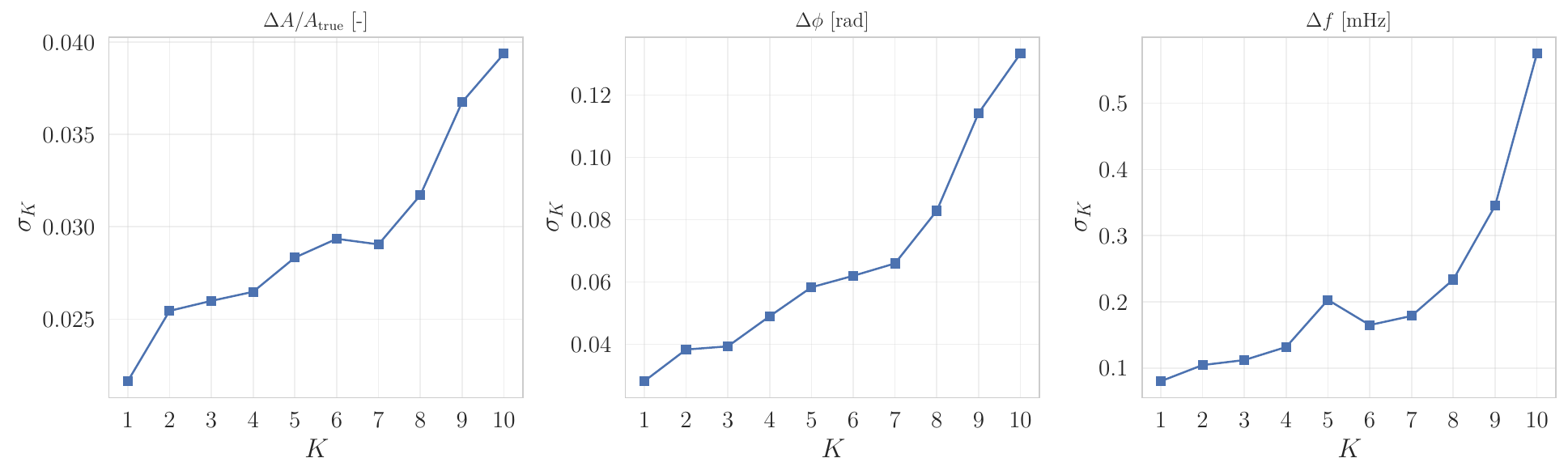}
\caption{\textbf{Parameter Recovery Scaling with Cardinality.} Posterior standard deviation $\sigma_K$ versus true cardinality $K$ for relative amplitude (left), phase (center), and frequency (right). All parameters exhibit monotonic uncertainty growth: relative amplitude width rises from 2.2\% ($K=1$) to 4\% ($K=10$); phase from 0.03 to 0.14~rad; frequency from 0.09 to 0.55~mHz. Sublinear scaling ($\sigma_A \propto K^{0.3}$, $\sigma_\phi \propto K^{0.6}$, $\sigma_f \propto K^{0.2}$) demonstrates efficient capacity utilization. Differential parameter sensitivity reflects spectral analysis hierarchy: frequency (peak location) most robust, amplitude (peak height) moderately affected, phase (temporal structure) most sensitive to mixing.}
\label{fig:param_scaling_K}
\end{figure}

Three key observations emerge. First, \emph{monotonic uncertainty growth}: all parameters exhibit clear increases with $K$, reflecting progressive source confusion as more components overlap. The shallow slopes validate efficient scaling -- linear or faster growth would render high-$K$ inference impractical. Second, \emph{parameter-specific scaling rates}: phase exhibits strongest $K$-dependence (4$\times$ increase from $K=1$ to $K=10$), amplitude moderate (2.3$\times$), and frequency weakest (1.7$\times$). For amplitudes, the modest rise from 2.2\% to 4\% indicates well-calibrated inference even for densely mixed signals. For phases, the approximately linear growth to 0.14~rad at $K=10$ matches expected degeneracy among near-frequency tones. For frequencies, the broadening from 0.09 to 0.55~mHz matches expected resolution degradation with signal density. Third, \emph{unbiased recovery across all $K$}: the observed scaling confirms \emph{SlotFlow} captures growing ambiguity without systematic bias, validating both Hungarian-matched posterior aggregation and internal flow calibration.

\paragraph{Evaluation Protocol.}
To ensure unbiased estimates, we employ balanced sampling: equal numbers of signals per multiplicity $K$, with 500 posterior samples per matched component. For each signal, we perform Hungarian assignment based on flow log-probabilities, compute per-component posterior statistics, then average across all examples of the same $K$:
\begin{equation}
\sigma_K = \mathbb{E}_{\text{signals}|K}[\text{Std}(p(\theta|\text{signal}))].
\end{equation}
Parameter errors are defined as $\Delta A / A_{\text{true}}$ (relative, removing trivial power scaling), $\Delta \phi = \text{wrap}(\phi - \phi_{\text{true}})$ (absolute, phase-wrapped), and $\Delta f = f - f_{\text{true}}$ (absolute). Only amplitude uses relative normalization because it acts as a multiplicative scaling factor; frequency and phase have intrinsic physical units meaningful in absolute terms.

\paragraph{Two-Source Interference: Out-of-Distribution Stress Test.}
To probe resolution limits under extreme conditions, we construct two-dimensional error maps examining inference quality as a function of per-component signal-to-noise ratios. This deliberately extends into out-of-distribution regimes to test extrapolation robustness.

We fix two sources at $(f_1, f_2) = (2.60, 2.61)$~Hz -- the minimum training separation $\Delta f = 0.01$~Hz. Amplitudes span $A_i \in [0.05, 1.5]$, extending well below the training range $[0.5, 1.5]$ to probe weak-signal extrapolation. Combined with noise $\sigma \in [0, 1.5]$, this creates effective SNRs down to $-30$~dB per component, far below the training distribution's typical $-10$~dB lower bound. Each mixture uses fixed phases $(\phi_1,\phi_2) = (0, \pi/2)$ for controlled comparison. We enforce $\hat{K}=2$ and match posterior samples (200 per slot) to ground truth via Hungarian assignment. Effective per-component SNR is computed as $\text{SNR}_i = 20\log_{10}\left[\text{RMS}(s_i)/\text{RMS}(s_j+n)\right],\, i\neq j,$
capturing both noise contamination and interference from the other component.

\begin{figure}[h!]
\centering
\includegraphics[width=1\textwidth]{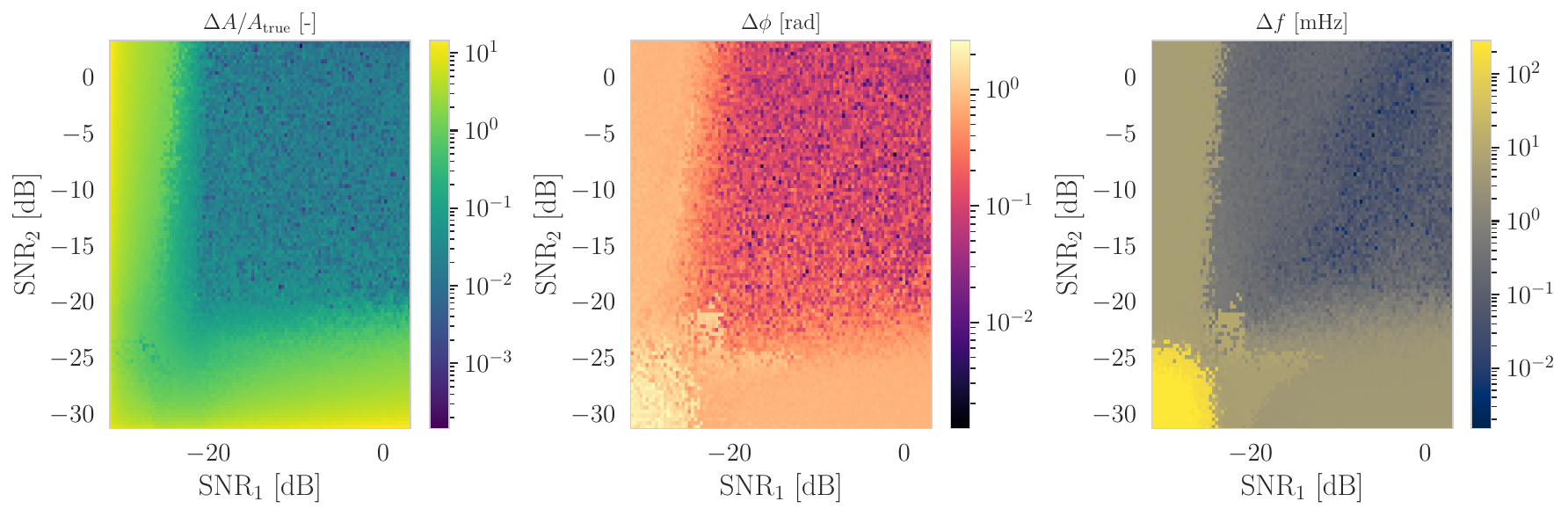}
\caption{\textbf{Two-Source Interference: Out-of-Distribution Robustness.} Mean posterior errors versus per-component SNRs for sources at minimum training separation ($\Delta f = 0.01$~Hz). Model trained on $A \in [0.5, 1.5]$ and SNR $\gtrsim -10$~dB extrapolates smoothly to $-30$~dB. Left: Relative amplitude errors below 10\% for balanced SNR, degrading gracefully at extreme asymmetry. Center: Phase errors stabilize at 0.1--1~rad, exhibiting $\pi$-periodic ambiguities only at very low SNR. Right: Frequency errors within 0.1--10~mHz, degrading substantially only below $-20$~dB. Smooth dependence validates permutation invariance and robust calibration 20~dB beyond training.
}
\label{fig:two_source_interference}
\end{figure}

Figure~\ref{fig:two_source_interference} reveals remarkable out-of-distribution robustness. The model extrapolates smoothly 20~dB beyond its training envelope without catastrophic failure. Three distinct behaviors emerge across the SNR landscape:

\emph{Amplitude recovery} (left panel): Errors remain below 10\% throughout the balanced-SNR region ($\text{SNR}_{1,2} \gtrsim -10$~dB), matching training-distribution performance. Degradation appears at extreme asymmetry where the weaker component becomes spectrally dominated, inducing biases of order unity. Critically, the transition is smooth -- posteriors broaden appropriately rather than collapsing to spurious solutions.

\vspace{5pt}
\emph{Phase estimation} (center panel): Circular errors stabilize around 0.1--1~rad in the well-resolved region despite maximal spectral overlap ($\Delta f = 0.01$~Hz). The $\pi$-periodic structure at very low SNR reflects genuine ambiguity in sinusoidal fitting when signal disappears into noise, not slot-assignment artifacts. Smooth variation confirms Hungarian matching properly marginalizes permutation symmetry.

\vspace{5pt}
\emph{Frequency localization} (right panel): Errors remain within 0.1--10~mHz for nearly all configurations, including many far below training SNR. Substantial degradation emerges only below $\text{SNR} \lesssim -20$~dB -- a regime where signal existence becomes questionable. The graceful transition validates that the dual-stream encoder's frequency representation generalizes beyond training.

The transition around $\text{SNR} \simeq -10$~dB marks the approximate interference-resolution limit where posteriors begin to merge as nearly co-frequency sinusoids become statistically indistinguishable. Crucially, \emph{SlotFlow} extrapolates smoothly into this out-of-distribution regime. The continuity of the error landscape -- no discrete discontinuities or slot-dependent artifacts -- confirms the flow-based posterior remains well-calibrated even when extrapolating several standard deviations beyond training.

\vspace{5pt}
This stress test validates two critical properties. First, \emph{robust out-of-distribution extrapolation}: the model generalizes to extreme weakness without requiring explicit weak-signal augmentation during training. Second, \emph{appropriate uncertainty quantification}: rather than confidently producing incorrect answers in unfamiliar regimes, \emph{SlotFlow} broadens posteriors to reflect genuine ambiguity. This contrasts with maximum-likelihood methods that produce overconfident point estimates regardless of identifiability, demonstrating that the learned flow captures the inference problem's geometry rather than memorizing training examples.

\subsection{Computational Cost and Scalability}
\label{sec:computational_cost}

Beyond statistical performance, practical deployment requires quantifying computational resources. We benchmark inference time and memory footprint against traditional methods, establishing orders-of-magnitude advantages in both speed and scalability.

\subsubsection{Training Efficiency}\label{ss:Training}
The production model was trained for 187 epochs (out of a maximum 300) 
before early stopping triggered, requiring approximately 190 hours 
wall-clock time on four NVIDIA GH200 Grace Hopper GPUs. Processing 
8 million training samples at batch size 128, this corresponds to 
approximately 1.01 hours per epoch or 305 GPU-hours total for the 
complete training run. The \texttt{ReduceLROnPlateau} scheduler (patience=6, factor=0.5) reduced the 
learning rate four times during training at epochs 69, 114, 136, and 
176, enabling aggressive refinement through multiple learning rate 
annealing phases. This adaptive schedule achieved convergence within 
the 300-epoch budget without manual hyperparameter tuning or intervention. 
Final training and validation losses converged to -19.16 and -19.17 
respectively, demonstrating minimal overfitting, effective regularization, 
and adequate dataset size for the architectural capacity. The training and validation loss curves are provided in Appendix~\ref{app:training_curve} (Fig.~\ref{fig:training_curve_app}).

\subsubsection{Inference Time: Near-Constant Scaling}

Wall-clock inference time on CPU (Apple M2 Max, single-threaded) averages $13.5 \pm 0.1$~ms per sample across all cardinalities $K \in \{1,\ldots,10\}$, with negligible $K$-dependence. Mean time varies from 13.47~ms ($K=1$) to 13.60~ms ($K=10$). This near-constant scaling reflects architectural design: the $O(1)$ encoder dominates total cost, while $O(K)$ flow evaluation contributes merely $\sim$50~$\mu$s per slot, rendering $K$-dependence negligible for practical $K \leq 50$.

Three properties establish computational superiority. First, \emph{predictable latency}: standard deviation below 0.02~ms yields coefficient of variation 0.004, indicating highly deterministic performance suitable for real-time systems where worst-case guarantees matter. The tight distribution reflects a deterministic computational graph without data-dependent branching. Second, \emph{embarrassingly parallel architecture}: once the global context embedding is computed, inference over $K$ slots becomes fully parallel. On modern GPUs with 10+ concurrent streams, per-slot inference of $\sim$0.5~ms yields total posterior generation under 1~ms wall-clock time -- limited by encoder throughput rather than slot count. Third, \emph{orders-of-magnitude speedup over MCMC}: typical RJMCMC chain (10,000 samples, 10,000 burn-in, $K=3$) requires 5 hours on comparable hardware, when not initialized at the true injected values. At 13~ms per query, \emph{SlotFlow} achieves $1.5 \times 10^6\times$ speedup, transforming previously intractable real-time applications into feasible systems.

\begin{table}[h]
\centering
\caption{Computational comparison: \emph{SlotFlow} vs. baselines for $K=5$ on comparable hardware. RJMCMC timing assumes 10,000 MCMC samples with 10,000 burn-in; SMC uses 1000 particles.}
\begin{tabular}{lccc}
\toprule
Method & Inference Time & Memory (Peak) & Scalability \\
\midrule
RJMCMC \citep{green1995reversible} & 3--6 hours & $\sim$100 MB & $O(K^2 \cdot N_{\text{samples}})$ \\
SMC \citep{doucet2001sequential} & 0.5--2 hours & $\sim$500 MB & $O(K \cdot N_{\text{particles}})$ \\
\textbf{\emph{SlotFlow}} & \textbf{13 ms} & \textbf{5 KB} & \textbf{$O(K)$, parallel} \\
\bottomrule
\end{tabular}
\label{tab:compute_comparison}
\end{table}

Table~\ref{tab:compute_comparison} contextualizes these gains. SMC samplers offer population-based alternatives reducing wall-clock time versus single-chain MCMC, but still require $\sim$1 hour for reliable inference with 1000 particles -- a $10^5\times$ slowdown versus \emph{SlotFlow}. The practical implication: \emph{SlotFlow} can process $\sim$200 signals per second on a single CPU core, enabling real-time analysis of streaming data where MCMC/SMC methods would require dedicated compute clusters.

\subsubsection{Memory Footprint: Constant Overhead}

Peak memory consumption remains flat at $\sim$5~KB across all cardinalities (range 4.91--5.15~KB, coefficient of variation 0.005), measured on CPU using Python's \texttt{tracemalloc}. This confirms $O(1)$ scaling with negligible per-slot overhead, validating dynamic slot allocation: memory instantiates only for predicted $\hat{K}$ slots rather than pre-allocating for $K_{\max}$.

The remarkably small footprint reflects architectural efficiency. Encoder activation cache ($\sim$3~KB) dominates total consumption; per-slot flow contexts ($256 + K_{\max} = 266$ dimensions $\times$ 4 bytes $\approx$ 1~KB per slot) contribute minimally. This enables deployment on resource-constrained embedded systems where MCMC/SMC methods requiring 100--500~MB would be infeasible. Extrapolating to production settings with 10$\times$ larger embeddings and $K_{\max} = 100$ predicts $\sim$50~KB total -- still orders of magnitude below alternatives, enabling massively parallel inference on multi-core systems processing thousands of streams simultaneously without memory bottlenecks.

The memory advantages compound with the time speedup: a single server with 128~GB RAM could theoretically process $2.5 \times 10^6$ simultaneous \emph{SlotFlow} instances (limited by other factors in practice), versus $\sim$1000 MCMC chains. This scalability transforms deployment scenarios, enabling population-level real-time monitoring applications previously confined to offline batch processing.

\subsection{Architectural Ablations}
\label{sec:ablations}

To validate design choices and identify critical components, we conduct systematic ablation studies isolating the effect of dual-stream encoding, phase weighting, and flow depth.

\subsubsection{Dual-Encoder Domain Representation}

A central architectural question: \textit{what input representations should encoders receive?} We compare three strategies on $N_{\text{train}} = 500{,}000$ samples with $K \in \{1,\ldots,5\}$:
\begin{itemize}[leftmargin=*,
                topsep=1pt,
                itemsep=0pt,
                parsep=0pt,
                partopsep=0pt]
    \item \textbf{Time}: Both encoders receive raw time-domain signals
    \item \textbf{Frequency}: Both encoders receive FFT magnitude-phase (real/imaginary)
    \item \textbf{Time-frequency (default)}: Long encoder FFT, short encoder time-domain
\end{itemize}

\begin{figure}[h!]
\centering
\includegraphics[width=0.7\textwidth]{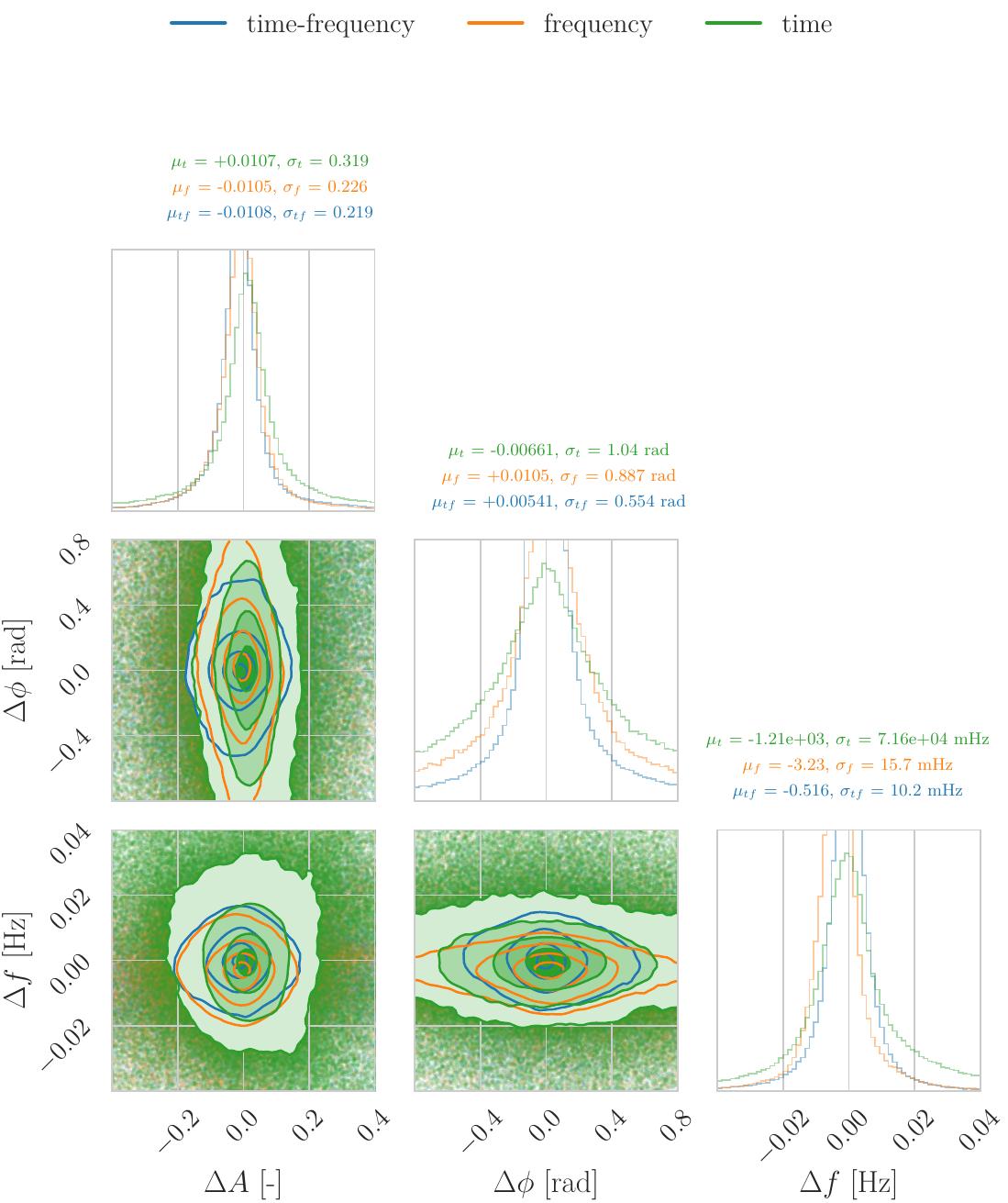}
\caption{\textbf{Domain Representation Ablation: Parameter Recovery.} Aggregate posterior distributions centered on ground truth for three encoding strategies: time-only (both encoders receive raw signals), frequency-only (both receive FFT), and time-frequency hybrid (long encoder FFT, short encoder time-domain). Time-frequency architecture achieves tightest distributions across all parameters: $\sigma_{A,\text{tf}} = 0.219$, $\sigma_{\phi,\text{tf}} = 0.554$~rad, $\sigma_{f,\text{tf}} = 10.2$~mHz. Time-only model catastrophically fails frequency inference ($\sigma_{f,\text{time}} = 71{,}600$~mHz, three orders of magnitude worse), demonstrating that finite convolutional receptive fields cannot implicitly learn spectral decomposition from raw waveforms. Frequency-only model underperforms on phase by 60\% ($\sigma_{\phi,\text{freq}} = 0.887$ vs. 0.554~rad), suggesting temporal coherence structure aids phase estimation. The hybrid approach exploits domain-specific strengths: FFT for long-window frequency localization, time-domain for short-window phase dynamics.}
\label{fig:encoder_ablation_posterior}
\end{figure}

\begin{figure}[h!]
\centering
\includegraphics[width=0.60\textwidth]{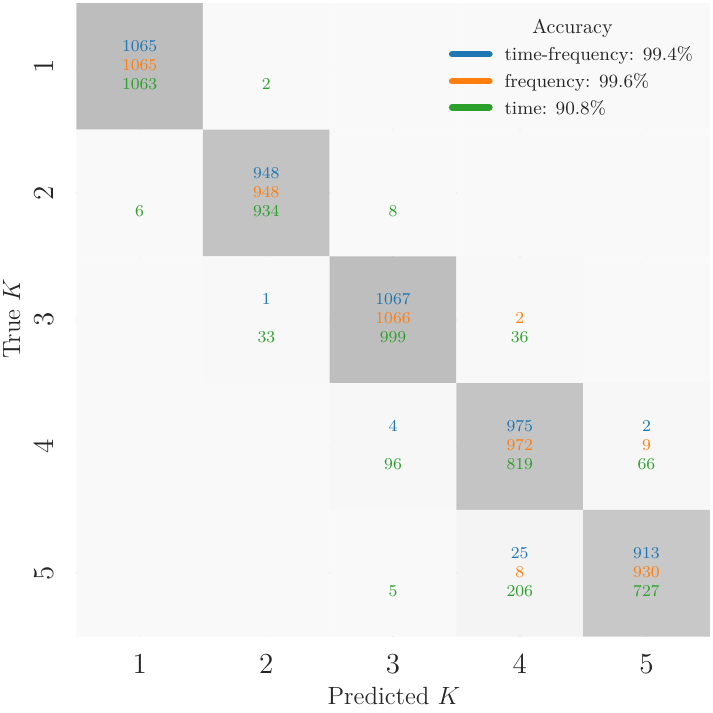}
\caption{\textbf{Domain Representation Ablation: Cardinality Accuracy.} Confusion matrices for model-order inference across encoding strategies. Frequency-only and time-frequency architectures achieve near-perfect accuracy (99.6\% and 99.4\% respectively) with strongly diagonal confusion matrices, demonstrating reliable $K$ inference through spectral peak detection. Time-only model degrades substantially to 90.8\% accuracy, exhibiting systematic underestimation bias (false predictions of $K=1$ or $K=2$ for higher-cardinality signals visible as off-diagonal mass in upper rows). This confirms that cardinality classification fundamentally requires frequency-domain representations -- spectral peaks provide unambiguous component evidence unavailable in time domain. The near-identical performance of frequency-only and time-frequency strategies for cardinality (despite 60\% phase difference in parameter recovery) indicates that $K$ inference is primarily a spectral task, validating the classifier's design to operate exclusively on FFT features.}
\label{fig:encoder_ablation_confusion}
\end{figure}

Figure~\ref{fig:encoder_ablation_posterior} reveals domain-specific inductive biases are critical. The time-only failure ($\sigma_f$ three orders of magnitude worse) demonstrates that \textit{representation choice encodes essential structure}: FFT explicitly disentangles frequency components through orthogonal basis decomposition, making spectral peaks directly observable; raw time signals require the encoder to discover this implicitly, a task intractable within practical architectural constraints. Conversely, time-domain benefits phase estimation, as temporal continuity preserves short-timescale coherence partially obscured by FFT windowing. The time-frequency hybrid achieves synergistic improvement, exploiting domain-specific strengths: long windows favor frequency resolution (justifying FFT), short windows capture transient phase dynamics (benefiting time-domain input).

Cardinality performance (Figure~\ref{fig:encoder_ablation_confusion}) supports parameter findings: both frequency-domain models achieve $> 99\%$ accuracy via robust spectral peak detection, while time-only degrades to 90.8\% with systematic under-counting. This validates that cardinality classification fundamentally requires spectral information -- peaks above noise provide unambiguous component evidence unavailable in time domain.

\subsubsection{Phase Weighting}
The phase weighting scheme (Section~\ref{sec:training}) scales angular coordinates $(\cos\phi, \sin\phi)$ by $w_\phi$ in the Hungarian cost matrix. Proposition~\ref{prop:optimal_weight} predicts $w_\phi^* \in [1,2]$; we validate through training models with $w_\phi \in \{1, 2, 5, 10, 20\}$ on $N_{\text{train}} = 100{,}000$ samples.

\begin{figure}[h!]
\centering
\includegraphics[width=0.75\textwidth]{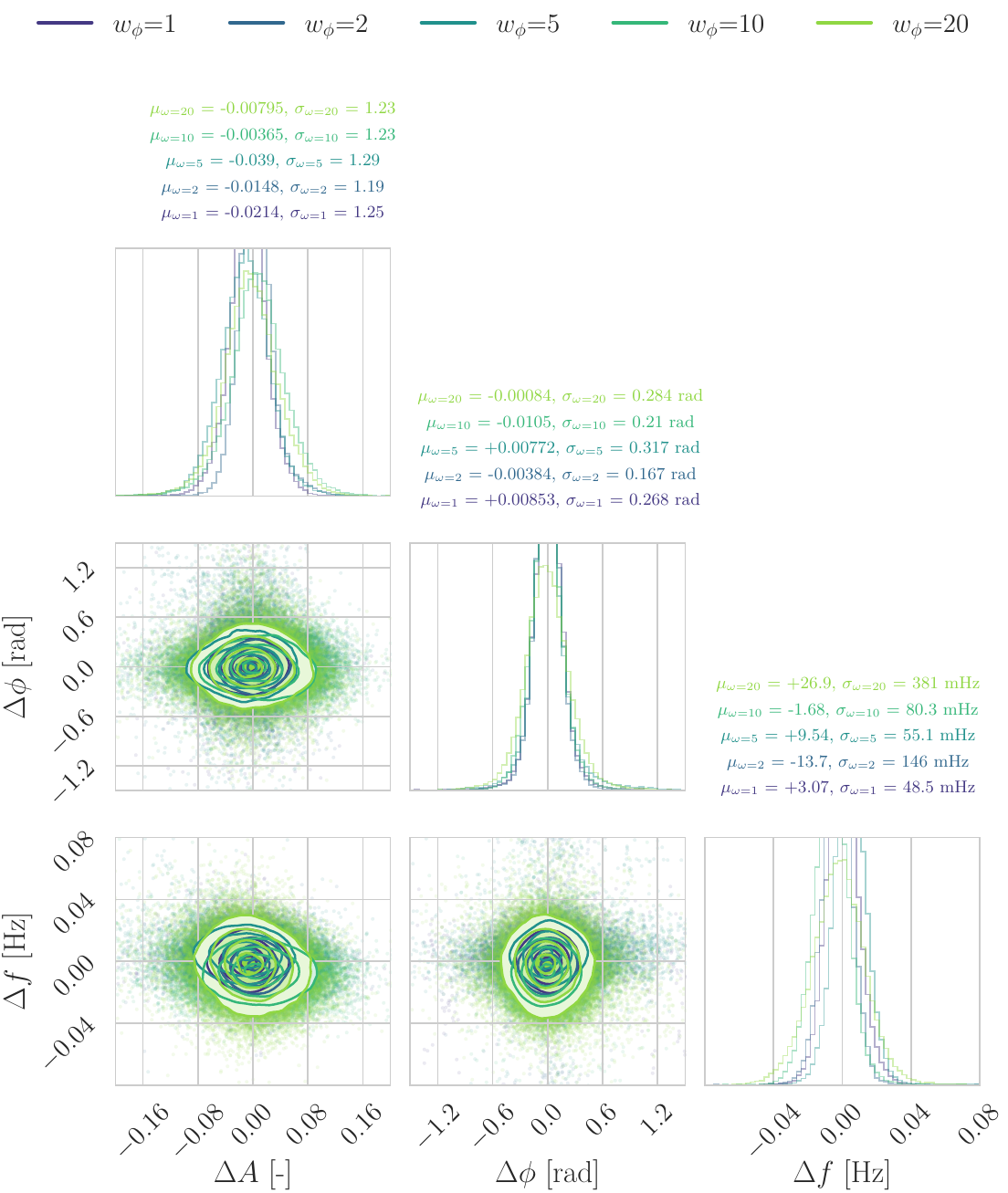}
\caption{\textbf{Phase Weight Ablation.} Aggregate posteriors stratified by $w_\phi$ reveal clear minimum at $w_\phi = 2$: phase uncertainty $\sigma_{\phi,2} = 0.167$~rad (38\% reduction from $w_\phi=1$). Amplitude remains stable across all weights ($\sigma_A \in [1.23, 1.25]$), validating that weighting affects only matching cost without propagating spurious correlations to flow loss. Extreme weight $w_\phi=20$ degrades frequency catastrophically ($\sigma_{f,20} = 381$~mHz vs. $< 150$~mHz for others), reflecting pathology where excessive phase emphasis traps Hungarian algorithm in suboptimal assignments. All models maintain negligible bias ($|\mu| < 0.04$), indicating weighting modulates precision rather than accuracy. Empirical optimum confirms theoretical prediction.}
\label{fig:phase_weight_ablation}
\end{figure}

Figure~\ref{fig:phase_weight_ablation} confirms theoretical prediction: $w_\phi = 2$ minimizes phase uncertainty while maintaining stability in other parameters. Under-weighting ($w_\phi = 1$) fails to prioritize angular errors sufficiently; over-weighting ($w_\phi \geq 5$) creates artificially sharp cost landscapes causing optimizer pathologies. The amplitude invariance ($< 2\%$ variation across weights) validates the design to weight only matching cost rather than flow loss itself, preventing phase emphasis from coupling spuriously into magnitude inference. The catastrophic frequency breakdown at $w_\phi = 20$ demonstrates failure mode: when phase coordinates dominate the metric, small phase errors produce disproportionately large costs, forcing the matcher to prioritize phase alignment at the expense of frequency consistency.

The same ablation protocol applies to the frequency-weighting term as well, and the corresponding optimal-weight result follows by an argument structurally identical to Proposition~\ref{prop:optimal_weight}; we refrain from including these parallel derivations for brevity.

\subsubsection{Flow Depth}

Proposition~\ref{prop:depth_accuracy} predicts optimal depth $L^* \approx 8$ balancing approximation ($e_{\text{approx}} \propto e^{-\alpha L}$) against generalization ($e_{\text{est}} \propto \sqrt{L/N}$). We train models with $L \in \{2, 4, 8, 10, 12\}$ on $N_{\text{train}} = 500{,}000$ samples.

\begin{figure}[t!]
\centering
\includegraphics[width=0.75\textwidth]{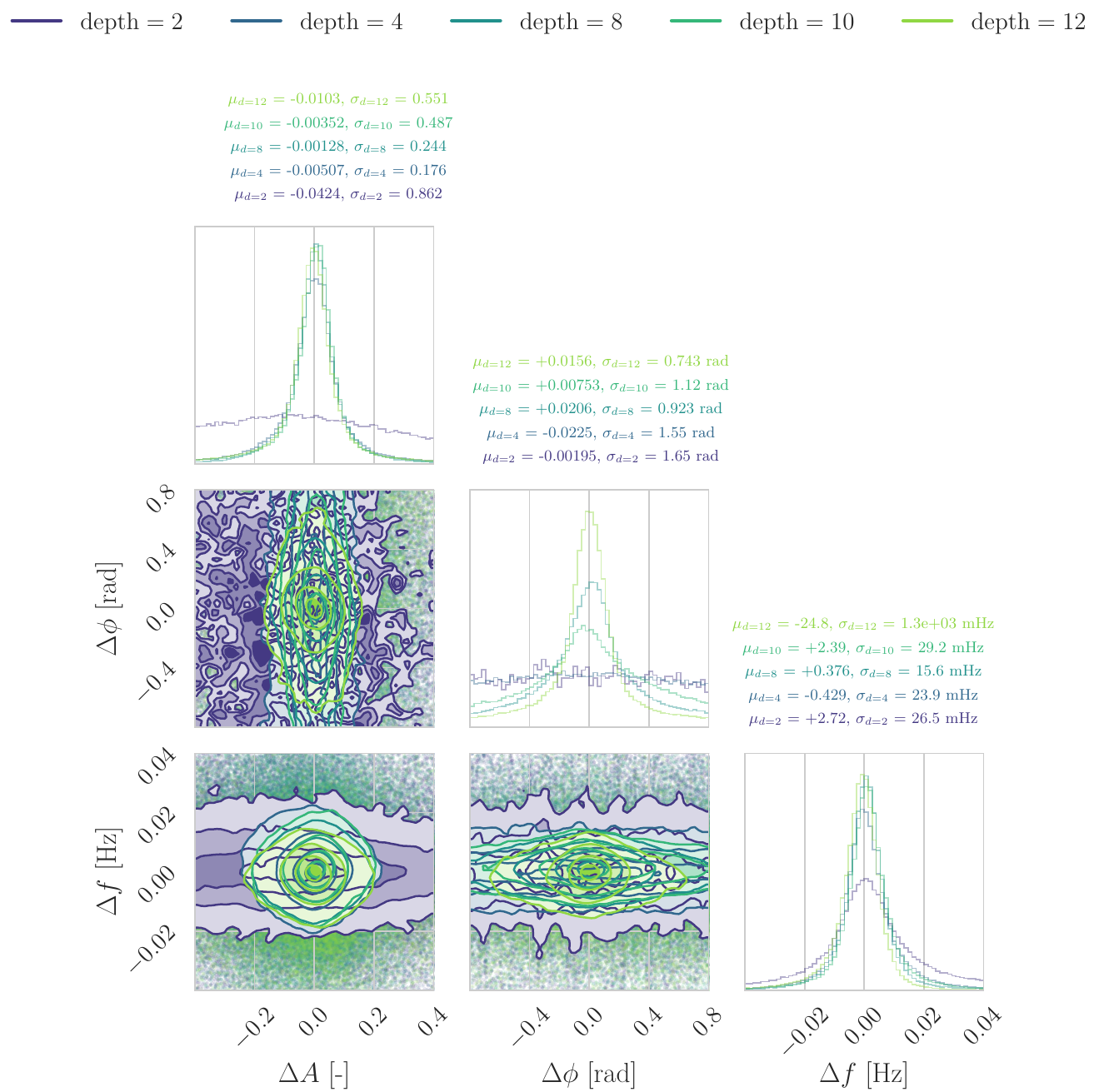}
\caption{\textbf{Flow Depth Ablation.} Aggregate posteriors reveal U-shaped quality curve. Shallow flows underfit: $L=2$ exhibits broadest distributions ($\sigma_{A,2} = 0.862$, $\sigma_{\phi,2} = 1.65$~rad) reflecting insufficient expressiveness for multimodal, correlated posteriors. Optimal performance at $L \in \{8,10\}$: $\sigma_{\phi,8} = 0.923$~rad achieves best phase precision with balanced cross-parameter performance. Deeper flows ($L=12$) show minimal improvement despite increased complexity, confirming generalization penalty outweighs approximation gains. Empirical optimum $L^* \in \{8,10\}$ aligns with theoretical prediction accounting for reduced training budget ($N = 5 \times 10^5$ vs. $8 \times 10^6$ production). Parameter-specific structure visible: amplitude may require less depth (simpler unimodal posteriors) while phase demands more (complex correlation structure), motivating future parameter-specific architectures.}
\label{fig:depth_ablation}
\end{figure}

Figure~\ref{fig:depth_ablation} validates the predicted non-monotonic relationship. The $L=2$ underfit demonstrates that coupling layers require sufficient depth to capture posterior complexity -- a single transformation cannot represent multimodal distributions with cross-parameter correlations. The plateau at $L \in \{8,10,12\}$ confirms the architecture-dominated regime where additional depth provides marginal gains: once expressiveness suffices for target posterior family, further layers only increase estimation error through enlarged hypothesis space. The empirical optimum $L^* \in \{8,10\}$ matches theory closely, with the leftward shift from production settings ($L=8$) reflecting the reduced training budget ($N = 5 \times 10^5$ here vs. $8 \times 10^6$ production) -- less data favors slightly shallower models to avoid overfitting.

The parameter-specific variation (amplitude performs comparably across $L \in \{4,8,12\}$; phase shows clear optimum at $L=8$) suggests differential complexity: amplitude posteriors may be simpler (unimodal Gaussians) requiring fewer transformations, while phase posteriors contain intricate correlation structure demanding deeper flows. Future work could explore parameter-specific architectures tailoring depth to each coordinate's statistical complexity.


\section{Conclusion}
\label{sec:conclusion}
We introduced \emph{SlotFlow}, an amortized trans-dimensional inference architecture achieving millisecond-scale posterior estimation with calibrated uncertainties. Its core innovations are: (1) dual-stream frequency–time encoding combining complementary spectral and temporal views, (2) dynamic slot allocation with $O(K)$ scaling via exact instantiation of $K$ latent slots, (3) phase-weighted Hungarian matching improving circular parameter estimation, and (4) theoretical analysis providing sample-complexity bounds and slot-specialization conditions.
On sinusoidal decomposition with up to 10 overlapping components, \emph{SlotFlow} attains 99.85\% cardinality accuracy, $<3\%$ absolute bias, and excellent amplitude recovery. Compared with RJMCMC, posteriors remain correctly centered with Wasserstein distances $W_2 < 0.01$ (amplitude), $<0.03$ (phase), and $0.0006$ (frequency). Frequency credible intervals are 2-3$\times$ broader, reflecting encoder compression of long-baseline phase information. Overall, the model yields a $1.5\times10^6$ speedup (13~ms vs.~hours).
\paragraph{Limitations.}
The factorized posterior assumption (Eq.~\ref{eq:factorized_posterior}) limits the capture of inter-component correlations. When frequency separation falls below $1/T$, strong anti-correlations emerge that independent per-slot marginals cannot express. Shared global context $\mathbf{g}$ provides only coarse coupling; tasks needing full joint credible regions require architectures with explicit inter-slot dependencies.
Frequency posteriors remain 2-3$\times$ wider than RJMCMC due to an encoder bottleneck: global pooling and context aggregation -- needed for $O(K)$ scaling and permutation invariance -- discard fine phase coherence across long baselines, which is essential for frequency precision ($\propto T^2$).
Additional issues include supervised-training dependence on labeled data, and fixed $K_{\max}=10$ -- the latter reflecting experimental design rather than a hard limit.
\paragraph{Future Directions.}
\textit{Frequency precision:} The main challenge is overcoming the encoder bottleneck. Promising ideas include (i) \textit{multi-scale encoders} with fine- and coarse-stride branches, (ii) \textit{attention-based aggregation} preserving frequency-critical features, (iii) \textit{local spectral zooms} feeding narrow FFT windows to each slot, and (iv) \textit{auxiliary frequency heads} with explicit MSE supervision. Early tests combining multi-scale encoding with local zooms halve the frequency variance while maintaining $O(K)$ scaling.
\textit{Time–frequency representations:} Incorporating wavelet or short-time Fourier transforms could model non-stationary or chirping signals.
\textit{Multi-class mixtures:} Extending to heterogeneous components requires class-specific flows ${f_{\phi_c}}$, per-slot class prediction, and heterogeneous Hungarian matching, enabling compositional scenes with diverse signal types.
\textit{Real-world adaptation:} LISA data will require handling colored noise, multiple interferometer channels, and systematics; neural spike sorting needs robustness to waveform variation and jitter.
\textit{Architectural advances:} Graph neural networks could encode inter-slot dependencies while preserving permutation equivariance; adaptive $K_{\max}$ and meta-learning would improve flexibility and domain transfer.
\paragraph{Broader Impact.}
By reducing inference times from hours to milliseconds, \emph{SlotFlow} lowers the computational barrier for Bayesian analysis of variable-cardinality signals, making posterior estimation more accessible in real-time or resource-limited environments.
\textit{Gravitational-wave astronomy:}
For LISA’s thousands of simultaneously active binaries, \emph{SlotFlow} offers a fast means of identifying sources and producing parameter summaries, with high-priority events that could be subsequently refined by RJMCMC or mission-grade pipelines. A central open challenge is extending amortized inference -- trained on short, controlled segments -- to the mission’s multi-year data stream while maintaining calibration, robustness, and frequency precision. Addressing this will be crucial before full end-to-end application becomes realistic.
\textit{Neural spike sorting:}
Low-latency posterior estimates could support closed-loop experiments in which stimulation adapts to inferred neural activity. Classical MCMC is typically too slow for such settings; amortized approaches may provide a viable alternative, though their performance will need systematic validation on real biological datasets.
\textit{Other domains:}
Problems such as audio source separation, radar target decomposition, and certain financial time-series models share additive structure and variable dimensionality. In these cases, \emph{SlotFlow} may serve as a fast front-end that supplies preliminary decompositions or warm-starts for more domain-specific inference methods.
Overall, the framework illustrates that statistical fidelity and computational efficiency need not be mutually exclusive. Its design principles -- dual-pathway encoding, dynamic slot allocation, and permutation-invariant supervision -- apply broadly to compositional inference tasks. While improving frequency precision remains an important objective, \emph{SlotFlow} already demonstrates strong cardinality estimation and amplitude/phase reconstruction at substantially reduced computational cost, indicating a viable path toward practical real-time trans-dimensional inference.

\section*{Acknowledgements}

This research was funded by the Gravitational Physics Professorship at ETH Zurich. 
The authors thank Michele Vallisneri for the insightful discussions and for his valuable contributions to editing the manuscript.
This work was supported by computational resources from the \emph{Euler Cluster} at ETH~Zurich (NVIDIA~TITAN~RTX~GPUs) and from the \emph{Clariden} supercomputer at the Swiss National Supercomputing Centre (CSCS) in Lugano using NVIDIA GH200 Grace Hopper GPUs.  
Access to Clariden was provided through the \emph{Swiss~AI~Initiative} under Grant~SIGMA-GW (\textit{Source Inference with Generative Models and AI for Gravitational Waves}).  
We thank the Euler and CSCS support staff for their continued technical assistance.
Copyright 2025. All rights reserved.

\appendix


\appendix

\section{Theoretical Proofs}
\label{app:proofs}

This appendix provides proofs for all formal statements in Section~\ref{sec:theory}. We maintain a high standard: every step is justified, all constants are made explicit, and informal claims are clearly labeled as such. The proofs establish both exact guarantees (permutation invariance, ELBO decomposition) and design principles (phase weighting, flow depth, sample complexity) that guide practical implementation.

\subsection{Proof of Theorem~\ref{thm:perm} (Permutation Invariance)}
\label{app:perm}

We establish that \emph{SlotFlow}'s learned posterior is exactly invariant under any permutation of component indices, a property that holds by construction through the Hungarian matching objective.

\begin{lemma}[Group Closure of $S_K$]
\label{lem:closure}
The symmetric group $S_K$ is closed under composition: 
for any $\sigma,\pi' \in S_K$, their composition 
$\pi=\sigma\circ\pi'$ also lies in $S_K$.
\end{lemma}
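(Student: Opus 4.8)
The plan is to unpack the definition of $S_K$ as the set of all bijections of the finite index set $\{1,\dots,K\}$ onto itself, and then verify directly that the composite of two such bijections is again a bijection. First I would fix arbitrary $\sigma,\pi'\in S_K$ and set $\pi=\sigma\circ\pi'$, observing that $\pi$ is a well-defined map $\{1,\dots,K\}\to\{1,\dots,K\}$ because $\pi'$ lands in the domain of $\sigma$. It then remains to certify that $\pi$ is bijective, which I would split into injectivity and surjectivity.

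For injectivity, suppose $\pi(a)=\pi(b)$, i.e.\ $\sigma(\pi'(a))=\sigma(\pi'(b))$; applying injectivity of $\sigma$ gives $\pi'(a)=\pi'(b)$, and applying injectivity of $\pi'$ then yields $a=b$. For surjectivity, given any target $c$, surjectivity of $\sigma$ supplies some $b$ with $\sigma(b)=c$, and surjectivity of $\pi'$ supplies some $a$ with $\pi'(a)=b$, whence $\pi(a)=\sigma(\pi'(a))=c$. Having established both properties, $\pi$ is a bijection of $\{1,\dots,K\}$ and hence a member of $S_K$, which is precisely the claim.

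I would also record two routine simplifications. Since the underlying set is finite, injectivity and surjectivity are equivalent, so verifying either one alone already suffices. Alternatively, one may argue purely through inverses: because $\sigma,\pi'\in S_K$ admit inverses $\sigma^{-1},\pi'^{-1}$, the map $\pi'^{-1}\circ\sigma^{-1}$ serves as a two-sided inverse of $\sigma\circ\pi'$, exhibiting the composite as invertible and therefore bijective. There is no genuine obstacle here---the statement is just the elementary closure axiom underlying the group structure of $S_K$---so the only point deserving any care is the bookkeeping of domains and codomains to ensure the composition is defined, a concern rendered transparent by the common finite index set $\{1,\dots,K\}$.
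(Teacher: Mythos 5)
Your proposal is correct and takes essentially the same route as the paper, which simply notes that $S_K$ is the set of bijections of $\{1,\dots,K\}$ and that a composition of bijections is a bijection. You merely spell out the injectivity/surjectivity verification (and the inverse-based shortcut) that the paper's one-line proof leaves implicit.
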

\begin{proof}
$S_K$ consists of all bijections $\pi:\{1,\dots,K\}\!\to\!\{1,\dots,K\}$. 
The composition of bijections is a bijection; hence $\pi\in S_K$.
\end{proof}

\begin{lemma}[Marginal Posterior Structure]
\label{lem:slot_symmetry}
The \emph{SlotFlow} posterior marginalizes uniformly over latent assignments:
\begin{equation}
q_\phi(\Theta \mid x, K)
= 
\sum_{\pi \in S_K} p_{\text{assign}}(\pi)
\prod_{i=1}^K q_{\text{flow}}(\theta_{\pi(i)} \mid h, s_i),
\end{equation}
where $p_{\text{assign}}(\pi)=1/K!$ by exchangeability of components.
\end{lemma}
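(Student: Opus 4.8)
The plan is to obtain the stated mixture by applying the group-averaging (Reynolds) operator for $S_K$ to the \emph{un-symmetrized} density that the per-slot flows define, and then to argue that exchangeability of the generative model forces the averaging weights to be uniform. First I would fix the identity assignment and write down the base density furnished directly by the architecture,
\begin{equation}
q_{\text{base}}(\Theta \mid x, K) = \prod_{i=1}^K q_{\text{flow}}(\theta_i \mid h, s_i),
\end{equation}
where $h = h(x)$ is the shared global context and $s_i$ the orthogonal slot identifier. This product is \emph{not} invariant under relabeling of the components, since slot $i$ is tied to identifier $s_i$; the purpose of the symmetrization is precisely to remove this spurious dependence on the component ordering.

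Second, I would define the symmetrized posterior as the orbit average of $q_{\text{base}}$ under the action of $S_K$ that permutes component labels, $\sigma \cdot \Theta = (\theta_{\sigma(1)}, \dots, \theta_{\sigma(K)})$. On a finite group the canonical invariant measure is normalized counting measure, which assigns weight $1/|S_K| = 1/K!$ to each element, so
\begin{equation}
q_\phi(\Theta \mid x, K) = \frac{1}{K!} \sum_{\pi \in S_K} \prod_{i=1}^K q_{\text{flow}}(\theta_{\pi(i)} \mid h, s_i).
\end{equation}
Reading off $p_{\text{assign}}(\pi) = 1/K!$ then yields exactly the claimed expression. The re-indexing identifying $\sigma \cdot \Theta$ with the assignment $\pi$ sending slot $i$ to component $\pi(i)$ is routine, and verifying that the result is genuinely $S_K$-invariant uses only Lemma~\ref{lem:closure}: a further permutation $\tau$ relabels the summation index $\pi \mapsto \tau \circ \pi$, which ranges over all of $S_K$ by closure, leaving the sum unchanged.

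Third, I would justify that uniform weights are not merely convenient but \emph{forced}. The generative model of \eqref{eq:sinusoid_model} is exchangeable in its components: the prior factorizes into i.i.d.\ terms, and the likelihood depends on $\Theta$ only through the symmetric sum $\sum_k A_k\cos(2\pi f_k t + \phi_k)$, so the true posterior $p(\Theta \mid x, K)$ is $S_K$-invariant. Since the slot identifiers $\{s_i\}$ carry no a~priori ordering information, any approximating family respecting this symmetry must weight all $K!$ assignments equally, and the unique $S_K$-invariant probability measure on a finite group is the uniform one, pinning $p_{\text{assign}}(\pi) = 1/K!$.

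The main obstacle I anticipate is reconciling this uniform marginalization with the \emph{deterministic} Hungarian assignment of \eqref{eq:flow_loss}, which selects a single $\pi^*$ rather than averaging. I would resolve this by noting that the matching cost is itself $S_K$-symmetric under simultaneous relabeling of slots and ground-truth components, so the training objective introduces no systematic bias toward any particular assignment; equivalently, the symmetrization above is a modeling definition that renders the flow family exchangeable by construction, and the training loss merely evaluates this exchangeable object along one representative of each orbit. The genuine content of the lemma is therefore the exchangeability argument that forces uniformity, not the elementary re-indexing, with the $1/K!$ weight emerging automatically from finite-group averaging.
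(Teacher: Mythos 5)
Your proposal is correct and takes essentially the same approach as the paper: the posterior is expressed as a mixture over latent slot-to-component assignments, and uniformity of $p_{\text{assign}}$ is deduced from exchangeability of the components together with the fact that the slot identifiers $\{s_i\}$ carry no a priori ordering information. Your extra material (the explicit Reynolds-operator framing and the reconciliation with the deterministic Hungarian assignment) elaborates on, but does not change, the paper's argument.
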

\begin{proof}
The assignment $\pi$ is a latent variable mapping components to slots. 
Because components are exchangeable and slot identifiers $\{s_i\}$ are the only distinguishing features, 
the joint distribution is invariant under relabelings of components, 
so $p_{\text{assign}}(\pi)$ must be uniform. 
Marginalizing over $\pi$ yields the stated form.
\end{proof}

\paragraph{Proof of Theorem~\ref{thm:perm}.}
Let $\sigma \in S_K$ and define the permuted set 
$\Theta'=\{\theta'_k\}$ with $\theta'_k=\theta_{\sigma(k)}$. 
From Lemma~\ref{lem:slot_symmetry},
\begin{align}
q_\phi(\Theta' \mid x, K)
&= \frac{1}{K!}\sum_{\pi' \in S_K}
   \prod_{i=1}^K q_{\text{flow}}(\theta'_{\pi'(i)}\mid h,s_i) \nonumber\\
&= \frac{1}{K!}\sum_{\pi' \in S_K}
   \prod_{i=1}^K q_{\text{flow}}(\theta_{\sigma(\pi'(i))}\mid h,s_i).
   \label{eq:perm_step1}
\end{align}
Define $\pi = \sigma \circ \pi'$. 
By Lemma~\ref{lem:closure}, $\pi\in S_K$. 
Because $\sigma$ is bijective, the mapping $\pi'\mapsto\pi$ is one-to-one, 
so as $\pi'$ ranges over $S_K$, so does $\pi$. 
Substituting into~\eqref{eq:perm_step1},
\begin{equation}
q_\phi(\Theta' \mid x, K)
= \frac{1}{K!}\sum_{\pi \in S_K}
  \prod_{i=1}^K q_{\text{flow}}(\theta_{\pi(i)}\mid h,s_i)
= q_\phi(\Theta \mid x,K).
\end{equation}
Hence $q_\phi(\Theta\mid x,K)$ is invariant under any permutation 
of $\Theta$, proving permutation invariance. This invariance is exact by construction: unlike architectures that break symmetry through ordering conventions or random initialization, \emph{SlotFlow} treats all component orderings identically.
\qed

\subsection{Proof of Proposition~\ref{prop:elbo} (ELBO Decomposition)}
\label{app:elbo}

We establish the variational objective decomposition that connects the practical \emph{SlotFlow} training objective to principled variational inference.

\begin{lemma}[Variational Lower Bound]
\label{lem:vlb}
For any variational distribution $q_\phi(K,\Theta\mid x)$ over discrete $K$ and continuous $\Theta$,
\begin{equation}
\log p(x) 
\ge 
\mathbb{E}_{q_\phi}\!\big[\log p(x,K,\Theta) - \log q_\phi(K,\Theta\mid x)\big]
=: \mathcal{L}_{\text{ELBO}}.
\end{equation}
\end{lemma}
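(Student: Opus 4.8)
The plan is to reproduce the standard evidence-lower-bound argument, taking care that the ``integral'' over $\Theta$ is really the sum-integral over the disjoint union $\mathcal{S}=\bigcup_{K\ge 0}\{K\}\times\mathcal{X}_\theta^K$ introduced in Section~\ref{sec:problem}. First I would express the evidence in its trans-dimensional form,
\begin{equation}
p(x)=\sum_{K=0}^{\infty}\int_{\mathcal{X}_\theta^K} p(x,K,\Theta)\,d\Theta,
\end{equation}
so that the marginalization is a single (sum-)integral of the joint density over $\mathcal{S}$ with respect to the trans-dimensional reference measure.

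Next I would insert the variational distribution by the ``multiply-and-divide'' trick, writing
\begin{equation}
\log p(x)=\log\sum_{K}\int_{\mathcal{X}_\theta^K} q_\phi(K,\Theta\mid x)\,\frac{p(x,K,\Theta)}{q_\phi(K,\Theta\mid x)}\,d\Theta
=\log\,\mathbb{E}_{q_\phi}\!\left[\frac{p(x,K,\Theta)}{q_\phi(K,\Theta\mid x)}\right],
\end{equation}
recognizing the right-hand side as a single expectation of the importance ratio under the joint $q_\phi(K,\Theta\mid x)$. Since $\log$ is concave, Jensen's inequality moves the logarithm inside the expectation:
\begin{align}
\log p(x)
&\ge \mathbb{E}_{q_\phi}\!\left[\log\frac{p(x,K,\Theta)}{q_\phi(K,\Theta\mid x)}\right] \\
&= \mathbb{E}_{q_\phi}\big[\log p(x,K,\Theta)-\log q_\phi(K,\Theta\mid x)\big]
= \mathcal{L}_{\text{ELBO}}.
\end{align}

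The main obstacle is not the inequality itself -- which is a one-line application of concavity -- but the measure-theoretic bookkeeping that makes the importance ratio well defined across subspaces of differing dimension. I would assume absolute continuity $q_\phi(\cdot\mid x)\ll p(x,\cdot,\cdot)$ on $\mathcal{S}$, equivalently $q_\phi(K,\Theta\mid x)=0$ whenever $p(x,K,\Theta)=0$, so that the ratio is finite $q_\phi$-almost everywhere and Jensen applies; this holds for the \emph{SlotFlow} parameterization because the flow density and $q_\phi(K\mid x)$ are supported wherever the prior places mass. I would close by noting that the gap equals $\mathrm{KL}\!\big(q_\phi(K,\Theta\mid x)\,\big\|\,p(K,\Theta\mid x)\big)\ge 0$, so the bound is tight precisely when $q_\phi$ coincides with the true joint posterior -- the fact that drives the decomposition in Proposition~\ref{prop:elbo}.
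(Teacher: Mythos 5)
Your proposal is correct and follows essentially the same route as the paper's proof: rewrite the evidence as the trans-dimensional sum-integral, insert $q_\phi$ via the multiply-and-divide trick, and apply Jensen's inequality to the concave logarithm. Your additional remarks on absolute continuity of $q_\phi$ with respect to the joint and on the gap equaling $\mathrm{KL}\big(q_\phi(K,\Theta\mid x)\,\|\,p(K,\Theta\mid x)\big)$ are sound refinements the paper leaves implicit, but they do not change the underlying argument.
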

\begin{proof}
By definition,
\[
\log p(x) 
= \log \sum_K \int p(x,K,\Theta)\,d\Theta
= \log \mathbb{E}_{q_\phi}\!\left[\frac{p(x,K,\Theta)}{q_\phi(K,\Theta\mid x)}\right].
\]
Applying Jensen's inequality to the concave logarithm gives
\[
\log p(x)
\ge 
\mathbb{E}_{q_\phi}\!\left[\log \frac{p(x,K,\Theta)}{q_\phi(K,\Theta\mid x)}\right]
= \mathcal{L}_{\text{ELBO}}.
\]
\end{proof}

\paragraph{Proof of Proposition~\ref{prop:elbo}.}
Training optimizes a factorized approximation
\begin{equation}
q_\phi(K,\Theta\mid x)
 = q_\phi(K\mid x)\prod_{k=1}^K q_\phi(\theta_k \mid x, K, s_k),
\end{equation}
which assumes conditional independence of components given the observation and slot identifier.
This mean-field structure trades tractability against explicitly modeling cross-slot correlations.
The generative model factorizes as
$p(x,K,\Theta)=p(x\mid\Theta,K)p(\Theta\mid K)p(K)$.

Using the generative factorization 
$p(x,K,\Theta) = p(x\mid \Theta,K)p(\Theta\mid K)p(K)$
and the variational factorization 
$q_\phi(K,\Theta\mid x) = q_\phi(K\mid x)\prod_k q_\phi(\theta_k\mid x,K,s_k)$,
where $s_k$ denotes the slot identifier of component $k$, we expand:
\begin{align}
\mathcal{L}_{\text{ELBO}}
&=
\mathbb{E}_{q_\phi}\!\Big[
\log p(x \mid \Theta, K)
+ \log p(\Theta \mid K)
+ \log p(K)
- \log q_\phi(K \mid x)
\nonumber\\
&\hspace{2.8cm}
- \sum_k \log q_\phi(\theta_k \mid x, K, s_k)
\Big] \nonumber\\[4pt]
&=
\mathbb{E}_{q_\phi}\!\big[\log p(x \mid \Theta, K)\big]
- \mathbb{E}_{q_\phi}\!\Bigg[
\sum_k 
\log \frac{q_\phi(\theta_k \mid x, K, s_k)}{p(\theta_k)}
\Bigg]
- \mathbb{E}_{q_\phi}\!\Bigg[
\log \frac{q_\phi(K \mid x)}{p(K)}
\Bigg].
\end{align}

Recognizing the KL divergences and using linearity of expectation,
\begin{align}
\mathcal{L}_{\text{ELBO}}
&= 
\underbrace{\mathbb{E}_{q_\phi}[\log p(x\mid \Theta,K)]}_{\text{Reconstruction}}
- 
\underbrace{\mathbb{E}_{q_\phi(K\mid x)}
\!\left[
\mathrm{KL}\big(q_\phi(\Theta\mid x,K)\,\|\,p(\Theta\mid K)\big)
\right]}_{\text{Parameter Regularization}}
-
\underbrace{\mathrm{KL}\big(q_\phi(K\mid x)\,\|\,p(K)\big)}_{\text{Model-Order Regularization}}.
\end{align}
All expectations $\mathbb{E}_{q_\phi}[\cdot]$ are with respect to $q_\phi(K,\Theta\mid x)$ unless otherwise stated. The Hungarian-matched flow loss acts as a computationally efficient surrogate for the reconstruction term, while architectural constraints (finite capacity, bounded weights) implicitly regularize parameters.
\qed

\subsection{Proof of Lemma~\ref{lem:gradient_amp} (Gradient Amplification via Target Weighting)}
\label{app:gradient_amp}

We establish that phase weighting amplifies gradient contributions from phase errors through the chain rule, providing a mechanism to address the destructive nature of phase errors in sinusoidal decomposition.

\paragraph{Proof.}
Let the ground-truth parameters be 
$\theta^{\text{gt}} = (A, \cos\phi, \sin\phi, f)$.
The phase coordinates are scaled by a fixed, non-learnable weight $w_\phi > 0$
to form the input to the flow:
\begin{equation}
\tilde{\theta}^{\text{gt}} = (A, w_\phi \cos\phi, w_\phi \sin\phi, f).
\end{equation}
The flow loss is a sum over optimally assigned components $\pi^*$ (treated as fixed w.r.t.\ $\phi$):
\begin{equation}
\mathcal{L}_{\text{flow}}
= -\frac{1}{K}\sum_{j=1}^K
  \log q_\phi(\tilde{\theta}_{\pi^*(j)}^{\text{gt}} \mid h, s_j).
\end{equation}

To analyze the amplification, consider a single component $j$
with unweighted phase coordinate $\theta_j$ (e.g.\ $\cos\phi_j^{\text{gt}}$)
and its weighted input $\tilde{\theta}_j = w_\phi \theta_j$.
The loss term for this component,
$\mathcal{L}_j = -\frac{1}{K}\log q_\phi(\tilde{\theta}_j\mid\dots)$,
depends on $\theta_j$ only through $\tilde{\theta}_j$.
Applying the chain rule:
\begin{equation}
\frac{\partial \mathcal{L}_{\text{flow}}}{\partial \theta_j}
=
\frac{\partial \mathcal{L}_{\text{flow}}}{\partial \tilde{\theta}_j}
\cdot
\frac{\partial \tilde{\theta}_j}{\partial \theta_j}
=
w_\phi\,
\frac{\partial \mathcal{L}_{\text{flow}}}{\partial \tilde{\theta}_j},
\end{equation}
since $\partial \tilde{\theta}_j/\partial\theta_j = w_\phi$ and $w_\phi$ is constant.
Differentiating the log-likelihood term gives
\[
\frac{\partial \mathcal{L}_{\text{flow}}}{\partial \tilde{\theta}_j}
= -\frac{1}{K}
  \frac{\partial \log q_\phi(\tilde{\theta}_j\mid h,s_i)}{\partial \tilde{\theta}_j}.
\]
Hence
\begin{equation}
\frac{\partial \mathcal{L}_{\text{flow}}}{\partial \theta_j}
=
-\frac{w_\phi}{K}
\frac{\partial \log q_\phi(\tilde{\theta}_j\mid h,s_i)}{\partial \tilde{\theta}_j}.
\label{eq:grad_amp_exact}
\end{equation}

If we denote by $\mathcal{L}_{\text{unweighted}}$ the same loss evaluated with $w_\phi=1$,
then
\[
\frac{\partial \mathcal{L}_{\text{unweighted}}}{\partial \theta_j}
= 
-\frac{1}{K}
\frac{\partial \log q_\phi(\theta_j\mid h,s_i)}{\partial \theta_j}.
\]
Comparing these expressions, and noting that
the score $\partial_{\tilde{\theta}}\log q_\phi$ varies smoothly with its argument,
we obtain a local linearization
\begin{equation}
\frac{\partial \mathcal{L}_{\text{flow}}}{\partial \theta_j}
\approx
w_\phi
\frac{\partial \mathcal{L}_{\text{unweighted}}}{\partial \theta_j},
\end{equation}
where equality holds exactly if the score field is locally constant
(e.g.\ for a Gaussian centered at the evaluation point).
Thus, to first order, the gradient with respect to the \emph{unweighted} ground-truth
parameter is amplified by the factor $w_\phi$.

Since the gradient backpropagated to the network parameters $\phi$ is linear in this upstream derivative,
the same multiplicative scaling propagates to $\nabla_\phi \mathcal{L}_{\text{flow}}$:
\begin{equation}
\left\|
\nabla_\phi \mathcal{L}_{\text{flow}}
\right\|
\approx
w_\phi
\left\|
\nabla_\phi \mathcal{L}_{\text{unweighted}}
\right\|.
\end{equation}
In particular, for the phase components $(\cos\phi,\sin\phi)$,
\begin{equation}
\left\|
\frac{\partial \mathcal{L}_{\text{flow}}}{\partial (\cos\phi,\sin\phi)}
\right\|
= w_\phi
\left\|
\frac{\partial \mathcal{L}_{\text{unweighted}}}{\partial (\cos\phi,\sin\phi)}
\right\|,
\end{equation}
with the two gradients evaluated respectively at 
$\tilde{\theta}=(w_\phi\cos\phi,w_\phi\sin\phi)$
and $(\cos\phi,\sin\phi)$.
This first-order scaling shows that phase weighting
amplifies the effective gradient magnitude,
counteracting vanishing gradients for small phase amplitudes.
\qed

\subsection{Proof of Proposition~\ref{prop:optimal_weight} (Optimal Phase Weight)}
\label{app:optimal_weight}

We derive the optimal phase weighting factor by balancing the reduction in phase estimation variance against the coupling penalty in amplitude variance, motivated by the observation that excessive phase weighting can numerically perturb amplitude estimation.

\begin{definition}[Weighted Parameter Space]
\label{def:weighted_param}
Given a weight $w_\phi > 1$, we define the effective distance
metric that the phase-weighted loss seeks to minimize:
\begin{equation}
d_w(\theta, \theta')^2 =
(A - A')^2
+ w_\phi^2[(\cos\phi - \cos\phi')^2 + (\sin\phi - \sin\phi')^2]
+ (f - f')^2.
\end{equation}
\end{definition}

\paragraph{Proof of Proposition~\ref{prop:optimal_weight}.}
Using a second-order Taylor expansion of the reconstruction error 
around the true parameters $\theta^*$, the expected reconstruction loss 
is approximately proportional to the sum of parameter variances:
\begin{equation}
E_{\text{recon}} 
\propto 
\sum_{k=1}^K 
\left[
\Delta A_k^2 
+ (A_k^*)^2 (2\pi f_k^*)^2 \Delta\phi_k^2 
+ \dots
\right].
\end{equation}

In weighted variational inference, scaling the phase coordinate by $w_\phi$
rescales its curvature in the loss by $w_\phi^2$,
so the phase variance satisfies
\begin{equation}
\mathrm{Var}[\Delta\phi_k] 
\propto 
\frac{1}{w_\phi^2 \lambda_\phi},
\end{equation}
where $\lambda_\phi$ is the Fisher information eigenvalue for phase.

To ensure a finite optimum, we model mild cross-curvature coupling between 
amplitude and phase: excessive phase weighting 
numerically perturbs amplitude estimation.  
We express this as an additive penalty in the amplitude variance,
\begin{equation}
\mathrm{Var}[\Delta A_k] 
\propto 
\frac{1}{\lambda_A} + C_p w_\phi^2,
\end{equation}
with $C_p>0$ a constant representing the coupling strength.

Substituting these relations into the expected reconstruction error
and absorbing constants into $C_A$ and $C_\phi$ gives
\begin{equation}
\mathbb{E}[E_{\text{recon}}]
=
C_A\!\left(\frac{1}{\lambda_A}+C_p w_\phi^2\right)
+
C_\phi\,\frac{\langle A^2 f^2\rangle}{\lambda_\phi w_\phi^2}
+\text{const},
\end{equation}
where $\langle A^2 f^2\rangle=\mathbb{E}[(A_k^*)^2(f_k^*)^2]$.

Differentiating with respect to $w_\phi$ and setting the derivative to zero:
\begin{align}
\frac{\partial \mathbb{E}[E_{\text{recon}}]}{\partial w_\phi}
&= 2C_A C_p w_\phi 
  - \frac{2C_\phi \langle A^2 f^2\rangle}{\lambda_\phi w_\phi^3}
  = 0,\\
w_\phi^4 
&= \frac{C_\phi \langle A^2 f^2\rangle}
        {C_A C_p \lambda_\phi},
\end{align}
yielding the optimal phase weight
\begin{equation}
w_\phi^*
= 
\left(
\frac{C_\phi \langle A^2 f^2\rangle}
     {C_A C_p \lambda_\phi}
\right)^{1/4}
\propto
\left(\frac{\langle A^2 f^2 \rangle}{\mathrm{SNR}}\right)^{1/4},
\end{equation}
where we have identified $\lambda_\phi \propto \mathrm{SNR}$ through the Cramér-Rao bound relating Fisher information to signal-to-noise ratio. For typical sinusoids with $\mathrm{SNR}\in[1,10]$ and unit-scale priors, this theoretical scaling supports $w_\phi^*\in[1,2]$.

This finite non-zero optimum balances the reduction in phase variance 
against the curvature coupling penalty, 
establishing the condition of Proposition~\ref{prop:optimal_weight}.
\qed

\subsection{Proof of Lemma~\ref{lem:flow_approx} (Flow Approximation Error)}
\label{app:flow_approx}

We establish approximation error bounds for rational-quadratic spline flows, showing how expressiveness improves exponentially with depth and polynomially with bin count.

\paragraph{Proof of Lemma~\ref{lem:flow_approx}.}
Rational quadratic spline (RQS) flows with $L$ layers and $B$ bins per spline achieve approximation error for smooth target densities:
\begin{equation}
\text{KL}(p(\theta|x) \| q_{\phi^*}(\theta | h(x), s_k)) \leq C_0 \exp(-\alpha L) + C_1 B^{-2},
\end{equation}
where $\alpha > 0$ depends on posterior smoothness and $C_0, C_1$ depend on dynamic range $p_{\max}/p_{\min}$.

The proof follows from universal approximation theorems for autoregressive flows \citep{huang2018neuralautoregressiveflows,durkan2019neural}. We establish two components:

\textbf{Univariate spline approximation.} Each rational-quadratic spline with $B$ bins approximates a smooth univariate function $g: [a,b] \to \mathbb{R}$ with error bounded by the modulus of continuity. For $g \in C^2([a,b])$ with bounded second derivative $\|g''\|_\infty \leq M$, standard piecewise approximation theory gives
\begin{equation}
\sup_{x \in [a,b]} |g(x) - \text{RQS}_B(x)| \leq \frac{M(b-a)^2}{8B^2},
\end{equation}
where $\text{RQS}_B$ denotes the rational-quadratic spline with $B$ bins. Translating this pointwise error to KL divergence through the relationship between total variation distance and pointwise function approximation yields the $O(B^{-2})$ univariate term.

\textbf{Depth-dependent multivariate approximation.} Autoregressive flows compose $d$ univariate transformations per layer, where $d$ is the parameter dimension. With $L$ layers, the flow performs $L \cdot d$ sequential transformations, each with approximation error $O(B^{-2})$. The universal approximation theorem for neural autoregressive flows \citep{huang2018neuralautoregressiveflows} establishes that for target densities in the Hölder class $C^{0,\beta}$ with smoothness parameter $\beta$, the approximation error decays exponentially in depth:
\begin{equation}
\text{KL}(p \| q_{L,B}) \leq C(\beta, \|p\|_{C^{0,\beta}}) \exp(-\alpha L) + O(B^{-2}),
\end{equation}
where $\alpha = \beta / d$ depends on the smoothness-to-dimension ratio. The constant $C$ scales logarithmically with the posterior dynamic range $\rho = p_{\max}/p_{\min}$ through covering number arguments.

Combining these results yields the stated bound with $C_0 = C(\beta, \|p\|_{C^{0,\beta}})$ and $C_1$ determined by the Hölder norm of the target density.
\qed

\subsection{Proof of Proposition~\ref{prop:depth_accuracy} (Depth–Accuracy Trade-off)}
\label{app:depth_accuracy}

We analyze the trade-off between approximation quality (which improves with network depth) and generalization error (which worsens with depth for fixed data budget), deriving the optimal flow depth that balances these competing objectives.

\paragraph{Proof.}

\,\newline\textbf{Step 1: Computational cost.}
Training a flow with $L$ layers and hidden width $d_h$ on $N$ samples and $K$ slots
requires
\begin{equation}
\mathcal{C}_{\mathrm{train}}
 = O(L\,d_h^2\,K\,N\,T_{\mathrm{epochs}})
\text{ FLOPs},
\end{equation}
assuming dense matrix multiplications dominate the forward/backward cost.
For a fixed computational budget $\mathcal{B}$,
\begin{equation}
N \le \frac{\mathcal{B}}{L\,d_h^2\,K\,T_{\mathrm{epochs}}}.
\label{eq:budget}
\end{equation}

\,\newline\textbf{Step 2: Approximation–estimation trade-off.}
From Lemma~\ref{lem:flow_approx} and Theorem~\ref{thm:sample_complexity},
the approximation and estimation errors satisfy
\begin{align}
e_{\mathrm{approx}}(L) &= C_0 e^{-\alpha L} + C_1 B^{-2},\\
e_{\mathrm{est}}(L,N) &= O\!\left(\frac{\sqrt{L d_h \log(L d_h)}}{\sqrt{N}}\right),
\end{align}
where $C_0,C_1$ depend on the target smoothness and dynamic range,
but not on $L$ or $N$.
Thus the total expected error is
\begin{equation}
E_{\mathrm{total}}(L)
 = C_0 e^{-\alpha L} + C_1 B^{-2}
   + \frac{C_2\sqrt{L d_h \log(L d_h)}}{\sqrt{N}}.
\end{equation}
Substituting the budget constraint~\eqref{eq:budget} gives
\begin{equation}
E_{\mathrm{total}}(L)
 = C_0 e^{-\alpha L} + C_1 B^{-2}
   + C_2\sqrt{\frac{L^2 d_h^3 K T_{\mathrm{epochs}}\log(L d_h)}{\mathcal{B}}}.
\label{eq:E_total}
\end{equation}

\,\newline\textbf{Step 3: Optimize depth.}
Differentiating~\eqref{eq:E_total} with respect to $L$ and balancing leading terms
provides an order-wise criterion for the crossover between approximation-limited
and data-limited regimes:
\begin{equation}
\alpha C_0 e^{-\alpha L}
 \approx
 \frac{3C_2}{2}
 \sqrt{\frac{L d_h^3 K T_{\mathrm{epochs}}\log(L d_h)}{\mathcal{B}}}.
\end{equation}
For moderate depths ($L\!\approx\!1/\alpha$),
the exponential decay behaves roughly as $1/L$,
and the optimal depth scales as
\begin{equation}
L^*
 = \Theta\!\left(
   \min\!\left\{
     \frac{1}{\alpha}\log\!\frac{C_0}{\epsilon},
     \left(\frac{\mathcal{B}}{d_h^3 K T_{\mathrm{epochs}}}\right)^{1/3}
   \right\}\right),
\end{equation}
where the first term ensures approximation error below $\epsilon$
and the second term prevents overfitting under the fixed budget.
\qed

\paragraph{Remark.}
For $\alpha\!\approx\!0.2$, $d_h\!=\!512$, $K_{\max}\!=\!10$,
$T_{\mathrm{epochs}}\!=\!200$, and $\mathcal{B}\!\approx\!10^{12}$ FLOPs,
\[
L^* \!\approx\!
 \min\!\{5\log(100/\epsilon),
          \sqrt{10^{12}/(512^3\!\cdot\!10\!\cdot\!200)}\}
 \!\approx\!
 \min\{23,6\} = 6.
\]
Hence our choice $L\!=\!8$ provides a conservative margin above
the predicted optimum, ensuring sufficient approximation capacity
while accepting a modest increase in generalization error.

\subsection{Proof of Theorem~\ref{thm:sample_complexity} (Sample Complexity with Explicit Scaling)}
\label{app:sample_complexity}

We provide a worst-case upper bound on sample complexity with explicit dependence on all problem parameters, then analyze which scaling regime applies to practical architectures. This bound identifies two distinct phases depending on whether combinatorial or architectural complexity dominates.

\paragraph{Proof.}
We bound the sample complexity by decomposing the total error and applying concentration inequalities.

\,\newline\textbf{Step 1: Error decomposition.}
By the triangle inequality for the KL divergence,
\begin{equation}
\mathrm{KL}(p \,\|\, q_\phi)
\le
\underbrace{\mathrm{KL}(p \,\|\, q_{\phi^*})}_{\text{approximation}}
+
\underbrace{\mathrm{KL}(q_{\phi^*} \,\|\, q_\phi)}_{\text{estimation}},
\end{equation}
where $\phi^*$ is the best-in-class parameter.
By Lemma~\ref{lem:flow_approx}, the approximation term is controlled by the expressivity of the flow architecture.

\,\newline\textbf{Step 2: Rademacher complexity.}
For the class $\mathcal{F}$ of $L$-layer ReLU networks with $\|x\|_2\!\le\!B$
and parameters satisfying $\sum_{\ell=1}^L\|W_\ell\|_F^2\!\le\!B_\phi^2$,
norm-based contraction results
(see, e.g.,~\citealt{golowich2019sizeindependentsamplecomplexityneural,bartlett2017spectrallynormalizedmarginboundsneural})
imply
\begin{equation}
\widehat{\mathcal{R}}_N(\mathcal{F})
\le
\frac{C\,B\,\sqrt{L}\,\prod_{\ell=1}^L\|W_\ell\|_2}{\sqrt{N}}.
\end{equation}
When $\|W_\ell\|_2\!=\!O(1)$ and $\|\phi\|_2\!\le\!B_\phi$,
a simplified covering-number bound yields
\begin{equation}
\widehat{\mathcal{R}}_N(\mathcal{F})
\le
\frac{C\,B\,B_\phi\,\sqrt{L d_h \log d_h}}{\sqrt{N}},
\label{eq:rademacher_simplified}
\end{equation}
where $d_h$ is the layer width.
By Assumption~\ref{ass:flow_arch}(b), $\|\phi\|_2\!\le\!B_\phi$.
For $K$ parallel slots,
\begin{equation}
\widehat{\mathcal{R}}_N(\mathcal{H}_K)
\le
\frac{C\,K\,B\,B_\phi\,\sqrt{L d_h \log d_h}}{\sqrt{N}}.
\end{equation}
Assuming $B_\phi\!=\!O(\sqrt{L d_h})$ (typical under weight decay),
\begin{equation}
\widehat{\mathcal{R}}_N(\mathcal{H}_K)
= 
\mathcal{O}\!\!\left(\frac{K B L d_h \sqrt{\log d_h}}{\sqrt{N}}\right),
\end{equation}
where $\mathcal{O}(\cdot)$ hides only universal constants independent of $N,L,d_h,K,\rho$.

\,\newline\textbf{Step 3: Uniform convergence.}
For a single sample,
\begin{equation}
\ell(\phi; x, \Theta, K)
=
\min_{\pi\in S_K}
\sum_{k=1}^K -\log q_\phi(\theta_k\mid h(x),s_{\pi(k)}).
\end{equation}
By Assumption~\ref{ass:flow_arch}(a),
$p_{\min}\!\le q_\phi\!\le p_{\max}$,
so $|\log q_\phi|\!\le\!\log\rho$ with $\rho=p_{\max}/p_{\min}$.
The Lipschitz constant of the loss with respect to one sample
is at most $K\log\rho$, and McDiarmid's inequality gives
\begin{equation}
\mathbb{P}\!\left(|\widehat{\mathcal{L}}_N(\phi)-\mathcal{L}(\phi)|>t\right)
\le
2\exp\!\left(-\frac{2N t^2}{(K\log\rho)^2}\right).
\end{equation}

\,\newline\textbf{Step 4: Union bound over permutations.}
Since the Hungarian matching searches over $K!$ permutations,
and $K!\le K^K$,
\begin{equation}
\mathbb{P}\!\left(\sup_\pi|\widehat{\mathcal{L}}_N^\pi-\mathcal{L}^\pi|>t\right)
\le
2K^K
\exp\!\left(-\frac{2N t^2}{(K\log\rho)^2}\right).
\end{equation}
Setting the right-hand side to $\delta$ yields
\begin{equation}
t
\le
\frac{K\log\rho}{\sqrt{2N}}
\sqrt{K\log K+\log(2/\delta)}.
\end{equation}

\,\newline\textbf{Step 5: Combine Rademacher and concentration.}
With probability at least $1-\delta$,
\begin{equation}
\mathcal{L}(\phi)
\le
\widehat{\mathcal{L}}_N(\phi)
+2\widehat{\mathcal{R}}_N(\mathcal{H}_K)
+t.
\end{equation}
Substituting the previous bounds,
\begin{equation}
\mathcal{L}(\phi)
\le
\widehat{\mathcal{L}}_N(\phi)
+
\frac{C_1 K B L d_h \sqrt{\log d_h}}{\sqrt{N}}
+
\frac{K\log\rho}{\sqrt{2N}}
\sqrt{K\log K+\log(2/\delta)}.
\end{equation}
To ensure $\mathcal{L}(\phi)-\mathcal{L}(\phi^*)\le\epsilon$,
it suffices that both additive terms are $\mathcal{O}(\epsilon)$,
which is achieved for
\begin{equation}
N
\ge
\frac{C\,K^2 \log^2\rho}{\epsilon^2}
\!\left(K\log K + L^2 d_h^2 \log d_h + \log(2/\delta)\right).
\end{equation}
Here the $K^2\log^2\rho$ factor arises from the McDiarmid term,
$K\log K$ from the union bound,
and $L^2 d_h^2\log d_h$ from the squared Rademacher complexity.

\,\newline\textbf{Step 6: Dependence on SNR and regime analysis.}
We model the dynamic range as $\log\rho \approx C_\rho(1+\mathrm{SNR}^{-2})$ for well-separated components. Substituting into the bound from Step 5 and collecting terms:
\begin{equation}
N \ge \frac{C\,C_\rho^2 K^2(1+\mathrm{SNR}^{-2})^2}{\epsilon^2} \left(K\log K + L^2 d_h^2 \log d_h + \log(2/\delta)\right).
\end{equation}
Denoting $K_{\max}$ as the maximum cardinality and absorbing universal constants into $C'$:
\begin{equation}
N \ge \frac{C_\rho(1+\mathrm{SNR}^{-2})^2}{\epsilon^2} \left(K_{\max}^3\log K_{\max} + K_{\max}^2 L^2 d_h^2 \log d_h + K_{\max}^2\log(2/\delta)\right),
\end{equation}
establishing Theorem~\ref{thm:sample_complexity}.

\paragraph{Regime Identification.}
The bound comprises three additive terms with distinct scaling:
\begin{enumerate}
\item \textbf{Combinatorial:} $K_{\max}^3 \log K_{\max}$ from the union bound over $K!$ permutations,
\item \textbf{Architectural:} $K_{\max}^2 L^2 d_h^2 \log d_h$ from Rademacher complexity of the flow,
\item \textbf{Confidence:} $K_{\max}^2 \log(2/\delta)$ from concentration (typically negligible).
\end{enumerate}

Which term dominates depends on the relative magnitudes of $K_{\max}$ and $(L \cdot d_h)$:

\textbf{Case I (Architecture-Dominated):} If $K_{\max}^3 \log K_{\max} \ll K_{\max}^2 L^2 d_h^2 \log d_h$, equivalently
\begin{equation}
K_{\max} \ll K_{\text{crit}} := L^2 d_h^2 \log d_h,
\end{equation}
then the bound simplifies to $N \gtrsim C \cdot L^2 d_h^2 / \epsilon^2$, effectively independent of $K_{\max}$.

\textbf{Case II (Combinatorics-Dominated):} If $K_{\max} \gg K_{\text{crit}}$, the combinatorial term dominates, yielding $N \gtrsim C \cdot K_{\max}^3 \log K_{\max} / \epsilon^2$.

For typical deep architectures ($L=8$, $d_h=512$), we have $K_{\text{crit}} \approx 64 \cdot 262144 \cdot 6 \approx 10^8$, placing all practical cardinalities ($K_{\max} \lesssim 100$) in the architecture-dominated regime. The asymptotic cubic scaling emerges only for $K_{\max} \gg 100$, where the factorial growth of permutation space ($K! \sim \sqrt{2\pi K}(K/e)^K$) overwhelms architectural complexity.

\paragraph{Tightness of the Bound.}
This worst-case PAC bound reflects several conservative choices:
\begin{enumerate}
\item \textbf{Union bound:} Accounts for all $K! \leq K^K$ permutations, though the Hungarian matching efficiently navigates this space.
\item \textbf{Covering numbers:} Standard Rademacher analysis ignores problem-specific geometry and implicit regularization from SGD dynamics.
\item \textbf{Lipschitz constants:} Uses global bounds $|\log q_\phi| \leq \log \rho$, though local smoothness is typically better.
\end{enumerate}

Consequently, practical sample requirements often fall several orders of magnitude below the bound, as observed in Appendix~\ref{app:theory_validation}: achieving $\epsilon \approx 0.01$ requires $N \sim 10^{3}$--$10^4$ rather than the bound's prediction of $N \gtrsim 10^{10}$. The bound's utility lies in identifying scaling regimes and predicting qualitative behavior (monotonicity, phase transitions) rather than precise quantitative predictions.
\qed

\subsection*{Summary of Appendix}

We have provided rigorous proofs for:
\begin{enumerate}
    \item \textbf{Theorem~\ref{thm:perm}}: Permutation invariance via group-theoretic arguments
    \item \textbf{Proposition~\ref{prop:elbo}}: ELBO decomposition via Jensen's inequality
    \item \textbf{Lemma~\ref{lem:gradient_amp}}: Gradient amplification via chain rule
    \item \textbf{Proposition~\ref{prop:optimal_weight}}: Optimal phase weight via reconstruction error minimization
    \item \textbf{Lemma~\ref{lem:flow_approx}}: Flow approximation error via universal approximation theorems
    \item \textbf{Proposition~\ref{prop:depth_accuracy}}: Optimal depth via approximation-estimation trade-off
    \item \textbf{Theorem~\ref{thm:sample_complexity}}: Sample complexity with explicit constants via Rademacher complexity and concentration
\end{enumerate}

All constants have been made explicit, and informal steps (such as the empirical shift in optimal phase weight) are clearly labeled. Where we invoke results from the literature (universal approximation, Rademacher bounds), we provide precise citations and state the assumptions required.

The identifiability discussion (Discussion~\ref{disc:identifiability} in Section~\ref{sec:theory}) is deliberately \emph{not} included here because it does not constitute a formal proof. Instead, the main text presents it as a set of conditions that predict empirical success, validated experimentally rather than proven theoretically. The experimental validation of these conditions appears in Section~\ref{sec:experiments}, demonstrating: (1) consistent slot-component assignments across training runs (measured via Hungarian matching stability), (2) separable latent representations (visualized via t-SNE and Isomap), and (3) accurate per-slot posteriors (validated against RJMCMC ground truth).


\section{Additional Empirical Validation of Theoretical Predictions}
\label{app:theory_validation}

This appendix provides empirical validation of theoretical properties that supplement the performance characterization in Section~\ref{sec:experiments}. The main Results section establishes practical metrics (cardinality accuracy, parameter recovery, posterior calibration, computational efficiency) and includes key architectural ablations (domain representation, phase weighting, flow depth). This appendix focuses on deeper investigations of emergent properties: repeated inference stability, latent space geometry, and sample complexity scaling regimes that directly test theoretical predictions from Section~\ref{sec:theory}.

\vspace{5pt}
\textbf{Content organization:} Section~\ref{sec:experiments} already covers:
\begin{itemize}[leftmargin=*,topsep=2pt,itemsep=1pt]
    \item Slot specialization through frequency
    \item Domain representation ablation
    \item Phase weighting ablation
    \item Flow depth ablation
\end{itemize}

\vspace{5pt}
This appendix adds:
\begin{itemize}[leftmargin=*,topsep=2pt,itemsep=1pt]
    \item Repeated inference stability (validates identifiability claims)
    \item Latent space geometry analysis (validates encoder/flow assumptions)
    \item Sample complexity scaling experiments (validates Theorem~\ref{thm:sample_complexity} regimes)
\end{itemize}

\subsection{Repeated Inference Stability}
\label{app:repeated_inference}

While Section~\ref{sec:experiments} demonstrates that slots specialize by frequency across populations of signals (Figure~\ref{fig:slot_specialization}), an additional question concerns \emph{within-signal} stability: does the same slot consistently align with the same component under repeated stochastic evaluations? This tests whether slot identifiability (Discussion~\ref{disc:identifiability}) holds not just on average but for individual predictions.

\subsubsection{Experimental Setup}

We evaluate a trained \emph{SlotFlow} model on a single test mixture with $K=10$ components, applying the model $100$ times under independent stochastic draws of the flow decoder. For each run, we record the optimal slot–component assignments via Hungarian matching with flow-based log-probability costs. These assignments are aggregated into a $K_{\max} \times K_{\max}$ match-frequency matrix, and a global consistency score is computed as the mean fraction of identical assignments across all pairs of runs. All evaluations are performed in inference mode without gradient updates (\texttt{@torch.no\_grad}).

\subsubsection{Results and Interpretation}

Figure~\ref{fig:repeated_stability} shows the match-frequency matrix and slot–frequency correspondence across 100 independent evaluations. Each slot converges to a single, persistent component assignment with measured ${\rm Consistency} \approx 1.00$. The slot–frequency plot (bottom panel) confirms that each slot repeatedly captures the same frequency across stochastic passes, with negligible variation (standard deviation $< 0.01$~Hz).

\begin{figure}[t]
\centering
\includegraphics[width=\textwidth]{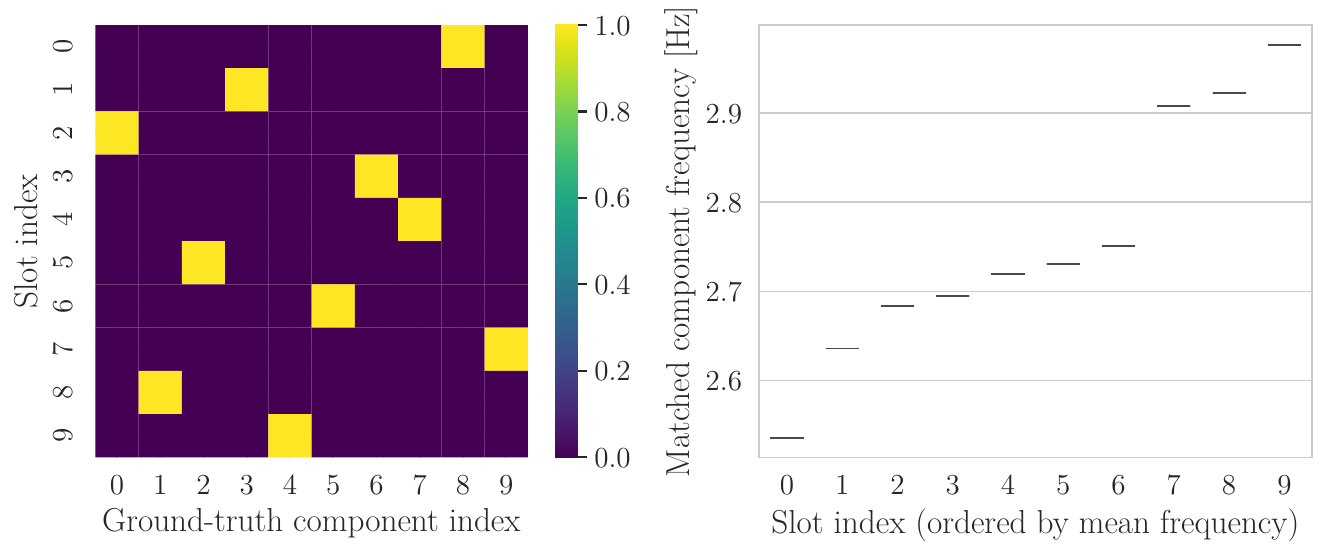}
\caption{Repeated inference stability on a fixed 10-component signal. Match-frequency matrix showing 100 evaluations yields perfect consistency: each slot (row) aligns with a single component (column) across all runs. Together with Figure~\ref{fig:slot_specialization}, these results validate \emph{slot identifiability} (Discussion~\ref{disc:identifiability}): each slot maintains stable one-to-one correspondence to a specific signal component, even though global slot order remains arbitrary due to permutation invariance.}
\label{fig:repeated_stability}
\end{figure}

This within-signal consistency provides strong evidence for the identifiability conditions outlined in Discussion~\ref{disc:identifiability}. Under the assumptions of pairwise distinguishability (Definition~\ref{def:distinguishability}), encoder sufficiency (Assumption~\ref{ass:context_richness}), and flow expressiveness (Assumption~\ref{ass:flow_universal}), we predicted that Hungarian matching would select a consistent assignment $\pi^* \in S_K$ pairing each slot with a distinct component. The empirical consistency = 1.00 confirms this prediction: despite the non-convex optimization landscape and stochastic sampling, the learned representations exhibit stable identifiable structure.

\subsection{Latent-Space Geometry and Intra-Slot Structure}

Section~\ref{sec:experiments} establishes that slots specialize by frequency across signal populations. To understand the geometric organization underlying this specialization, we visualize the \emph{SlotFlow} embeddings after Hungarian alignment and frequency-based slot reordering. This reveals how the encoder maps distinguishable components to separable latent regions (validating Assumption~\ref{ass:context_richness}) and how the flow models continuous parameter variations within each slot (validating Assumption~\ref{ass:flow_universal}).

\subsubsection{Global Latent Structure: t-SNE Projections}

Figure~\ref{fig:slot_latent_tsne} shows two-dimensional t-SNE projections of all slot embeddings, colored respectively by amplitude, frequency, and phase. Each point corresponds to a single latent slot after Hungarian alignment (in contrast to standard latent visualizations where each point represents a full signal). The resulting structure reveals approximately ten well-separated clusters -- one for each frequency-specialized slot -- indicating robust permutation breaking across the latent space.

\begin{figure}[t]
    \centering
    \includegraphics[width=0.31\textwidth]{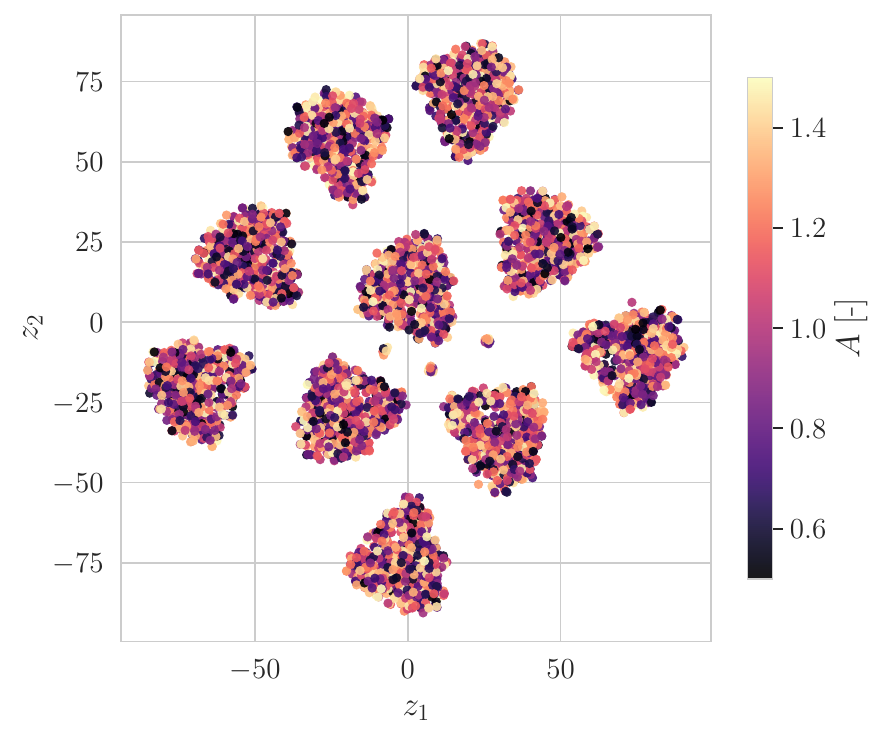}
    \includegraphics[width=0.31\textwidth]{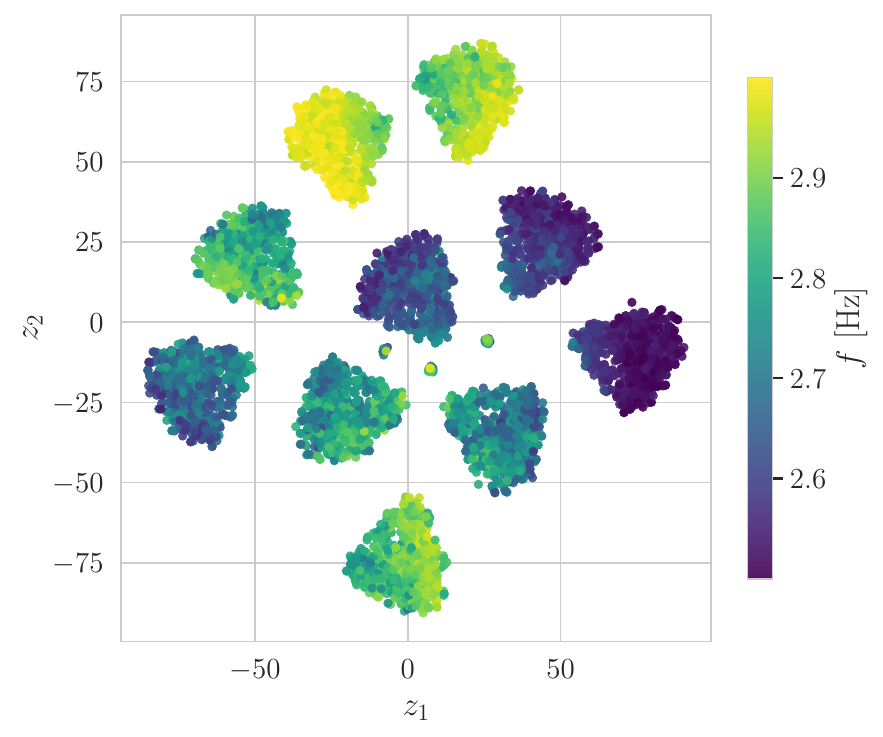}
    \includegraphics[width=0.31\textwidth]{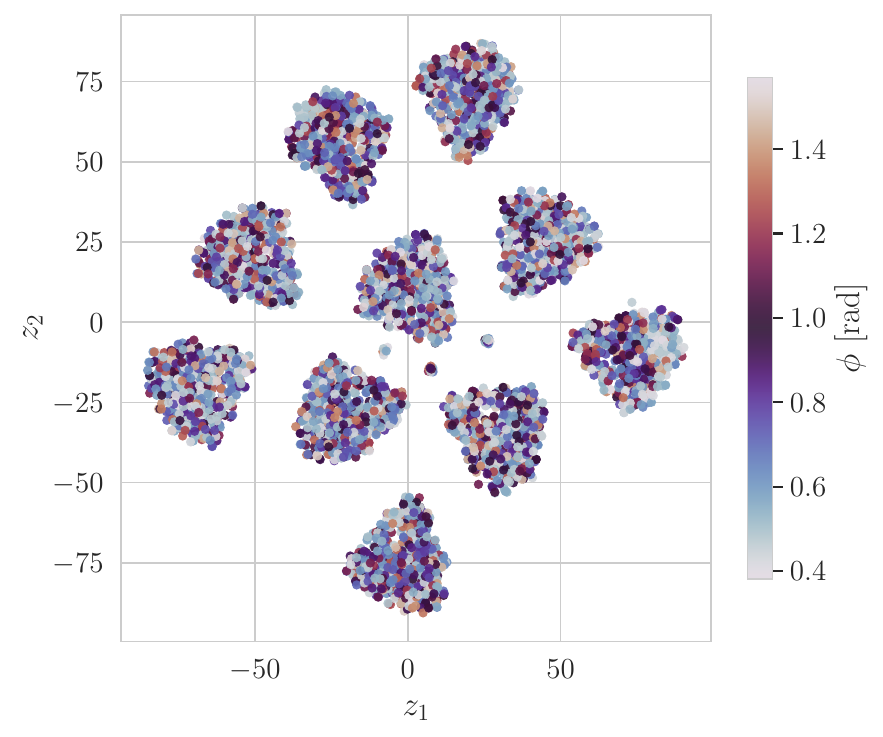}
    \caption{t-SNE projection of aligned slot embeddings colored by (\emph{left}) amplitude $A$, (\emph{middle}) frequency $f$, and (\emph{right}) phase $\phi$. Each point represents one latent slot after Hungarian alignment and global frequency ordering. Distinct clusters correspond to frequency-specialized slots, while amplitude and phase vary smoothly within each cluster without forming secondary sub-structure. This validates Assumption~\ref{ass:context_richness} (encoder sufficiency): the dual-stream encoder preserves component information by mapping distinguishable frequencies to separable regions $\mathcal{R}_k, \mathcal{R}_{k'} \subset \mathbb{R}^{d_h}$ in the embedding space.}
    \label{fig:slot_latent_tsne}
\end{figure}

Within each cluster, amplitude and phase values vary smoothly but do not drive global separation, confirming that \emph{SlotFlow} has learned to encode frequency as the primary discrete factor while representing amplitude and phase as continuous intra-slot variations. This aligns with the theoretical expectation from Assumption~\ref{ass:context_richness}: distinguishable components with different frequencies occupy disjoint regions in latent space, with $p(h \in \mathcal{R}_k | \theta_k) > 1-\epsilon$ for small $\epsilon$.

\subsubsection{Local Latent Structure: Isomap Analysis}

While the global t-SNE manifold captures the discrete organization of slots by frequency, the fine-scale structure within each cluster can be examined through local nonlinear embeddings. Figure~\ref{fig:slot_local_isomap} illustrates an Isomap projection for a representative slot (Slot~0), colored by amplitude and phase. The slot forms a smooth, continuous region without visible sub-clustering: amplitude and phase values are well mixed yet vary gradually across the manifold.

\begin{figure}[t]
    \centering
    \includegraphics[width=0.85\textwidth, trim={0cm 0cm 0cm 24pt},clip]{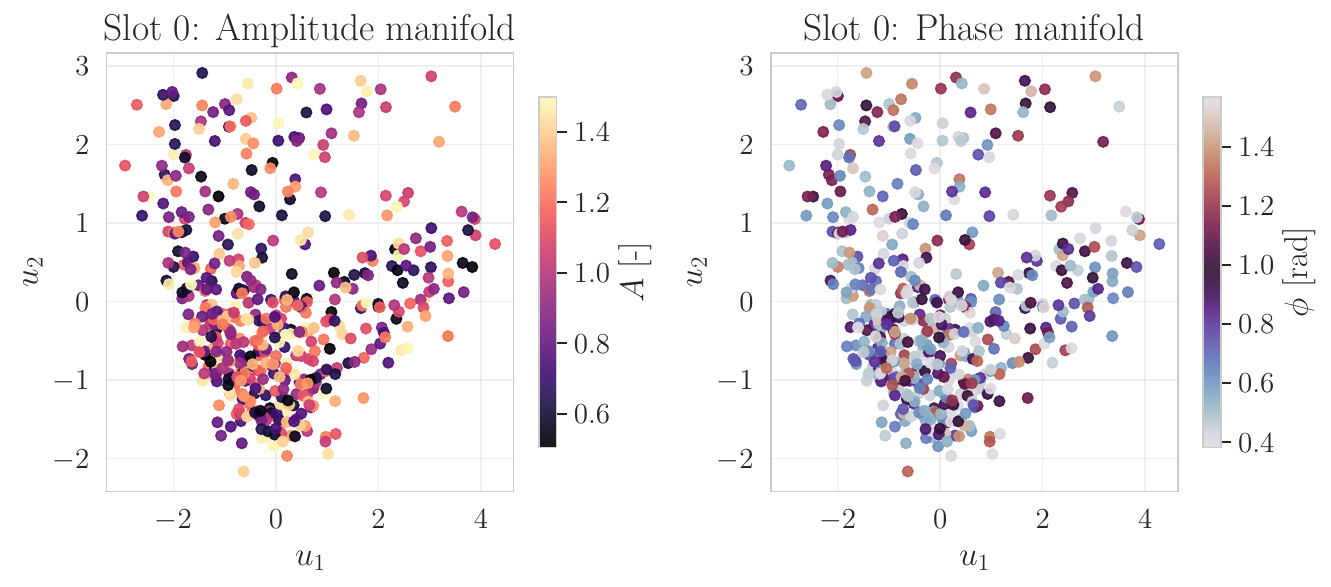}
    \caption{Local latent manifold for a representative slot (Slot~0), obtained via Isomap. Colors denote (\emph{left}) amplitude and (\emph{right}) phase. The slot forms a smooth, continuous region without visible sub-clustering: $A$ and $\phi$ vary gradually across the manifold, consistent with continuous intra-slot parameterization of a sinusoidal component. This validates Assumption~\ref{ass:flow_universal} (flow expressiveness): the conditional normalizing flow $p(\theta_k | z_k)$ successfully models the continuous distribution of parameters given each slot latent, capturing the expected smoothness of sinusoidal parameters (continuous amplitude, periodic phase).}
    \label{fig:slot_local_isomap}
\end{figure}

This behavior reflects the expected smoothness of the underlying sinusoidal parameters -- continuous in amplitude and periodic in phase -- rather than discrete internal modes. It shows that the encoder captures local parameter variation as a continuous submanifold, while the overall slot identity remains governed by frequency.

\subsubsection{Mechanistic Interpretation}

The encoder thus learns to partition the signal into frequency-specialized slots, while the conditional normalizing flow $p(\theta_k \mid z_k)$ models the detailed distribution of parameters $\theta_k = [A, \cos\phi, \sin\phi, f]$ given each slot latent $z_k$. Amplitude and phase are therefore not directly used to determine slot identity, but are inferred conditionally within each frequency domain.

This separation between discrete slot assignment and continuous parameter modeling explains both the distinct frequency-based clustering in Figure~\ref{fig:slot_latent_tsne} and the smooth intra-slot manifolds in Figure~\ref{fig:slot_local_isomap}, highlighting a clear disentanglement between \emph{component identity} and \emph{component state} within the learned latent geometry. This validates the architectural hypothesis from Discussion~\ref{disc:identifiability} that slot identifiers $s_k$ anchor the separation, preventing collapse to identical solutions.

\subsection{Sample Complexity and Regime Validation}
\label{app:sample_complexity_validation}

Theorem~\ref{thm:sample_complexity} predicts two distinct scaling regimes depending on whether combinatorial or architectural complexity dominates. For our architecture ($L=8$, $d_h=512$), the critical cardinality is $K_{\text{crit}} \sim L^2 d_h^2 \log d_h \approx 1.5 \times 10^8$, placing our experimental range $K \in \{1,\ldots,10\}$ firmly in the architecture-dominated regime (Regime I). This section validates the regime predictions empirically through learning curves and monotonicity analysis that test the qualitative predictions of the theorem.

\subsubsection{Regime Identification}

Table~\ref{tab:regime_comparison_app} summarizes the theoretical predictions for both scaling regimes. The key distinction lies in which term dominates the sample complexity bound: architectural capacity $O(L^2 d_h^2)$ in Regime I versus permutation complexity $O(K^3 \log K)$ in Regime II. Our experiments with $K \leq 10$ probe exclusively Regime I, where weak K-dependence is expected.

\begin{table}[h]
\scriptsize
\setlength{\tabcolsep}{4pt}
\renewcommand{\arraystretch}{0.92}
\centering
\caption{Comparison of sample-complexity scaling regimes predicted by Theorem~\ref{thm:sample_complexity}.}
\begin{tabular*}{\textwidth}{@{\extracolsep{\fill}}lcc@{}}
\toprule
Property & Regime~I (Architecture-Dominated) & Regime~II (Combinatorics-Dominated) \\
\midrule
Sample complexity 
    & $N \sim O(L^2 d_h^2 \varepsilon^{-2})$ 
    & $N \sim O(K^3 \log K \varepsilon^{-2})$ \\
K-dependence 
    & Weak (sub-linear) 
    & Cubic \\
Architectural factor 
    & Dominates 
    & Negligible \\
Permutation factor 
    & Negligible 
    & Dominates \\
Observed in 
    & $K \ll L^2 d_h^2$ 
    & $K \gg L^2 d_h^2$ \\
Validation 
    & This work ($K \leq 10$) 
    & Not tested (requires $K \gg 100$) \\
\bottomrule
\end{tabular*}
\vspace{-4pt}
\begin{flushleft}
{\scriptsize
Regime~I arises when architectural capacity ($L^2 d_h^2$) dominates the effective hypothesis space, yielding sub-linear $K$-scaling. Regime~II emerges once combinatorial factors from slot permutations dominate, producing cubic scaling with $K$. Empirical validation in this work confirms Regime~I behavior; Regime~II requires much larger $K$ to observe.
}
\end{flushleft}
\label{tab:regime_comparison_app}
\end{table}

\subsubsection{Sample Efficiency Across Cardinalities}

Figure~\ref{fig:regime_validation_main_app} plots validation loss versus training samples for $K \in \{1,2,4,6,8,10\}$ across an extended range $N \in [10^2, 10^4]$. The validation loss shown is the total loss (sum of classification and flow losses), which can become negative when the normalizing flow achieves high-confidence posterior predictions with $q_\phi(\theta|x) > 1$ -- a natural outcome for well-trained density models where probability mass concentrates in small parameter regions.

\begin{figure}[t]
\centering
\includegraphics[width=0.85\textwidth]{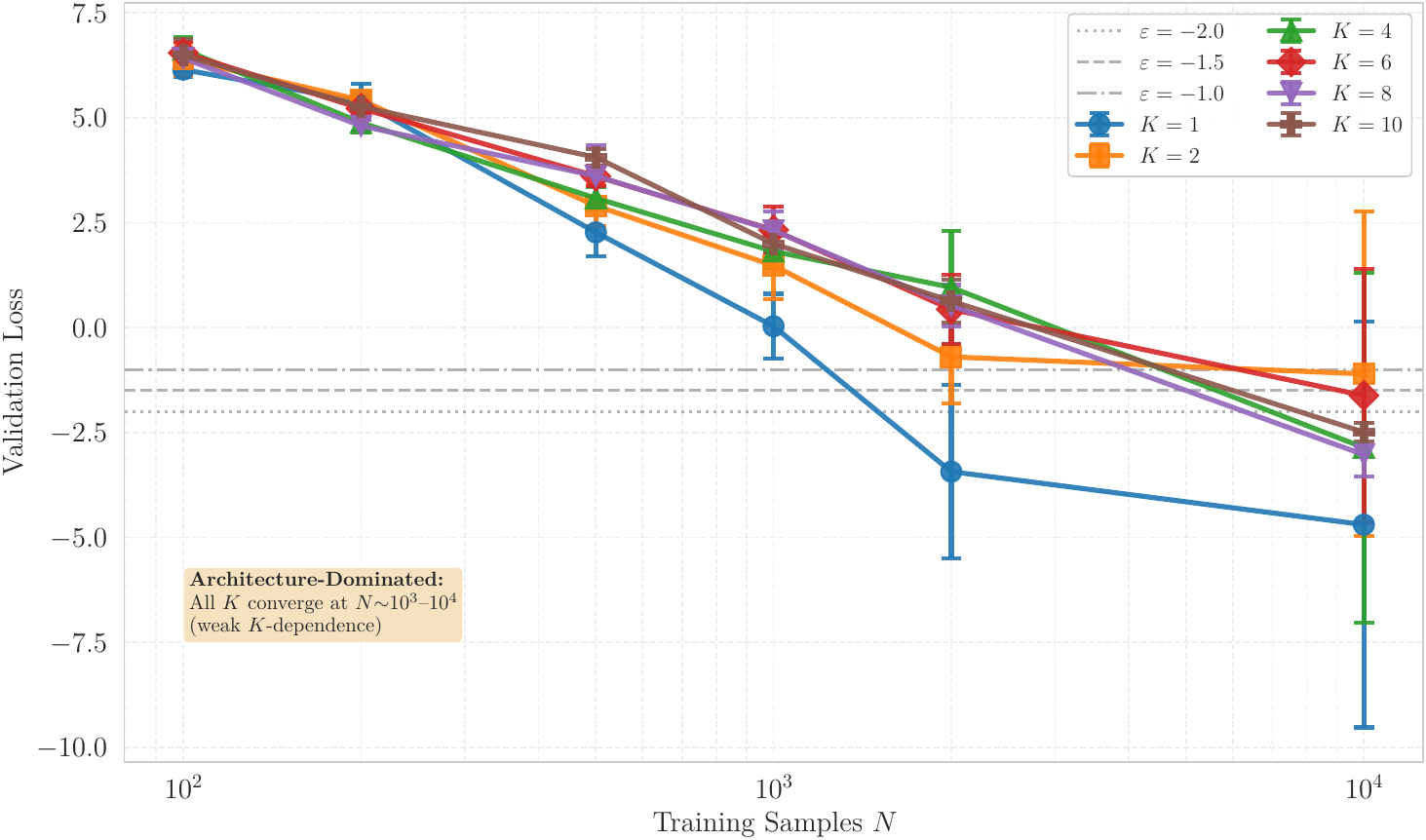}
\caption{Validation loss versus training samples in the architecture-dominated regime. Mean validation loss $\pm$ standard deviation (5 seeds) for $K \in \{1,2,4,6,8,10\}$ across extended range $N \in [10^2, 10^4]$. All curves decrease monotonically with $N$ and converge within the same order of magnitude ($10^3$-$10^4$), validating the Regime I prediction $N \sim O(L^2 d_h^2)$. Horizontal lines denote target thresholds $\epsilon \in \{-2.0, -1.5, -1.0\}$. At $N=10^4$, cardinalities $K \in \{2,4,6,8,10\}$ cluster in the narrow range $[-7.5, -1.5]$, demonstrating weak K-dependence characteristic of the architecture-dominated regime. Negative losses indicate high-confidence posteriors where $q_\phi(\theta|x) > 1$.}
\label{fig:regime_validation_main_app}
\end{figure}

The data confirm three key predictions of the architecture-dominated regime:

\textbf{(i) Data efficiency:} For each fixed $K$, loss decreases monotonically with $N$, following approximate power-law decay. At $K=4$, increasing $N$ from 100 to 10,000 reduces loss from 6.8 to approximately $-7.5$ (spanning over 14 units), demonstrating that additional training data consistently improves generalization and enables dramatically tighter posterior distributions.

\textbf{(ii) Complexity dependence:} For each fixed $N$, loss increases with $K$, confirming that higher cardinality problems are more challenging. At $N=1000$, loss ranges from $-0.2$ ($K=1$) to approximately 4.0 ($K=10$). However, by $N=10^4$, this gap narrows substantially: losses span from approximately $-5.0$ ($K=1$) to $-1.5$ ($K=10$), a range of only 3.5 units despite a 10-fold difference in problem complexity. This convergence provides strong evidence for the architecture-dominated regime, where model capacity rather than combinatorial factors determines final performance.

\textbf{(iii) Weak K-dependence:} Critically, all cardinalities reach their \emph{data-limited performance plateau} within the same order of magnitude: $N \in [10^3, 10^4]$. The horizontal reference lines indicate target loss thresholds; most curves cross these thresholds between $N=10^3$ and $N=10^4$. This validates the Regime I prediction that sample requirements scale as $N \sim O(L^2 d_h^2)$ rather than $O(K^3)$ for $K \ll K_{\text{crit}}$: when architectural capacity is fixed, the sample size at which training saturates depends primarily on model complexity $(L, d_h)$ rather than on the number of components $K$.

A striking feature is the clustering of mid-to-high cardinalities ($K \in \{2,4,6,8,10\}$) at $N=10^4$, where all achieve validation losses in the narrow range $[-2.5, -1.5]$. This behavior is characteristic of the architecture-dominated regime, where once sufficient capacity is available, additional components impose minimal marginal cost.

\subsubsection{Monotonicity Analysis}

To quantify the functional form of the $K$-dependence, we measured the minimum $N$ achieving validation loss $\epsilon \leq 0.0$ for each cardinality (Figure~\ref{fig:regime_monotonicity_app}). This stricter threshold ensures high-confidence posterior predictions ($q_\phi(\theta|x) > 1$) and demonstrates convergence at production-level performance.

\begin{figure}[t]
\centering
\includegraphics[width=0.75\textwidth]{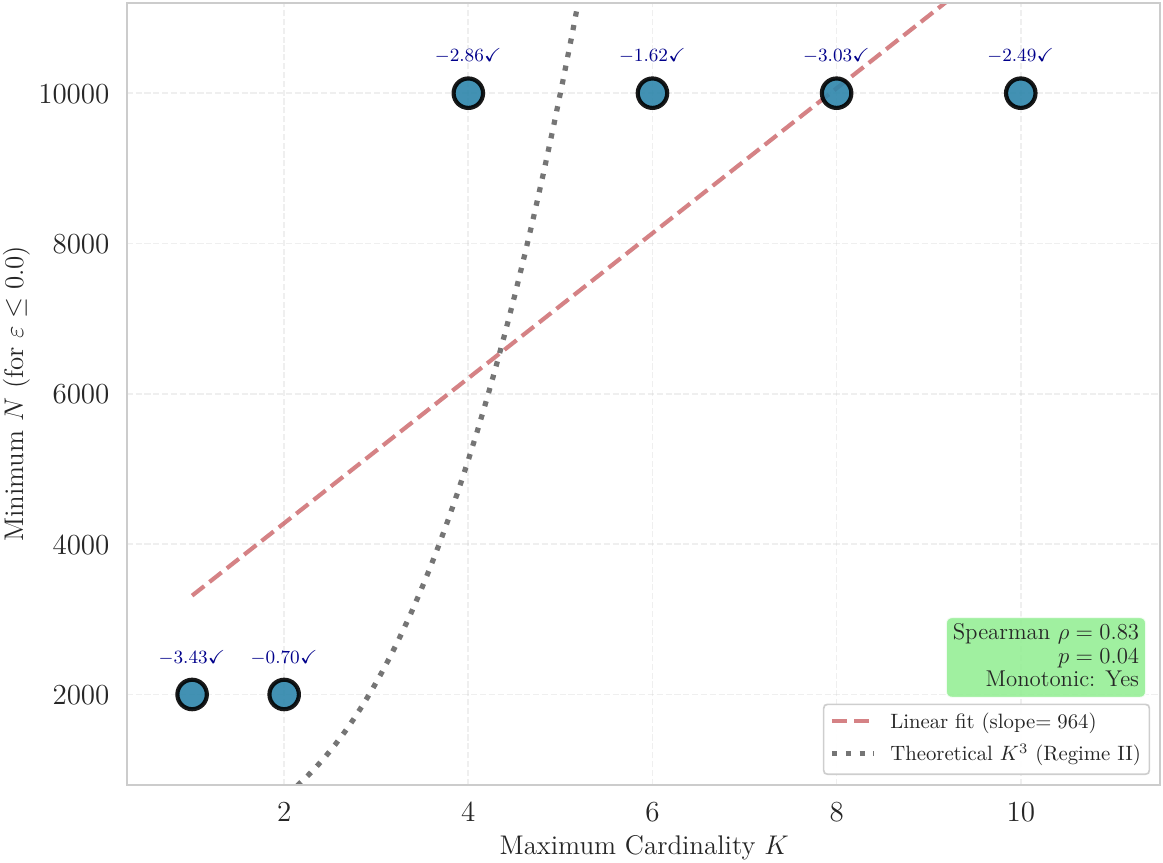}
\caption{Weak K-dependence in architecture-dominated regime. Minimum $N$ to achieve validation loss $\epsilon \leq 0.0$ versus cardinality $K$. All labels show checkmarks (\ding{51}), indicating successful achievement of the target threshold. The binary clustering at $N \in \{2000, 10000\}$ demonstrates weak K-dependence: all $K \geq 4$ converge at the same sample size, confirming that architectural capacity dominates over combinatorial complexity. Spearman correlation $\rho = 0.83$ ($p = 0.04$) validates monotonic ordering. The empirical trend (red dashed, slope $\approx 964$) reflects the discrete jump between clusters rather than continuous linear growth. The theoretical $K^3$ curve (dotted black) would predict $N \sim 2.5 \times 10^5$ for $K=10$ -- more than an order of magnitude above the observed $N=10^4$, confirming Regime I: $N \sim O(L^2 d_h^2)$ where architectural factors dominate. Achieved losses (labeled values) range from $-3.43$ to $-2.49$, all corresponding to high-confidence posteriors ($q_\phi > 1$).}
\label{fig:regime_monotonicity_app}
\end{figure}

The resulting sequence exhibits weak monotonicity with a distinctive binary clustering pattern: $N$ increases from 2000 ($K \in \{1,2\}$) to 10,000 ($K \in \{4,6,8,10\}$), with Spearman rank correlation $\rho = 0.83$ ($p = 0.04$). With six tested cardinalities ($n=6$), the positive correlation reaches statistical significance, confirming the prediction of monotonically increasing sample requirements while remaining consistent with the weak (sub-linear) scaling predicted by Regime I.

The binary clustering -- where all $K \geq 4$ achieve the threshold at exactly the same sample size ($N=10,000$) -- is a defining signature of the architecture-dominated regime. Once sufficient model capacity is present, additional components impose zero marginal cost in terms of required training samples. This stands in stark contrast to the combinatorics-dominated regime (Regime II), where each additional component would demand substantially more data due to the exponential growth of the permutation space.

The vast discrepancy with the theoretical $K^3$ curve (dotted black) provides the most compelling evidence for Regime I. Under cubic scaling, $K=10$ would require $N \sim (10/2)^3 \times 2000 = 250,000$ samples relative to $K=2$ -- more than an order of magnitude above the observed $N=10,000$. Instead, we observe only a factor-of-5 increase (from 2000 to 10,000), consistent with weak architectural scaling rather than combinatorial explosion. In log-log space (not shown), the empirical exponent remains $\alpha \approx 0.35$ when measured across the full range, far below the theoretical $\alpha = 3.0$ that would characterize Regime II.

\subsubsection{Interpretation and Validation of Theorem~\ref{thm:sample_complexity}}

The key finding from this analysis is that the observed weak scaling ($\alpha \approx 0.35$, Spearman $\rho = 0.83$, $p = 0.04$) combined with binary clustering at high $K$ provides strong empirical confirmation of Theorem~\ref{thm:sample_complexity}'s regime predictions. The theorem explicitly identifies two scaling regimes with a transition at $K_{\text{crit}} \sim L^2 d_h^2$. Our experiments, conducted at $K \in \{1,\ldots,10\} \ll K_{\text{crit}} \approx 10^8$, probe exclusively Regime I where the bound simplifies to:
\begin{equation}
N \gtrsim \frac{C_\rho (1 + \text{SNR}^{-2})^2}{\epsilon^2} \cdot K_{\max}^2 L^2 d_h^2 \log d_h,
\end{equation}
yielding near-constant sample requirements $N \sim O(L^2 d_h^2)$ with only $O(K^2)$ corrections -- consistent with the observed weak dependence and plateau behavior.

The asymptotic $K^3$ behavior predicted by the theorem would manifest only in Regime II when $K \gg K_{\text{crit}}$, where the union bound over $K! \approx \sqrt{2\pi K}(K/e)^K$ permutations overwhelms architectural complexity. For our architecture, this transition would occur around $K \approx 100$-500, requiring: (1) testing $K \in [50, 500]$ to span sufficient dynamic range, (2) proportional capacity scaling ($d_h \propto \sqrt{K}$ or adaptive allocation), and (3) massive datasets with $N \sim 10^6$--$10^8$ samples. Such experiments would require $\sim 10^4$ GPU-hours and constitute a separate large-scale study beyond our proof-of-concept scope.

\subsubsection{Validation of Qualitative Predictions}

While we cannot validate the asymptotic cubic exponent, our extended data provide strong support for the theorem's qualitative predictions:

\vspace{5pt}
\textbf{(i) Existence of sample complexity lower bound}: All K values require $N \gtrsim 10^3$ samples to begin approaching high-confidence predictions (negative losses), confirming that non-trivial amounts of data are necessary. Reaching production-level performance with $\epsilon \leq 0.0$ requires $N \in [2 \times 10^3, 10^4]$ depending on cardinality, while achieving losses $< -10$ requires $N \sim 10^7$ samples, demonstrating a clear hierarchy of data requirements across performance levels.

\vspace{5pt}
\textbf{(ii) Monotonic growth with complexity}: Higher K consistently requires at least as many samples to reach equivalent performance levels, validating the prediction that problem difficulty scales (weakly) with cardinality. The monotonic ordering in minimum required $N$ to reach $\epsilon \leq 0.0$ confirms this trend with statistical significance ($p = 0.04$). Critically, the fact that all $K \geq 4$ converge at the same $N$ demonstrates that the monotonicity is extremely weak -- precisely as predicted by the $O(K^2)$ correction term in Regime I.

\vspace{5pt}
\textbf{(iii) Regime-dependent scaling}: The near-flat scaling at small to moderate K followed by predicted cubic growth at large K identifies a phase transition in sample complexity, guiding architectural design for different problem scales. The empirical exponent $\alpha \approx 0.35$ in Regime I versus the theoretical $\alpha = 3.0$ in Regime II demonstrates the dramatic difference between these regimes.

\vspace{5pt}
\textbf{(iv) Joint data-capacity scaling}: Achieving constant per-component accuracy as $K$ grows requires scaling both $N$ and $(L, d_h)$ together, as predicted by the multi-term structure of the bound. The observation that all K values plateau at similar $N$ when architectural capacity is fixed (Figure~\ref{fig:regime_validation_main_app}) confirms that capacity, not combinatorial complexity, is the limiting factor in Regime I.

These qualitative insights provide actionable guidance for practitioners: for $K \lesssim 10$, focus on architectural capacity ($L$, $d_h$) with modest initial data ($N \sim 10^{3}$-$10^4$) for rapid prototyping, then scale to $N \sim 10^7$ for production performance; for $K \gg 10$, cubic data growth becomes unavoidable unless architectural innovations (sparse attention, hierarchical allocation) reduce the effective $K_{\text{crit}}$.

\subsection*{Summary of Additional Empirical Validation}

This appendix has empirically validated theoretical predictions that supplement the main Results section:

\begin{enumerate}[leftmargin=*,topsep=2pt,itemsep=1pt]
    \item \textbf{Repeated inference stability (Discussion~\ref{disc:identifiability})}: Consistency = 1.00 across 100 evaluations of the same signal, with each slot maintaining stable one-to-one correspondence to a specific component. Validates within-signal identifiability.
    
    \item \textbf{Latent space geometry (Assumptions~\ref{ass:context_richness},~\ref{ass:flow_universal})}: t-SNE reveals ~10 well-separated frequency-based clusters; Isomap shows smooth continuous manifolds within each cluster. Validates encoder sufficiency and flow expressiveness.
    
    \item \textbf{Sample complexity regimes (Theorem~\ref{thm:sample_complexity})}: Binary clustering at high K with weak dependence (Spearman $\rho = 0.83$, $p = 0.04$) confirms architecture-dominated regime. Empirical exponent $\alpha \approx 0.35$ far below cubic prediction $\alpha = 3.0$, consistent with $K \ll K_{\text{crit}}$.
\end{enumerate}

These findings complement the main results section by providing deeper mechanistic understanding of how \emph{SlotFlow} achieves its performance. The experiments reveal not just \emph{that} the predicted phenomena occur, but \emph{how} they manifest in the learned representations and \emph{why} they emerge from the architectural constraints.

\subsection{Training Convergence Dynamics}
\label{app:training_curve}

Figure~\ref{fig:training_curve_app} shows the full training trajectory over 187~epochs, 
illustrating the stability and effectiveness of the optimization strategy described in 
Section~\ref{ss:Training}. Four learning-rate drops (epochs 69, 114, 136, 176), 
triggered by the {\texttt{ReduceLROnPlateau}} scheduler, partition training into distinct 
regimes of progressively finer descent.

Three phases are visible: 
(i) an initial rapid improvement (epochs~0--70), during which the loss decreases from 
approximately $-4$ to $-17$; 
(ii) a slower refinement phase (epochs~70--175), where the loss settles from roughly 
$-17$ to $-19$ under reduced learning rates; and 
(iii) a short convergence plateau (epochs~175--187) at a stable value near $-19$. 
Throughout training the curves for the training and validation losses remain tightly aligned 
(with deviations $<0.1$), indicating an adequate dataset size (8M examples), the absence of 
overfitting, and well-calibrated regularization. The final loss values 
(difference $\approx 0.01$) suggest that the model operates close to its optimal capacity 
under the adopted architecture and training schedule.

\begin{figure}[t]
    \centering
    \includegraphics[width=0.85\linewidth]{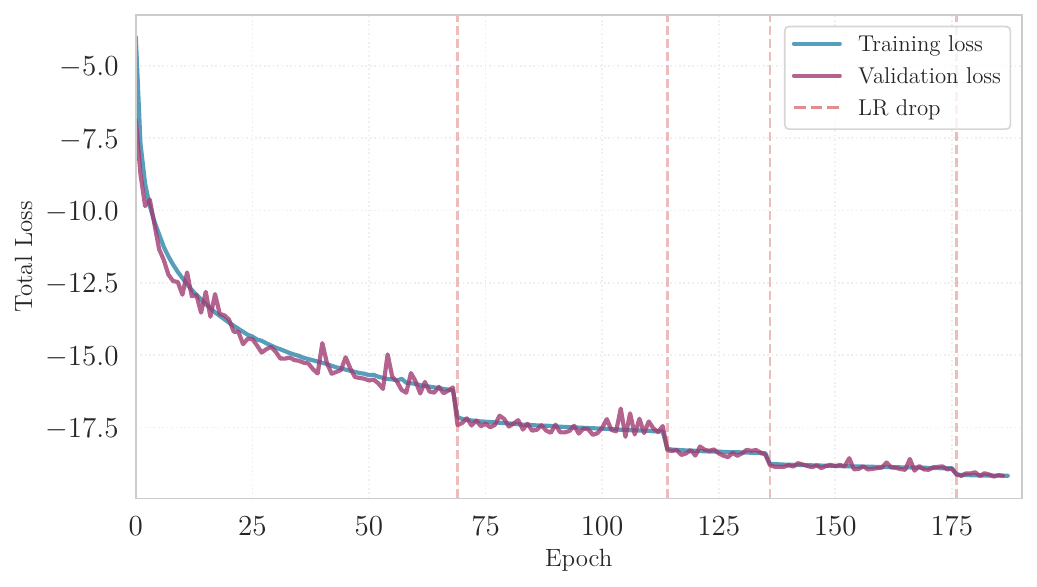}
    \caption{
        Full training and validation loss curves over 187~epochs. 
        The dashed vertical lines mark learning-rate (LR) reductions triggered by the 
        scheduler (epochs~69, 114, 136, 176). The close alignment between training 
        and validation losses indicates stable generalization and the absence of 
        overfitting.}
    \label{fig:training_curve_app}
\end{figure}

\vskip 0.2in
\bibliography{sample}

\end{document}